\newcommand{\poly}{\text{poly}}
\pgfplotsset{compat=1.18}
\newcommand{\sqlabel}[3]{%
  \hbox{\kern .5em
    \vbox{%
      \hbox{\scriptsize #1}%
      \vskip -0.25ex
      \hbox{$#2$}%
      \vskip -0.25ex
      \hbox{\scriptsize #3}%
    }%
  \kern .5em}%
}
\newcommand{\SQInOut}{\sqlabel{IN}{Sq}{OUT}}
\newcommand{\SQOutIn}{\sqlabel{OUT}{Sq}{IN}}
\newcommand{\SQInOutInv}{\sqlabel{IN}{Sq^{-1}}{OUT}}
\newcommand{\SQOutInInv}{\sqlabel{OUT}{Sq^{-1}}{IN}}
\newtheoremstyle{bracket}
  {3pt}   %
  {3pt}   %
  {\itshape}  %
  {}      %
  {\bfseries} %
  {.}     
  {0.5em} 
  {\thmname{#1}\ \thmnumber{#2}\ \thmnote{ [#3]}}
\theoremstyle{bracket}
\newtheorem{theorem}{Theorem}[section]
\newtheorem{lemma}[theorem]{Lemma}
\newtheorem{corollary}{Corollary}[section]
\newtheoremstyle{bracketdef}
  {3pt}{3pt}
  {}        %
  {}        
  {\bfseries}
  {.}
  {0.5em}
  {\thmname{#1}\ \thmnumber{#2}\ \thmnote{ [#3]}}
\theoremstyle{bracketdef}
\newtheorem{definition}{Definition}[section]
\begin{document}

\title{Space-Optimized and Experimental Implementations of Regev's Quantum Factoring Algorithm}

\author[1]{Wentao Yang}

\author[2]{Bao Yan}

\author[1,3]{Muxi Zheng}

\author[1]{Quanfeng Lu}

\author[4]
{Shijie Wei\thanks{\href{mailto:weisj@baqis.ac.cn}{weisj@baqis.ac.cn}}}

\author[4,1,5,6]
{Gui-Lu Long\thanks{\href{mailto:gllong@tsinghua.edu.cn}{gllong@tsinghua.edu.cn}}}

\affil[1]{State Key Laboratory of Low-Dimensional Quantum Physics and Department of Physics, Tsinghua University, Beijing 100084, China}

\affil[2]{State Key Laboratory of Mathematical Engineering and Advanced Computing, Zhengzhou 450001, China}

\affil[3]{Centre for Quantum Technologies, National University of Singapore 117543, Singapore}

\affil[4]{Beijing Academy of Quantum Information Sciences, Beijing 100193, China}

\affil[5]{Frontier Science Center for Quantum Information, Beijing 100084, China}
\affil[6]{Beijing National Research Center for Information Science and Technology, Beijing 100084, China}

\date{} %

\maketitle%

\begin{abstract}

The integer factorization problem (IFP) underpins the security of RSA, yet becomes efficiently solvable on a quantum computer through Shor's algorithm. Regev's recent high-dimensional variant reduces the circuit size through lattice-based post-processing, but introduces substantial space overhead and lacks practical implementations. Here, we propose a qubit reuse method by intermediate-uncomputation that significantly reduces the space complexity of Regev's algorithm, inspired by reversible computing. Our basic strategy lowers the cost from \( O(n^{3/2}) \) to \( O(n^{5/4}) \), and refined strategies achieve \( O(n \log n) \) which is  a  space lower bound within this model.
Simulations demonstrate the resulting time-space trade-offs and resource scaling. Moreover, we construct and compile quantum circuits that factor \( N = 35 \), verifying the effectiveness of our method through noisy simulations. A more simplified experimental circuit for Regev's algorithm is executed on a superconducting quantum computer, with lattice-based post-processing successfully retrieving the factors. These results advance the practical feasibility of Regev-style quantum factoring and provide guidance for future theoretical and experimental developments.

\end{abstract}

\section{Introduction}


The integer factorization problem (IFP) is a fundamental challenge in number theory and forms the basis of modern asymmetric cryptography, most notably RSA, whose security relies on the presumed classical intractability of factoring large integers. This hardness stems from the absence of any known classical polynomial-time algorithm. In 1994, Shor introduced a quantum algorithm that solves the IFP in polynomial time~\cite{Shor94}, providing an exponential speedup over classical approaches and establishing one of the most prominent examples of quantum advantage.

Subsequent research has focused on optimizing Shor’s algorithm to reduce the required number of qubits and circuit size. These theoretical improvements have primarily targeted either space requirements or constant-factor reductions in time~\cite{Beckman96,VBE96,Bea03,Haner17,Gidney17,Gid21}, while the asymptotic time complexity remains \( O(n^2 \log n) \) with fast integer multiplication or \( O(n^3) \) with standard multiplication. Small-scale experimental demonstrations, such as \( 15 \) and \( 21 \) on various quantum hardware platforms~\cite{Vandersypen01,Lucero2012,Martin-Lopez2012,Monz16,Amico2019}, have also verified the correctness of the algorithm and shown its feasibility in constrained quantum settings, paving the way for larger-scale implementations. Several other heuristic quantum factoring approaches have been experimentally used to factor comparatively large integers~\cite{Peng2008Adiabatic,Xu2012Factorization,yan2022factoring,Anschuetz2019VQF,Karamlou2021VQF}, yet Shor’s algorithm remains the main focus because of its clear complexity guarantees and provable exponential speedup.

In 2023, Regev proposed a high-dimensional generalization of Shor’s algorithm~\cite{Regev23} that replaces one-dimensional order finding with a higher-dimensional structure and employs lattice-based post-processing. This modification reduces the circuit size by a factor of \(O(n^{1/2})\), leveraging the efficiency of multiplying many small integers rather than performing large modular multiplications. However, the original construction introduces a large space overhead and reduces its practicality.\footnote{Another issue comes from the heuristic assumption, the validity of which was partially addressed in subsequent work~\cite{Pilatte24}. While Ekera \emph{et al.}~\cite{Ekera23} extended the overall framework to the discrete logarithm problem, thereby increasing its theoretical robustness and applicability.}Ragavan \emph{et al.} later reduced the space complexity via a Fibonacci-based modular-exponentiation scheme~\cite{Ragavan23} and then extended their structure to optimize the constant factors~\cite{Ragavan24}. Beyond this specific optimization, it remains important to explore more direct and general space-reduction strategies. Furthermore, to date, no complete gate-level circuit for Regev’s algorithm has been constructed, and no implementation on real quantum hardware has demonstrated even for the smallest nontrivial instance.

To address the above open question, we propose a space-optimized version of Regev’s algorithm~(SORA) for integer factorization. We first revisit Regev’s construction and explicitly identify repeated squaring in the modular exponentiation as the main source of its increased space complexity. We then introduce a direct space-reduction method that reuses ancilla qubits via intermediate uncomputation, rather than the Fibonacci-based strategy of Ragavan \emph{et al.}~\cite{Ragavan23}. A simple version of our approach reduces the space complexity from \(O(n^{3/2})\) to \(O(n^{5/4})\), and more refined strategies—drawing inspiration from the reversible pebble game in classical reversible computing~\cite{Ben89}—achieve a further reduction to \(O(n \log n)\). We prove that \(O(n \log n)\) is a space lower bound within this framework and show that it is attainable, thereby establishing space optimality for our approach. We also present numerical results for different strategies under our method, illustrating the trade-off between time and space resources.

Next, we design and compile proof-of-concept circuits for factoring the smallest nontrivial instance \( N = 35 \) using both Regev's approach and our method. We first evaluate the performance of our optimized algorithm through noisy simulations. The results indicate that our method reduces qubit consumption and improves the performance of the circuit in the presence of noise. Finally, we construct a further simplified circuit suitable for current quantum hardware to demonstrate Regev's approach to factorization on a real quantum device. Applying the lattice-based post-processing then yields the correct factorization.

In the process of constructing these circuits, we adopt several compilation techniques that were used in experimental implementations of Shor's algorithm~\cite{Monz16,Lanyon07} and proved particularly effective for proof-of-principle demonstrations. These techniques yield varying degrees of simplification to our circuits, significantly reducing both the number of qubits and gates, thereby enabling efficient numerical simulation as well as execution on real quantum devices. 


Overall, this work provides an alternative, uncomputation-based space-reduction method for Regev’s algorithm and gives a space-optimal strategy within this framework, while also offering resource-scaling data that can guide studies on larger instances. Furthermore, our proof-of-concept implementation for factoring \( N = 35 \) demonstrates the effectiveness of the proposed space-optimization method. We also show that, when combined with careful compilation techniques, Regev’s algorithm can be executed on currently available quantum hardware. These results help narrow the gap between Regev’s theoretical proposal and practical quantum implementations and indicate a feasible direction for subsequent theoretical and experimental studies. Moreover, our uncomputation-based space-reduction method is broadly applicable and can be extended to other quantum algorithms, providing a new pathway toward practical quantum computation.

This paper is organized as follows. Section~\ref{Framework} first reviews Regev’s algorithm and then introduces our optimized quantum algorithm  by space-reduction method. Section~\ref{Performance}  describes our proof-of-concept implementation of the \(N = 35\) instance and reports numerical simulation and experimental results.  Section~\ref{Disc} concludes the paper. Additional technical details appear in the Appendix.

\section{Framework of a space-optimized quantum factoring algorithm}\label{Framework}

In this section, we first briefly review Regev's algorithm. We then propose a method that reduces its space complexity, yielding a space-optimized quantum factoring algorithm. We compare this algorithm with the original Regev's algorithm to analyze the resulting time-space trade-off, and we also compare it with Shor's algorithm. Finally, we illustrate our approach with several concrete instances.

\subsection{Review of Regev's algorithm}

Regev's algorithm can efficiently solve the following integer factoring problem: Given an \( n \)-bit odd composite integer \( N = pq \), determine its nontrivial factors \( p \) and \( q \). It can be thought of as a multidimensional analogue of Shor’s algorithm, which shares some similarities with it. 
Regev's algorithm is executed by reducing the integer factoring problem to high-dimensional order-finding, a task that can be efficiently performed by a quantum subroutine with high probability. The subroutine can also be viewed as a lattice problem, which is described below.

Let \( b_1, b_2, \dots, b_d \) be small integers coprime to \( N \) (e.g., \( b_i \) is the \( i \)-th prime number), and define \( a_i = b_i^2 \). If we can find a solution \( (z_1, z_2, \dots, z_d) \) that satisfies
\begin{align}
\label{eq:solution}
    \prod_{i=1}^d a_i^{z_i} \equiv \prod_{i=1}^d (b_i^{z_i})^2 \equiv 1 \pmod{N},
\end{align}
it is expected \footnote{One might intuitively assume the probability to be constant, but this is difficult to prove. The requirement can be relaxed such that there exists at least one vector \( (z_1, \dots, z_d) \) satisfying \( b \not\equiv \pm 1 \pmod{N} \) with bounded norm at most \( T = \exp(O(n/d)) \). This heuristic assumption enables Regev's algorithm to achieve factorization with constant probability~\cite{Regev23}, and a weaker version of this condition has been proved in~\cite{Pilatte24}.} that \( b = \prod_{i=1}^d b_i^{z_i} \not\equiv \pm 1 \pmod{N} \). This facilitates factoring \( N \), as \( (b+1)(b-1) \equiv 0 \pmod{N} \), and computing \( \gcd(b \pm 1, N) \) yields a nontrivial factor of \( N \).  

The solution \( (z_1, z_2, \dots, z_d) \) can be viewed as a high-dimensional order\footnote{Unlike Shor’s algorithm, computing the order of the square number \( a_i \) here avoids the difficulties in factoring when the order of \( b_i \) is odd.}, but it is not unique. Specifically, if both \( \mathbf{x}_1 \) and \( \mathbf{x}_2 \) are solutions satisfying the condition~\eqref{eq:solution}, then any integer linear combination \( k_1 \mathbf{x}_1 + k_2 \mathbf{x}_2 \) (where \( k_1, k_2 \in \mathbb{Z} \)) also constitutes a solution. This reveals a lattice structure of the solution set, which can be formally defined as:
\begin{align}
\label{eq:lattice_def}
\mathcal{L} &=
\left\{ (z_1, \dots, z_d) \in \mathbb{Z}^d \;\middle|\; \left( \prod_{i=1}^d b_i^{z_i} \right)^2 \equiv 1 \pmod{N} \right\}
\subset \mathbb{Z}^d.
\end{align}
To obtain a nontrivial solution, we define the sublattice \( \mathcal{L}_0 \) as:
\begin{align*}
\mathcal{L}_0 =
\left\{ (z_1, \dots, z_d) \in \mathbb{Z}^d \;\middle|\; \prod_{i=1}^d b_i^{z_i} \equiv \pm 1 \pmod{N} \right\} \subseteq \mathcal{L}.
\end{align*}
Our goal is to find a vector \( \mathbf{z} \in \mathcal{L} \setminus \mathcal{L}_0 \). An illustration of the lattice \( \mathcal{L} \) and its nontrivial elements is shown in Figure~\ref{fig:1}. Fortunately, Regev's algorithm guarantees success with probability at least \( 1/4 \) for finding such an element.

The core quantum subroutine in Regev's algorithm is high-dimensional order finding within the lattice framework, which is formally described below.

\subsubsection*{High-dimensional order-finding subroutine}

\begin{enumerate}

\item \textbf{Quantum state initialization}:

Similar to Shor's algorithm, we prepare two quantum registers to initialize the quantum state, setting \( d \) to approximately \(\sqrt{n}\) and \( D \), a power of 2, to order \(O(n^{1/4}\exp(C\sqrt{n})) \).\footnote{A detailed discussion of parameter choices is provided at Appendix~\ref{D}.} The initial state is prepared as:
\begin{equation}
\label{eq:initialstate}
    \sum_{z \in \{-D/2, \dots, D/2-1\}^d} \rho_R(z) \ket{z} \ket{0},
\end{equation}
where the first register \(\ket{z}\) is divided into \( d \) blocks, each containing \(\log D\) qubits, to encode the binary representation of \( z = (z_1, \dots, z_d) \), where each component \( z_i \in \{-D/2, \dots, D/2-1\} \). The second register is initialized to \(\ket{0}\) and will hold the result of modular exponentiation in the next step.

The discrete Gaussian distribution is commonly used in classical cryptography, defined as:

\begin{equation}
    \label{eq:gaussian}
    \rho_R(z) = \exp\left(-\pi \|z\|^2 / R^2\right), \quad \text{for } z \in \mathbb{Z}^d.
\end{equation}

Regev proposed that precise state preparation is not required. It suffices to initialize the \( O(\log d) \) most significant qubits of each block to approximate the Gaussian distribution, while the remaining qubits can be set to a uniform superposition using Hadamard gates.

\item \textbf{\(d\)-dimensional modular exponentiation}: 

As the next step, we compute the value \( \prod_{i} a_i^{z_i + D/2} \pmod{N} \) in the second register (we add \( D/2 \) for convenience, ensuring \( z_i + D/2 \in \{0, \ldots, D-1\} \)):
\begin{equation}
    \label{eq:modexp}
    \ket{z} \ket{0} \to \ket{z} \ket{\prod_{i} a_i^{z_i + D/2} \pmod{N}}.
\end{equation}

This can be achieved by applying a series of controlled-\(U\) operators \( U_{a_i}, U_{a_i}^2, U_{a_i}^4, \dots \) to compute \( a_i^{z_i + D/2} \pmod{N} \) for each \( a_i \) in the second register, as done in Shor's algorithm, referred to as the precomputation method. However, this approach needs precomputing \( a_i^{2^j} \) and multiplying it into the second register, requiring a total of \( d \log D = O(n) \) \(n\)-bit multiplication operations, similar to Shor's algorithm. To reduce computational complexity, it is more efficient to first compute the product of small integers and apply the square-and-multiply algorithm:

First, let \( z_{ij} \) denote the \( j \)-th bit of \( z_i + D/2 \), with \( j = 0 \) being the most significant bit. Initializing the second register to \( \ket{1} \), we perform the following for \( j = 0, \dots, \lfloor \log_2 (D-1) \rfloor \): square the value in the second register, compute the product of small integers \( \prod_{i} a_i^{z_{ij}} \pmod{N} \), and multiply the second register by the result.

To compute \( \prod_{i} a_i^{z_{ij}} \pmod{N} \), Regev proposed a binary tree-based multiplication method, which recursively performs multiplication starting from pairs of small primes. This leverages the fact that the \( a_i \) are all small \( O(\log d) \)-bit integers, thereby reducing time complexity.

\item \textbf{\(d\)-dimensional Fourier transform}: 

We apply the \(d\)-dimensional quantum Fourier transform over \( \mathbb{Z}_D^d \) to the \( \ket{\mathbf{z}} \) register, which is equivalent to applying a one-dimensional Fourier transform over \( \mathbb{Z}_D \) to each dimension independently.

\item \textbf{Measurement}: 

We measure the first register and divide the result by \( D \), obtaining an output vector in \( \{0, 1/D, \dots, (D-1)/D\}^d \), which provides a value for further processing. The overall quantum circuit for Regev's algorithm can be represented in figure~\ref{fig:algorithm}

\end{enumerate}

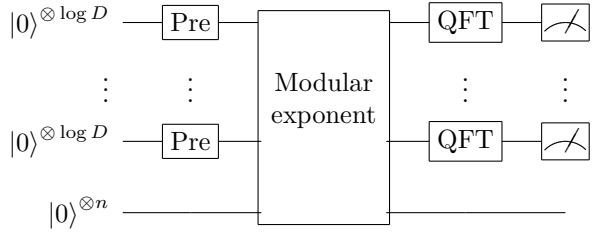
\begin{figure}[htbp]
  \centering
  \[
  \Qcircuit @C=1.5em @R=1.5em {
      \lstick{\ket{0}^{\otimes \log D}} & \gate{\text{Pre}} & \multigate{3}{\text{exponent}} & \gate{\text{QFT}} & \meter{} \\
      \lstick{\vdots}                   & \vdots            & \text{Modular}                 & \vdots            & \vdots \\
      \lstick{\ket{0}^{\otimes \log D}} & \gate{\text{Pre}} & \ghost{\text{Modular}}         & \gate{\text{QFT}} & \meter{} \\
      \lstick{\ket{0}^{\otimes n}}      & \qw               & \ghost{\text{Modular}}         & \qw               & \qw
  }
  \]
  \caption{Quantum circuit for the Regev's algorithm.}
  \label{fig:algorithm}
\end{figure}

\subsubsection*{Post-processing by lattice basis reduction}
After applying the \(d\)-dimensional quantum Fourier transform to the first register, we obtain a state that is approximately a superposition of vectors close to elements of the dual lattice \( \mathcal{L}^*/\mathbb{Z}^d \), where the dual lattice is defined as:
\begin{equation}
\label{eq:dual_lattice}
\mathcal{L}^\ast = \left\{ \mathbf{x} \in \mathbb{R}^d : \langle \mathbf{x}, \mathbf{z} \rangle \in \mathbb{Z}, \ \forall \mathbf{z} \in \mathcal{L} \right\} \supseteq \mathbb{Z}^d.
\end{equation}

Therefore by repeating the quantum procedure \( d+4 \) times, we obtain \( d+4 \) samples of \( \mathcal{L}^*/\mathbb{Z}^d \) with noise, as the measurement results are restricted to the grid \( \{0, 1/D, \dots, (D-1)/D\}^d \). Our goal is to recover elements of the lattice \( \mathcal{L} \) from these noisy samples of the dual lattice \( \mathcal{L}^*/\mathbb{Z}^d \). Regev proposes a lattice-based reconstruction method by generating the following lattice:

\NiceMatrixOptions{cell-space-limits = 5pt}
\[
B = 
\begin{pNiceArray}{c|c}[margin]
I_{d \times d} & 0 \\
\hline
S w_1 & \\
\Vdots & S I_{(d+4) \times (d+4)} \\
S w_{d+4} & 
\end{pNiceArray} \; .
\]

where \( I_{d \times d} \) is the \( d \times d \) identity matrix, \( S \) is a scaling factor, and \( w_i \) are vectors derived from the noisy samples. Under appropriate parameter selection, applying the LLL algorithm to find relatively short lattice vectors ensures that the first \( d \) components are the desired elements of \( \mathcal{L} \) with probability at least \( 1/4 \).
This success probability is ensured by appropriate parameter selection. We list the parameter selection ranges in Appendix~\ref{D}, which are applicable to both the mathematical proof and asymptotic analysis.

\subsubsection*{An overview of Regev's computational complexity}

The time complexity of Regev's algorithm is dominated by approximately \( \log D \) modular multiplication and squaring operations on \( n \)-bit numbers, each requiring a circuit of size \( O(n \log n) \).\footnote{For convenience, we adopt Fast Multiplication algorithm with circuit size \( O(n \log n) \) for multiplying \(n\)-bit numbers and performing squarings.} Following Regev's parameter choice, taking \( \log D = O(\sqrt{n}) \) and \( d \approx \sqrt{n} \) yields an overall time complexity of \( O(n^{3/2} \log n) \). A detailed analysis of the complexity of Regev's algorithm is presented in Appendix~\ref{B}.

A significant challenge of Regev's algorithm arises from its space complexity. The modular exponentiation step uses \( O(n^{3/2}) \) qubits, which dominates the overall space complexity. This is significantly larger than that of Shor's algorithm, which requires \( O(n \log n) \) qubits.

The increase in space complexity is due to the irreversibility of the squaring operations. In the square-and-multiply procedure, we cannot implement an in-place squaring operation \( \ket{x} \xrightarrow{} \ket{x^2 \pmod{N}} \); instead, we must use out-of-place operations \( \ket{x}\ket{0} \xrightarrow{} \ket{x}\ket{x^2 \pmod{N}} \) and store all intermediate results \( \ket{x} \) after the squaring operations. These intermediate states are subsequently removed through uncomputation. Since \( \log D \) intermediate states are stored during the process, each requiring \( n \) qubits, this results in a total of \( O(n^{3/2}) \) ancilla qubits.

This space complexity significantly limits the practical potential of Regev's algorithm, as the number of qubits is a limiting factor in current quantum computing.\footnote{It should also be noted that, although the squaring operations can be replaced by classical precomputation, this removes the asymptotic time advantage of Regev's algorithm and substantially reduces its significance.} We therefore introduce an optimized quantum factoring algorithm to reduce the space complexity.


\subsection{Space-optimized Regev's algorithm}

In this section, we present a space-optimized quantum factoring algorithm that reduces the space complexity of Regev's algorithm by employing a set of qubit-reuse techniques.

\subsubsection{A method for qubit reuse: intermediate uncomputation}

An intuitive idea is to reuse the computational registers storing \( \ket{x} \) in order to reduce the number of ancilla qubits. However, directly erasing the intermediate states is equivalent to measuring them, which would destroy the coherent information that is essential for the algorithm. The idea of uncomputing intermediate states \( \ket{x^2 \pmod{N}} \) before completing the full computation to reuse qubits is feasible. However, this requires retaining the previous ancilla qubits \( \ket{x} \), which imposes certain restrictions.


Under these constraints, we introduce a method based on a technique that we term Intermediate Uncomputation. A related model, known as the reversible pebble game~\cite{Ben89}, has been studied in the field of classical reversible computing, with slight differences from our method. We show that, by using our method, the space complexity of Regev's algorithm can be reduced to \( O(n \log n) \). In the subsequent sections and appendices, we provide exact space and time costs, thereby illustrating the trade-off between time and space.



To illustrate the basic idea, suppose a quantum algorithm requires performing \( m \) out-of-place squaring operations, necessitating \( m \) additional registers.(For Regev's algorithm, we have \( m = \log D - 1 \)) The intermediate states are eventually uncomputed at the end. During the procedure, we can perform the out-of-place inverse squaring operations \( \ket{x}\ket{x^2 \pmod{N}} \xrightarrow{} \ket{x}\ket{0} \) to uncompute intermediate states, thereby freeing some registers. 

Since uncomputing via inverse squaring requires retaining the state of the previous register, we propose a simple and easy-to-implement strategy: after using \( k \) computational registers (in which some squaring operations are performed along with reversible in-place operations, such as modular multiplication in our case), we immediately reverse the circuit on the previous \( k-1 \) registers. This will uncompute the results of the squaring and multiplication operations stored in these \( k-1 \) registers, in reverse order. 

Thus, upon obtaining the final result, only \( \lceil (m+1)/k \rceil - 1\) intermediate states need to be stored. These intermediate states still require sequential uncomputation, which involves preserving the final result while running all the above circuits in reverse. In this process, at most \( \lceil (m+1)/k \rceil + k - 1 \) registers are used to store intermediate states, where \( \lceil x \rceil \) denotes the smallest integer not less than \( x \).

Clearly, when \( k \approx \sqrt{m} \), the space (required registers) is minimized at approximately \( 2\sqrt{m} \), while the time (counted by the number of squaring operations) increases by at most a constant factor not exceeding 2 (since the operations on each register are not scaled by more than a factor of 2). Under this strategy, the space complexity of Regev’s algorithm can be immediately reduced to \( O(n^{5/4}) \), while the asymptotic time complexity remains unchanged.

We term the above method the ``intermediate uncomputation'' algorithm (also referred to as ``partial uncomputation'') and refer to the specific strategy described above as the ``simple strategy''. 
In fact, this strategy is simple but not asymptotically optimal. To further reduce the space complexity, we introduce the binary recursive strategy. Its idea stems from the divide-and-conquer paradigm: if we can perform \( m \) squarings together with all required uncomputations using \( r \) registers, then we can perform \( 2m+1 \) squarings and all uncomputations at the cost of \(r + 1\) registers. In practice, this is achieved by first completing the first stage of \( m \) squarings together with their uncomputation, retaining the register that holds the final result, and then proceeding with the second stage. This yields approximately \( O(\log m )\) register consumption, at the expense of an increased time complexity of \( O(m^{\log_2 3}) \). It is proven that this strategy completes the computation using the minimal number of registers, achieving the lower bound on space complexity. By extending to arbitrary \( k \)-ary recursion, we can achieve a space complexity of \( O(\log m) \) and a time complexity of \( O(m^{1 + \epsilon}) \), where \( \epsilon \) is an arbitrarily small positive number. In the following sections, we will introduce different strategies and the complexity conclusions of different strategies and provide their specific time and space costs.

\subsubsection{Computational complexity} 

We summarize the key results below, with detailed derivations provided in Appendix~\ref{C}. Here, \( m \) denotes the number of squarings in the underlying mathematical task (for Regev's algorithm, \( m = \log D - 1 \)). The time complexity is measured in squaring operations of the quantum circuit (including the inverse of squaring and those arising from uncomputation), the space complexity is measured in the number of \( n \)-bit registers.


\begin{definition}[Direct Computation]
Without any optimization, we directly perform \( m \) out-of-place squaring operations and finally uncompute all intermediate registers.
\end{definition}

Direct computation is time-optimal but space-inefficient: it uses \( m+1 \) registers and minimizes squaring operations to \( 2m - 1 \), corresponding to Regev's original idea and representing the minimum time complexity. However, it is inefficient in terms of space usage. Therefore, we propose several strategies to reduce the space complexity, starting with the simple strategy introduced earlier:

\begin{corollary}[Simple Strategy (Space-Reduced, Time-Efficient)]
The number of registers required to compute \( m \) squarings can be reduced to \( O(\sqrt{m}) \), while the time complexity remains \( O(m) \), increasing by at most a constant factor not exceeding 2 compared to direct computation.
\end{corollary}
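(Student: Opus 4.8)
The plan is to make precise the blocking-and-checkpoint scheme sketched in the preceding text, and then bound the peak register count and the total squaring count separately. First I would fix notation: let $x_0$ denote the input register and let $x_1,\dots,x_m$ denote the states produced by the $m$ out-of-place squarings, so that direct computation uses $m+1$ registers and $2m-1$ squaring operations. I would then define the simple strategy as a recursion over blocks of length $k$: compute $x_1,\dots,x_k$ forward, immediately run the inverse circuit on the first $k-1$ of these to restore them to $\ket{0}$ (retaining only the checkpoint $x_k$), and repeat starting from $x_k$. After this forward phase the retained checkpoints are $x_k, x_{2k},\dots$, whose count is exactly $\lceil (m+1)/k\rceil - 1$; these are erased in a final reverse phase that reconstructs each block from its preceding checkpoint, uncomputes the checkpoint by inverse squaring, and then undoes the reconstruction.

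For the space bound, I would track the registers that are simultaneously live. At every instant the occupied registers are the stored checkpoints together with the at most $k-1$ working registers used to build (or rebuild) the current block, giving a peak of at most $\lceil (m+1)/k\rceil + k - 1$ registers. Minimizing $\lceil (m+1)/k\rceil + k - 1$ over $k$, achieved near $k=\lceil\sqrt{m+1}\,\rceil$, yields a peak of $O(\sqrt m)$ and hence the claimed space complexity.

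For the time bound I would adopt the paper's per-register viewpoint and argue that each squaring at step $i$ is executed at most twice as often in the simple strategy as in direct computation. In direct computation step $i$ is performed once forward and, for $i<m$, once more backward during the final uncomputation. In the simple strategy the same squaring is performed forward and backward inside its block during the forward phase, and additionally forward and backward when its block is reconstructed during the reverse phase, which at most doubles its direct count. Summing over $i$ gives a total squaring count of at most twice that of direct computation, i.e. $O(m)$ with a constant factor not exceeding $2$.

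The main obstacle I anticipate is the space accounting during the final reverse phase: I must verify that reconstructing a block, uncomputing its checkpoint, and undoing the reconstruction never force more than $\lceil (m+1)/k\rceil + k - 1$ registers to be live at once, and that working registers freed by earlier blocks are genuinely reusable, i.e. that the intra-block inverse circuit restores them to $\ket{0}$ coherently so that no residual entanglement obstructs reuse. Treating the boundary block (which may contain fewer than $k$ squarings) and confirming the off-by-one counts hidden in $\lceil (m+1)/k\rceil$ is the remaining routine but error-prone bookkeeping.
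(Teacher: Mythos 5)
Your proposal is correct and follows essentially the same route as the paper's proof: the same block-and-checkpoint scheme with peak register count $\lceil (m+1)/k \rceil + k - 1$ minimized at $k \approx \sqrt{m}$, and the same factor-of-two time argument (the paper counts at most four squaring operations per ancilla register versus two in direct computation, which is just the per-register dual of your per-step counting, both yielding $T \le 4m-2$ against $2m-1$). The bookkeeping concerns you flag at the end are precisely the details the paper itself declares routine and omits.
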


To further reduce the space complexity, we introduce the binary recursive strategy. It is proven that this strategy completes the computation using the minimal number of registers, achieving the lower bound on space complexity.


\begin{theorem}[Space Lower Bound]
Using \( n \) registers, one cannot complete the computation of \( 2^{n-1} \) squarings (together with their uncomputation). Consequently, for \( m \) squarings, at least \( \lceil \log_2(m+1) \rceil + 1 \) registers are required. This establishes a lower bound on the space complexity.
\end{theorem}

\begin{corollary}[Binary Recursion (Achieving the Space Lower Bound)]

The space complexity lower bound described above for \( m \) can be achieved using the binary recursion strategy, with the time complexity not exceeding \( O(m^{\log_2 3}) \).
\end{corollary}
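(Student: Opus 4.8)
The plan is to recognize the abstract task as a reversible pebble game on the path $0,1,\dots,m$, where a pebble on vertex $j$ records that some register currently holds the $j$-th value $v_j=v_0^{2^j}\bmod N$ of the square-and-multiply chain, and where a pebble may be placed on or removed from vertex $j\ge 1$ only while a pebble sits on $j-1$. This rule is exactly the constraint that the out-of-place squaring $\ket{x}\ket{0}\to\ket{x}\ket{x^2}$ and its inverse both require the predecessor $x$, so ``number of registers at the peak'' equals ``number of simultaneous pebbles.'' Since the Theorem above already supplies the matching lower bound $\lceil\log_2(m+1)\rceil+1$, it remains only to exhibit a strategy that reaches vertex $m$, cleans every intermediate pebble, and never uses more than this many registers, and then to bound its squaring count. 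I would argue by induction on $r$, showing that the binary recursion attains reach $N(r)=2^{r-1}-1$ with $r$ registers, which is equivalent to the stated register count.

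First I would formalize the inductive step asserted in the text: from a strategy $S_{r}$ that performs $m=N(r)$ squarings with full uncomputation using $r$ registers, I build $S_{r+1}$ that performs $2m+1$ squarings using $r+1$ registers. The construction has three phases. Phase A runs $S_r$ to compute the midpoint value $v_{m}$ and then retains only that one register, all earlier intermediates having been cleaned; Phase B treats $v_m$ as the source of a fresh chain and runs a variant of $S_r$ again to extend from $v_m$ to $v_{2m+1}$, cleaning the second block of intermediates; Phase C removes the retained midpoint $v_m$ by executing the inverse of the Phase-A circuit. At every instant exactly one extra register, beyond the $r$ used by the active sub-routine, is occupied by a retained result, so the peak is $r+1$; combined with the lower bound this forces $r=\lceil\log_2(m+1)\rceil+1$ and establishes space-optimality. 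For an arbitrary $m$ I would split the chain into two nearly equal halves and recurse, padding up to the next value of the form $2^{r-1}-1$ when convenient.

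For the time bound I would track $T(m)$, the total number of forward and inverse squarings for a chain of $m$ squarings. Because the construction recurses on three sub-chains of size $\approx m/2$ (compute the first half, compute the second half, uncompute the first half) and the gluing between phases costs only $O(1)$ squarings, the recurrence is $T(m)\le 3\,T(\lceil m/2\rceil)+O(1)$; by the master theorem its solution is $T(m)=O(m^{\log_2 3})$, which is the claimed bound. The three-fold branching is precisely what produces the exponent $\log_2 3$, and I would note that passing instead to a $k$-ary split reduces the branching penalty toward $m^{1+\epsilon}$ at the cost of more registers, consistent with the remarks preceding the statement.

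The delicate point, and where I expect the real work, is the register bookkeeping that yields $2m+1$ rather than the naive $2m$. The extra squaring is squeezed out only because the initial value $v_0=1$ is a classically known constant: its register may be freed and later re-prepared at no cost, so it behaves as a \emph{free} pebble rather than one that must be kept throughout. Consequently Phase A can end holding only the midpoint, and Phase C can reconstruct whatever it needs from $v_0$ without exceeding the budget. Making this rigorous requires separating the free-source reach $N(r)$ from the kept-source reach $K(r+1)$ that governs Phase B and checking that their sum is exactly $N(r)+K(r+1)=(2^{r-1}-1)+2^{r-1}=2^{r}-1$. Verifying that no phase transiently exceeds $r+1$ registers, and that each reversed sub-circuit restores every ancilla to $\ket{0}$, is the main obstacle to a fully rigorous argument.
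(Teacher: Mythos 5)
Your proposal is correct and follows essentially the same route as the paper's proof: the same three-phase binary recursion (compute the first half with full cleanup, extend from the pinned midpoint where the next vertex becomes the new ``free'' source, then reverse the first phase), the same reach recurrence \( N(r+1) = 2N(r)+1 = 2^{r}-1 \) matched against the lower-bound theorem, and the same three-fold branching that produces the exponent \( \log_2 3 \). The only cosmetic differences are that you state the time recurrence in terms of \( m \) and invoke the master theorem, whereas the paper inducts on the register count via \( T_G(n) \le 3\,T_G(n-1) \le 3^{n-1} \) and converts through \( m+1 = 2^{n-1} \), and that you make the reversible-pebble-game framing fully explicit.
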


The recursive strategy can be generalized from binary recursion to fixed \( k \)-ary recursion, which yields different asymptotic complexity conclusions. Allowing \( k \) to grow with \( m \), a setting we refer to as variable-arity recursion, leading to different asymptotic time--space trade-offs. For any fixed \( m \), these two viewpoints correspond to choosing an appropriate effective value of \( k \), and thus are equivalent descriptions of the same family of strategies.

\begin{corollary}[k-ary Recursion (Log-Space, Near-Linear Time)]

The asymptotic time complexity can be reduced to \( O(m^{1 + \epsilon}) \) using a \( k \)-ary recursion strategy, where \( \epsilon \) is an arbitrarily small positive constant, specifically \( \epsilon = \log_k (2k-1) - 1 \). At the same time, the space complexity is kept at \( O(\log m) \), with the constant factor in space approximately \( \frac{k-1}{\log_2 k} \).
\end{corollary}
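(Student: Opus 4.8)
The plan is to formalize the $k$-ary recursive strategy as a reversible pebble game on a line of $m+1$ nodes, where node $i$ records a register holding $x^{2^i} \pmod N$, node $0$ is the permanently available input, a pebble on node $i+1$ may be added or removed only while node $i$ is pebbled, space is the peak pebble count, and time is the number of add/remove moves. First I would specify the strategy at a single level of recursion: partition $[0,m]$ into $k$ consecutive segments by checkpoints $c_0 = 0 < c_1 < \cdots < c_k = m$ with $c_j - c_{j-1} \approx m/k$; in a \emph{forward phase}, recursively pebble $c_1$ from $c_0$, then $c_2$ from $c_1$, up to $c_k = m$, each subcall leaving only its two endpoints pebbled; in a \emph{cleanup phase}, recursively unpebble $c_{k-1}, c_{k-2}, \dots, c_1$ in reverse order, each unpebbling being the exact inverse of the subcall that created that checkpoint. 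Correctness---that every squaring is performed and every intermediate register is returned to $\ket{0}$---would follow by induction on recursion depth, with the base case of two registers.

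Next I would set up and solve the two recurrences. For space, let $S(\mu)$ denote the peak register count of a size-$\mu$ subproblem, counting its own input pebble. The worst moment is the final forward subcall (placing $c_k$ from $c_{k-1}$), during which the $k-1$ earlier checkpoints $c_0,\dots,c_{k-2}$ are held while the subcall internally uses $S(\mu)$ registers, so
\begin{equation}
\label{eq:space-rec}
S(m) = S(\mu) + (k-1), \qquad \mu \approx m/k .
\end{equation}
Unrolling over $L = \log_k m$ levels gives $S(m) = (k-1)L + O(1) = \frac{k-1}{\log_2 k}\,\log_2 m + O(1) = O(\log m)$, exhibiting the advertised space constant $\frac{k-1}{\log_2 k}$. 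For time, the forward phase issues $k$ subcalls and the cleanup phase $k-1$, for a total of $2k-1$, so
\begin{equation}
\label{eq:time-rec}
T(m) = (2k-1)\,T(\mu) + O(1), \qquad \mu \approx m/k ,
\end{equation}
and unrolling (or the master theorem) yields $T(m) = O\!\left(m^{\log_k(2k-1)}\right)$.

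Finally I would read off the stated exponent: writing $\log_k(2k-1) = 1 + \log_k\!\frac{2k-1}{k} = 1 + \log_k(2 - 1/k)$, we obtain $\epsilon = \log_k(2k-1) - 1 = \log_k(2 - 1/k) > 0$, and since $\log_k(2-1/k) \le \log_k 2 = 1/\log_2 k \to 0$ as $k \to \infty$, the exponent $\epsilon$ can be made an arbitrarily small positive constant by choosing $k$ large (but fixed). This delivers $T(m) = O(m^{1+\epsilon})$ at space $O(\log m)$ and recovers the binary case $\epsilon = \log_2 3 - 1$ at $k=2$.

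I expect the main obstacle to lie in making the bookkeeping exact rather than merely asymptotic: justifying the clean coefficients $2k-1$ in \eqref{eq:time-rec} and $k-1$ in \eqref{eq:space-rec} requires verifying that no checkpoint pebble is ever redundantly recomputed and that no scratch register is needed beyond the held checkpoints, together with handling the off-by-one issues when $m$ is not of the form $k\mu + (k-1)$, so that the $k$ segments have slightly unequal lengths. These irregularities perturb only the $O(1)$ additive terms and constants, leaving the leading $O(\log m)$ space and $O(m^{1+\epsilon})$ time intact, but the induction must carry the exact endpoint-preservation invariant through unequal segment sizes to close the argument rigorously.
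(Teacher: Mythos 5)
Your proposal is correct and takes essentially the same approach as the paper: the paper's $k$-ary strategy likewise performs $k$ forward recursive stages followed by $k-1$ cleanup stages, yielding exactly your recurrences $S(m)=S(m/k)+(k-1)$ and $T(m)=(2k-1)T(m/k)$ (the paper states these bottom-up, as squarings $\sim k^n$ and operations $\sim(2k-1)^n$ in the register count $n$, and gives the closed forms $S_{\mathrm{rec}}=n_0+(k-1)\log_k((m+1)/x_0)$ and $T_{\mathrm{rec}}\le t_0((m+1)/x_0)^{\log_k(2k-1)}$). Your pebble-game formalization and the explicit bound $\epsilon=\log_k(2-1/k)\le 1/\log_2 k$ are in fact more detailed than the paper's own sketch, which asserts these counts without working through the endpoint-preservation invariant.
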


\begin{corollary}[Variable-Arity Recursion (Linear Time, Sublinear Space)]
For any integer \( \ell \ge 1 \), there exists a variable-arity recursion strategy that computes \( m \) squarings using only \( O\!\bigl(\sqrt[\ell]{m}\bigr) \) registers, while keeping the time complexity at \( O(m) \). This strategy can be realized by choosing the recursion arity as \( k \approx \sqrt[\ell]{m+1} \).
\end{corollary}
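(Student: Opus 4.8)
The plan is to obtain this family of strategies not from scratch but by \emph{reparametrizing} the $k$-ary recursion already analyzed above: instead of fixing the arity $k$ and letting the recursion depth grow as $\log_k(m+1)$, I fix the depth to $\ell$ and let the arity $k$ grow with $m$. Concretely, I would set $k = \lceil (m+1)^{1/\ell} \rceil$, so that $k^\ell \ge m+1$ and $\ell$ nested levels of $k$-ary recursion suffice to cover all $m$ squarings; the innermost (depth-one) level is just direct computation on a block of at most $k-1$ squarings, and each higher level chains $k$ copies of the level below it, retaining the $k-1$ intermediate result registers while the final stage runs. Iterating the recursion identity $m = k\,m' + (k-1)$ (with $m'$ the sub-block size) for $\ell$ steps from the base $m_0 = 0$ yields exactly $m_\ell = k^\ell - 1 \ge m$, so the construction is well defined once the padding to the next such capacity is absorbed into constants.

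For the space bound I would track the per-level register recurrence $R_\ell = R_{\ell-1} + (k-1)$ with base case $R_1 = k$ (direct computation of $k-1$ squarings uses $k$ registers), which solves to $R_\ell = \ell(k-1) + 1 = O(\ell k)$. Substituting $k \approx (m+1)^{1/\ell}$ and treating $\ell$ as a fixed constant gives $R_\ell = O(m^{1/\ell}) = O(\sqrt[\ell]{m})$, the claimed sublinear register count. This is the same bookkeeping that produced the constant $\tfrac{k-1}{\log_2 k}$ in the $k$-ary corollary, now read with a fixed number of levels rather than a fixed arity. As consistency checks, $\ell = 1$ reproduces direct computation ($k \approx m+1$, giving $m+1$ registers) and $\ell = 2$ reproduces the simple strategy ($k \approx \sqrt{m}$, giving $R_2 = 2k-1 = O(\sqrt{m})$).

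For the time bound I would use the companion recurrence $T_\ell = (2k-1)\,T_{\ell-1}$ — each level runs its final sub-block forward once and runs the remaining $k-1$ sub-blocks forward and then in reverse to uncompute their intermediate results — with base case $T_1 = 2(k-1)-1$. This gives $T_\ell = (2k-1)^{\ell-1}\bigl(2(k-1)-1\bigr) = O\bigl((2k)^\ell\bigr)$. The key cancellation is that $(2k)^\ell = 2^\ell k^\ell \approx 2^\ell (m+1)$, so for fixed $\ell$ the total time is $O(2^\ell m) = O(m)$, linear as claimed.

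The main obstacle I expect lies in the time analysis rather than the space analysis: since $k \approx m^{1/\ell}$ is not constant, the per-level blow-up factor $(2k-1)$ itself grows with $m$, and one must check carefully that raising it to the power $\ell$ and telescoping against $k^\ell = \Theta(m)$ leaves behind only the $\ell$-dependent — and hence, for fixed $\ell$, constant — prefactor $2^\ell$, with no hidden $m$-dependence. I would also need to verify that the ceiling in $k$ and the gap between $m$ and $k^\ell - 1$ inflate the register and time counts by at most bounded factors, and to confirm the equivalence asserted earlier: that this fixed-depth, variable-arity description and the fixed-arity $k$-ary description are the same strategy family, simply optimized over different parameters for a given $m$.
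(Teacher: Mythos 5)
Your proposal is correct and follows essentially the same route as the paper: both fix the recursion depth at a constant $\ell$, choose the arity $k \approx (m+1)^{1/\ell}$, and exploit the cancellation $k^\ell \approx m+1$ so that the space $\ell(k-1)+O(1) = O\!\bigl(\sqrt[\ell]{m}\bigr)$ and the time $(2k-1)^\ell = O(2^\ell m) = O(m)$ for fixed $\ell$. The only difference is presentational — you rebuild the per-level recurrences $R_\ell = R_{\ell-1}+(k-1)$ and $T_\ell = (2k-1)T_{\ell-1}$ from scratch, whereas the paper substitutes the fixed-depth parametrization directly into its closed-form expressions $S_{\mathrm{rec}}$ and $T_{\mathrm{rec}}$, which encode exactly those recurrences.
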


Considering \( m = \log D - 1 \approx C\sqrt{n} \), the time and space complexities of our method with the simple, binary recursive, \( k \)-ary recursive, and variable-arity recursion strategies are summarized in Table~\ref{tab:comparison3}. Since a sufficiently small \( \epsilon \) leads to a significant increase in the constant factor of the \( O(\log m) \) space complexity, and choosing \( \ell \) very large similarly increases the constant factor in the \( O(m) \) time complexity, we need to perform a space--time trade-off in practical applications. In these situations, the simple strategy or a recursion of suitable order is sufficient, significantly reducing the space complexity compared to the original Regev’s algorithm, with time complexities of \( O(n^{3/2} \log n) \) or at most \( O(n^{1.79} \log n) \), both still remaining lower than that of Shor’s algorithm.

\begin{table}[htbp]
\centering
\begin{tabular}{|c|c|c|c|}
  \hline
  \multicolumn{2}{|c|}{\textbf{Algorithm}} & \textbf{Space Complexity} & \textbf{Time Complexity} \\
  \hline
  \multicolumn{2}{|c|}{Shor's~\cite{Shor94,Haner17}} & \( O(n \log n)^* \) & \( O(n^2 \log n) \) \\
  \hline
  \multicolumn{2}{|c|}{Regev's ~\cite{Regev23}} & \( O(n^{3/2}) \) & \( O(n^{1.5} \log n) \) \\
  \hline
  \multirow{4}{*}{Our} 
    & Simple (Time-Efficient) & \( O(n^{5/4}) \) & \( O(n^{1.5} \log n) \) \\
  \cline{2-4}
    & Binary recursion (Space Lower Bound) & \( O(n \log n) \) & \( O(n^{1.79} \log n) \) \\
  \cline{2-4}
    & \(k\)-ary recursion& \( O(n \log n) \) & \( O(n^{1.5 + \epsilon} \log n) \) \\
  \cline{2-4}
    & Variable-arity recursion & \( O\!\bigl(n^{1 + \frac{1}{2\ell}}\bigr) \) & \( O(n^{1.5} \log n) \) \\
  \hline
\end{tabular}
\caption{Comparison of the space and time complexities of different algorithms.}
{\footnotesize $^*$The \(\log n\) factor in Shor's algorithm comes from fast multiplication.}
\label{tab:comparison3}
\end{table}

\subsubsection{Resource usage and scaling of our theoretical method}

To reveal the asymptotic growth, we provide expressions for the algorithm's resource consumption and characterize how the computational cost scales with \( m = \log D - 1 \).

Here, a natural way to compare resource consumption is to count the number of computational registers (as a measure of space complexity) and the number of squarings or large-integer multiplications (as a measure of time complexity, noting that both are equal). This comparison method is effective if we assume that squarings and large-integer multiplications are the most time-consuming modules of the algorithm, which becomes increasingly accurate as the problem size grows.

For the simple strategy with block size \(k\), the space and time admit closed-form expressions:
\begin{align}
S_{\mathrm{simple}}(m,k)
  &= \left\lceil \frac{m+1}{k} \right\rceil + k - 1, \\
T_{\mathrm{simple}}(m,k)
  &= \Bigl( 2\Bigl\lceil \frac{m+1}{k} \Bigr\rceil - 1 \Bigr)(2k-1)
     + 2\Bigl( (m+1) - k \Bigl\lceil \frac{m+1}{k} \Bigr\rceil \Bigr)-4 .
\end{align}

For the \(k\)-recursive strategy, the concrete complexity depends on the chosen base unit \( x_0 \), which uses \( n_0 \) registers to compute \( x_0 \) squarings in time \( t_0 \). Fixing \( x_0, n_0, t_0 \) according to this base unit, and assuming that
\(\log_k\!\left((m+1)/x_0\right)\) is a positive integer, we obtain the following space cost and an upper bound on the time cost (proved by a constructive argument in Appendix~\ref{C}):
\begin{align}
S_{\mathrm{rec}}(m;k,x_0) &= n_0 + (k-1)\,\log_k\!\left(\frac{m+1}{x_0}\right), \\
T_{\mathrm{rec}}(m;k,x_0) &\le t_0 \left(\frac{m+1}{x_0}\right)^{\log_k(2k-1)}.
\end{align}
These expressions can be used as asymptotic estimates for the actual space and time costs. 

The theoretical derivations of the above formulas are given in Appendix~\ref{C}, and the resulting scaling with \( m \) is summarized in Figure~\ref{fig:scaling}. A side-by-side comparison for representative values of \( m \) appears in Table~\ref{tab:summary-m-direct-simple-binary}. Figure~\ref{fig:scaling}(a) shows that the register count \(S(m)\) for the direct strategy grows linearly with \(m\), while the simple and recursive strategies substantially reduce the space overhead. In contrast, Figure~\ref{fig:scaling}(b) shows that the time costs \(T(m)\) for Direct and Simple remain essentially linear in \(m\), whereas the \(k\)-ary recursive strategies follow power laws with larger exponents, initially revealing the time--space trade-off between reduced space and increased runtime.

    
\begin{figure}[htbp]
  \centering
  \begin{subfigure}{0.48\linewidth}
    \centering
    \includegraphics[width=\linewidth]{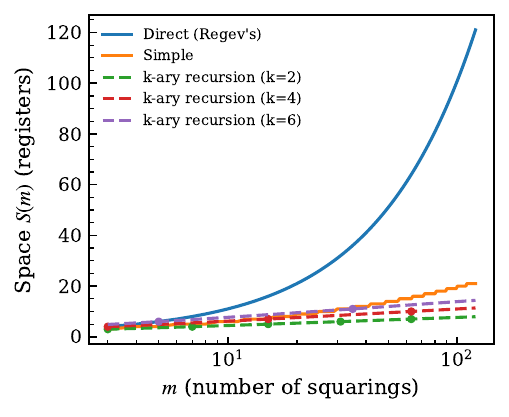} 
    \caption{}
    \label{fig:space-scaling}
  \end{subfigure}
  \hfill
  \begin{subfigure}{0.48\linewidth}
    \centering
    \includegraphics[width=\linewidth]{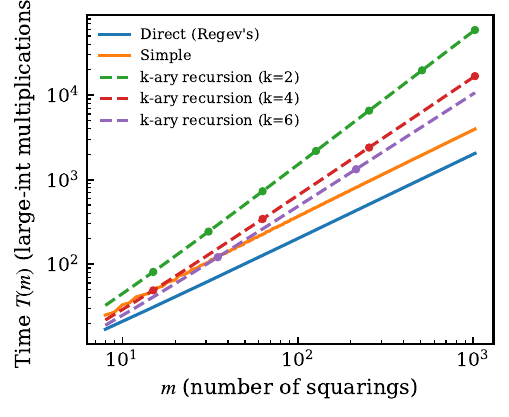} 
    \caption{}
    \label{fig:time-scaling}
  \end{subfigure}
\caption{Scaling of the space and time costs of the squaring subroutine in our optimized Regev's algorithm.
(a) Number of computational registers \( S(m) \) as a function of the number of squarings \( m = \log D - 1 \) for the direct strategy (original Regev construction), the simple intermediate-uncomputation strategy, and \( k \)-ary recursive strategies with \( k \in \{2,4,6\} \). (b) Corresponding time cost \(T(m)\) measured in large-integer squarings or multiplications. Here \(m\) denotes the number of squarings in Regev's algorithm, with \(m = \log D - 1 = C\sqrt{n} - 1\). In both panels, solid dots and solid lines represent the exact analytical results given by Eqs.~(7)--(10); dashed lines are analytic fits for the \( k \)-ary cases. }
  \label{fig:scaling}
\end{figure}


\begin{table}[htbp]
\centering
\begin{tabular}{@{} S[table-format=3.0]
S[table-format=3.0] S[table-format=3.0]
S[table-format=2.0] c
S[table-format=1.0] S[table-format=4.0] @{}}
\toprule
 & \multicolumn{2}{c}{Direct} & \multicolumn{2}{c}{Simple} & \multicolumn{2}{c}{Binary} \\
\cmidrule(lr){2-3}\cmidrule(lr){4-5}\cmidrule(lr){6-7}
{$m$} & {regs} & {mults} & {regs} & {mults} & {regs} & {mults} \\
\midrule
  3   &   4 &   5 &  3 &  5  & 3 &    5 \\
  7   &   8 &  13 &  5 & 17  & 4 &   19 \\
 15   &  16 &  29 &  7 & 45  & 5 &   65 \\
 31   &  32 &  61 & 11 & 107  & 6 &  211 \\
 63   &  64 & 125 & 15 & 221 & 7 &  665 \\
127   & 128 & 253 & 22 & 471 & 8 & 2059 \\
\bottomrule
\end{tabular}
\caption{Register number (space)  and large-integer multiplication counts (time) for different $m$ under three strategies.}
\label{tab:summary-m-direct-simple-binary}
\end{table}

\subsubsection{Time-space trade-offs}


We also summarize the time--space trade-offs achieved by our intermediate-uncomputation method and compare their time complexity with that of Shor's algorithm. As before, we measure space by the number of computational registers and time by the number of large-integer multiplications. Figure~\ref{fig:tradeoff-and-shor} presents these trade-offs and the comparison, panel~(a) shows the trade-off between time and space at a fixed number of squarings, while panel~(b) compares the scaling of the time cost with the problem size. Over the parameter ranges of interest, our optimized Regev-type schemes retain an asymptotic time advantage over Shor while requiring substantially fewer qubits than Regev's original construction. 




\begin{figure}[h!]
  \centering
  \begin{subfigure}{0.48\linewidth}
    \centering
    \includegraphics[width=\linewidth]{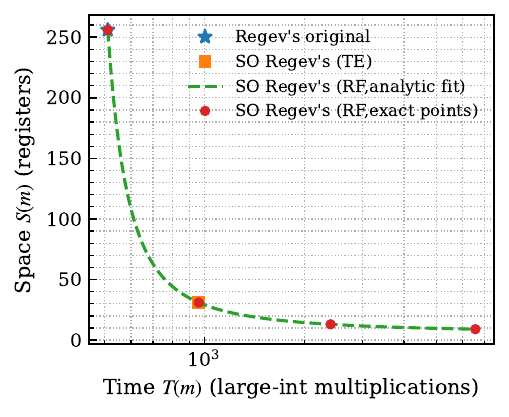}%
    \caption{}
    \label{fig:tradeoff-m255}
  \end{subfigure}
  \hfill
  \begin{subfigure}{0.48\linewidth}
    \centering
    \includegraphics[width=\linewidth]{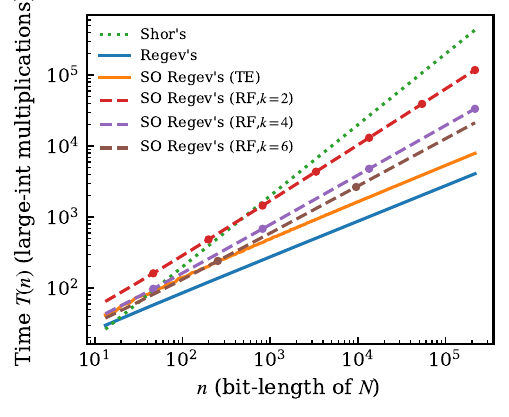}%
    \caption{}
    \label{fig:time-scaling-shor}
  \end{subfigure}
  \caption{Comparison of our optimized Regev’s algorithm with the original Regev’s algorithm and Shor’s algorithm. (a) Time--space trade-off at a fixed number of squarings $m=255$, showing the Direct strategy (Regev's original method), the simple strategy, and $k$-recursive strategies.(b) Scaling of the time cost (large-integer multiplications) with problem size, comparing Regev’s original algorithm, Shor’s algorithm~\cite{Haner17}, and our space-optimized Regev’s algorithm (denoted as “SO Regev’s” in the legend). Here TE stands for the time-efficient simple strategy, and RF,$k$ stands for refined $k$-recursive strategy. The Shor curve is based on the modular exponentiation using approximately $2n$ modular multiplications for an $n$-bit modulus. \protect\footnotemark\
  For Regev-type algorithms we choose a constant \(C=2.2\) as a representative choice following Ekera's analysis~\cite{Ekera25}. In practice, a lower value is often used.}
  \label{fig:tradeoff-and-shor}
\end{figure}

\subsubsection{Explicit examples for the intermediate uncomputation method}

Finally, we present specific examples, starting from small-scale instances, and provide a concrete quantum circuit to illustrate the effect of qubit reuse. A first nontrivial instance is the case when \( \log D = 4 \) and \( m = 3 \) squarings are needed, thereby requiring a total of 4 registers in the direct strategy.

One can uncompute the first register after the computation of the second register finishes, which resets the first register to \( \ket{0}^{n} \) and then reuses it to hold the result of the next squaring. The complete circuit diagram is shown in Figure~\ref{fig:uncompute1}.For comparison, the circuit for the direct strategy is shown in Figure~\ref{fig:uncompute0}. In the figures, \( U \) denotes the modular multiplication \( a_1^{z_{1j}} \cdot a_2^{z_{2j}} \) controlled by the two control registers \( z_{1j} \) and \( z_{2j} \), and \( Sq \) denotes the out-of-place squaring operation.


\footnotetext{The Shor curve shown here does not incorporate low-level engineering optimizations such as windowed arithmetic~\cite{Gid21}; it serves as a clean asymptotic reference and in practice often achievable with smaller values.}

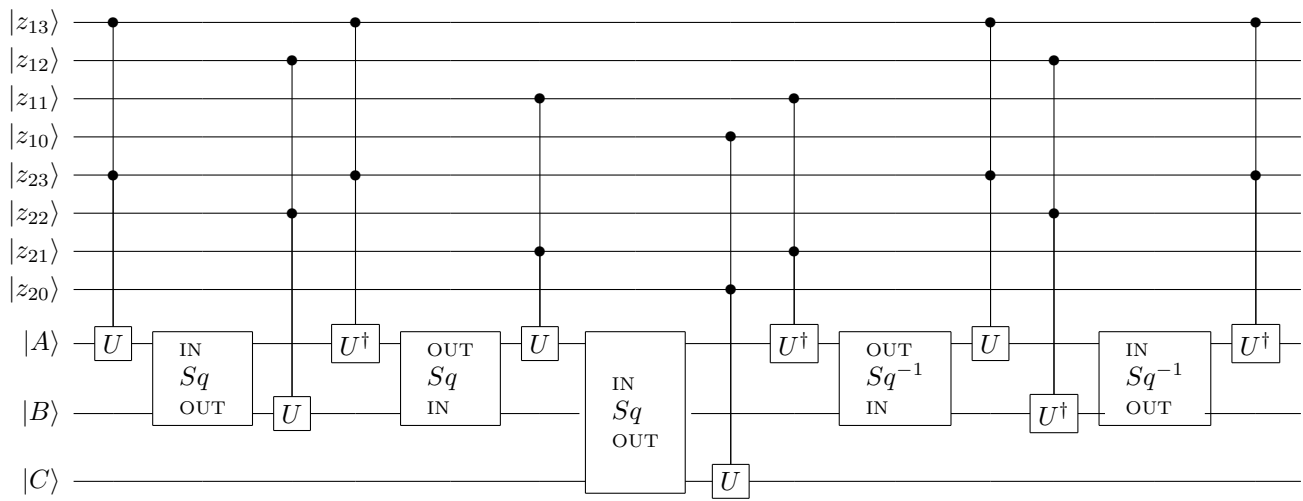
\begin{figure}[htbp]
  \centering
  \[
\Qcircuit @C=0.8em @R=1.2em{
\lstick{\ket{z_{13}}} & \ctrl{8} & \qw & \qw & \ctrl{8} & \qw & \qw & \qw & \qw & \qw & \qw & \ctrl{8} & \qw & \qw & \ctrl{8} & \qw \\
\lstick{\ket{z_{12}}} & \qw & \qw & \ctrl{8} & \qw & \qw & \qw & \qw & \qw & \qw & \qw & \qw & \ctrl{8} & \qw & \qw & \qw \\
\lstick{\ket{z_{11}}} & \qw & \qw & \qw & \qw & \qw & \ctrl{6} & \qw & \qw & \ctrl{6} & \qw & \qw & \qw & \qw & \qw & \qw \\
\lstick{\ket{z_{10}}} & \qw & \qw & \qw & \qw & \qw & \qw & \qw & \ctrl{7} & \qw & \qw & \qw & \qw & \qw & \qw & \qw \\
\lstick{\ket{z_{23}}} & \ctrl{4} & \qw & \qw & \ctrl{4} & \qw & \qw & \qw & \qw & \qw & \qw & \ctrl{4} & \qw & \qw & \ctrl{4} & \qw \\
\lstick{\ket{z_{22}}} & \qw & \qw & \ctrl{4} & \qw & \qw & \qw & \qw & \qw & \qw & \qw & \qw & \ctrl{4} & \qw & \qw & \qw \\
\lstick{\ket{z_{21}}} & \qw & \qw & \qw & \qw & \qw & \ctrl{2} & \qw & \qw & \ctrl{2} & \qw & \qw & \qw & \qw & \qw & \qw \\
\lstick{\ket{z_{20}}} & \qw & \qw & \qw & \qw & \qw & \qw & \qw & \ctrl{3} & \qw & \qw & \qw & \qw & \qw & \qw & \qw \\
\lstick{\ket{A}} & \gate{U} & \multigate{1}{\SQInOut} & \qw & \gate{U^\dagger} & \multigate{1}{\SQOutIn} & \gate{U} & \multigate{2}{\SQInOut} & \qw & \gate{U^\dagger} & \multigate{1}{\SQOutInInv} & \gate{U} & \qw & \multigate{1}{\SQInOutInv} & \gate{U^\dagger} & \qw \\
\lstick{\ket{B}} & \qw & \ghost{\SQInOut} & \gate{U} & \qw & \ghost{\SQOutIn} & \qw & \ghost{\SQInOutInv} & \qw & \qw & \ghost{\SQOutInInv} & \qw & \gate{U^\dagger} & \ghost{\SQInOut} & \qw & \qw \\
\lstick{\ket{C}} & \qw & \qw & \qw & \qw & \qw & \qw & \ghost{\SQInOut} & \gate{U} & \qw & \qw & \qw & \qw & \qw & \qw & \qw \\
} \]
  \caption{Circuit for the modular exponentiation part of our simple intermediate-uncomputation method for \( d=2 \), \( \log D = 4 \) and \( m=3 \) with 3 computational registers.}
  \label{fig:uncompute1}
\end{figure}

\begin{figure}[htbp]
  \centering
  
\[
\Qcircuit @C=1.2em @R=1.2em{
\lstick{\ket{z_{13}}} & \ctrl{8} & \qw & \qw & \qw & \qw & \qw & \qw & \qw & \qw & \qw & \qw & \ctrl{8} & \qw \\
\lstick{\ket{z_{12}}} & \qw & \qw & \ctrl{8} & \qw & \qw & \qw & \qw & \qw & \qw & \ctrl{8} & \qw & \qw & \qw \\
\lstick{\ket{z_{11}}} & \qw & \qw & \qw & \qw &  \ctrl{8} & \qw  & \qw &  \ctrl{8} &\qw & \qw & \qw & \qw  & \qw \\
\lstick{\ket{z_{10}}} & \qw & \qw & \qw & \qw & \qw & \qw &  \ctrl{8} &\qw & \qw & \qw & \qw & \qw & \qw \\
\lstick{\ket{z_{23}}} & \ctrl{4} & \qw & \qw & \qw & \qw & \qw & \qw & \qw & \qw & \qw & \qw & \ctrl{4} & \qw \\
\lstick{\ket{z_{22}}} & \qw & \qw & \ctrl{4} & \qw & \qw & \qw & \qw & \qw & \qw & \ctrl{4} & \qw & \qw & \qw \\
\lstick{\ket{z_{21}}} & \qw & \qw & \qw & \qw &  \ctrl{4} & \qw  & \qw &  \ctrl{4} &\qw & \qw & \qw & \qw  & \qw \\
\lstick{\ket{z_{20}}} & \qw & \qw & \qw & \qw & \qw & \qw &  \ctrl{4} &\qw & \qw & \qw & \qw & \qw & \qw \\
\lstick{\ket{A}} & \gate{U} & \multigate{1}{\SQInOut} & \qw & \qw & \qw & \qw & \qw & \qw & \qw & \qw & \multigate{1}{\SQInOutInv} & \gate{U^\dagger} & \qw \\
\lstick{\ket{B}} & \qw & \ghost{\SQInOut} & \gate{U} & \multigate{1}{\SQInOut} & \qw & \qw & \qw & \qw & \multigate{1}{\SQInOutInv} & \gate{U^\dagger} & \ghost{\SQInOutInv} & \qw & \qw \\
\lstick{\ket{C}} & \qw & \qw & \qw &  \ghost{\SQInOutInv} & \gate{U} &\multigate{1}{\SQInOut} & \qw & \gate{U^\dagger} & \ghost{\SQInOutInv} & \qw & \qw & \qw & \qw \\
\lstick{\ket{D}} & \qw & \qw & \qw & \qw & \qw & \ghost{\SQInOut} & \gate{U} & \qw & \qw & \qw & \qw & \qw  & \qw \\
}
\]  \caption{Circuit for the modular exponentiation part of direct strategy for \( d=2 \), \( \log D = 4 \) and \( m=3 \) with 4 computational registers.}
  \label{fig:uncompute0}
\end{figure}
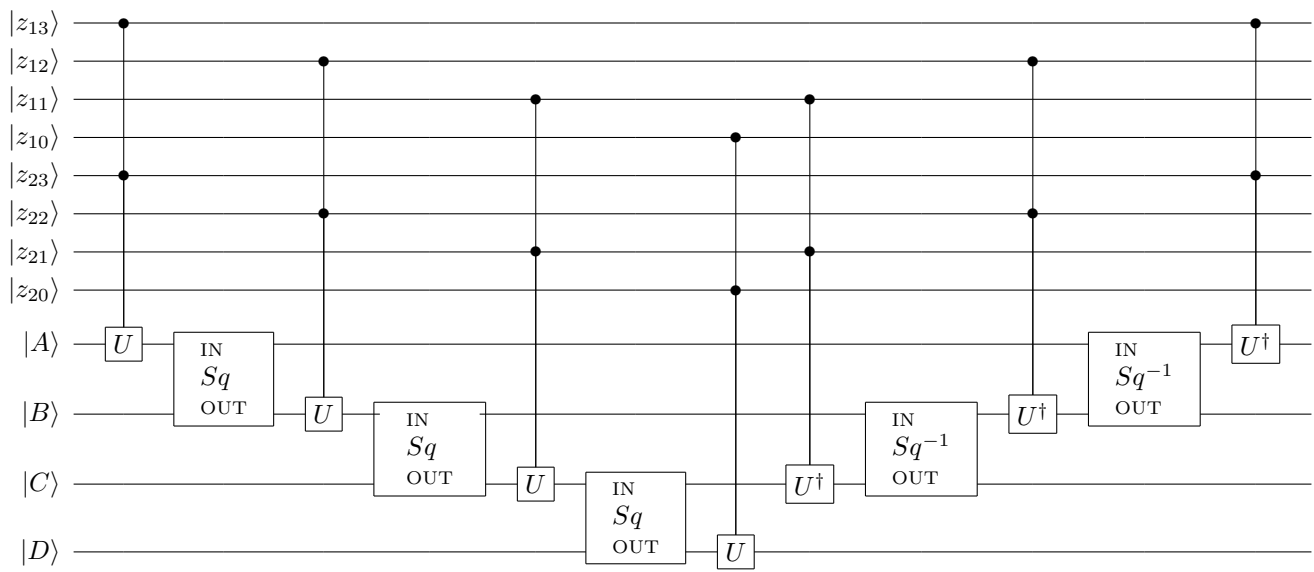

In this illustrative example, we also compare the resource usage of the intermediate-uncomputation scheme with the original scheme. In the original direct computation scheme, the circuit uses 5 squarings, 7 controlled-\( U \) operations (involving 5 large-integer multiplications and 7 small-integer multiplications), and 4 computational registers. In our new scheme, the circuit uses 5 squarings, 9 controlled-\( U \) operations (involving 5 large-integer multiplications and 9 small-integer multiplications), and 3 computational registers, which provides a preliminary illustration of the time--space trade-off.

This example is special because the uncomputation involves only the first computational register~\(\ket{A}\), which does not increase the number of squarings or large-integer multiplications. We further provide a side-by-side comparison for \( m \in \{3,5,7\} \); see Table~\ref{tab:space-time}, which more clearly shows the space-time trade-off.

\begin{table}[htbp]
\centering
\caption{Resource summary for different \( m \) and strategies (space = number of registers; time = counts of squarings, controlled-\( U \) operations, large-integer multiplications, and small-integer multiplications).}
\label{tab:space-time}
\begin{tabular}{@{} l l |S[table-format=1.0] |S[table-format=2.0] S[table-format=2.0] |S[table-format=2.0] S[table-format=2.0] @{}}
\toprule
\multicolumn{2}{c}{Setting} & \multicolumn{1}{c}{Space} & \multicolumn{4}{c}{Time} \\
\cmidrule(lr){1-2}\cmidrule(lr){3-3}\cmidrule(lr){4-7}
\( m \) & Strategy & {\# regs} & {Squares} & {Controlled-\( U \)} & {Large Mults} & {Small Mults} \\
\midrule
\multirow{2}{*}{\( 3 \)}
  & Simple & 3 & 5 & 9 & 5 & 9 \\
  & Direct & 4 & 5 & 7 & 5 & 7 \\
\addlinespace
\multirow{3}{*}{\( 5 \)}
  & Simple, \( k=2 \) & 4 & 11 & 15 & 11 & 15 \\
  & Simple, \( k=3 \) & 4 & 11 & 15 & 11 & 15 \\
  & Direct & 6 & 9 & 11 & 9 & 11 \\
\addlinespace
\multirow{4}{*}{\( 7 \)}
  & Simple, \( k=2 \) & 5 & 17 & 21 & 17 & 21 \\
  & Simple, \( k=4 \) & 5 & 17 & 21 & 17 & 21 \\
  & Binary & 4 & 19 & 27 & 19 & 27 \\
  & Direct & 8 & 13 & 15 & 13 & 15 \\
\bottomrule
\end{tabular}
\end{table}

Finally, we briefly discuss the case for practical use. If \( \log D = C \sqrt{n} \), the simple strategy requires approximately \( 2 \sqrt{C \sqrt{n}} \) computational registers. The binary recursion strategy requires approximately \( \log_2 (C \sqrt{n}) + 1 \) registers—both significantly optimized compared to Regev’s algorithm, which requires approximately \( C\sqrt{n} \) registers. For RSA-2048 (\( n = 2048 \)) with \( C = 2 \), the simple strategy uses approximately 19 registers, a substantial reduction from the approximately 90 registers in Regev’s algorithm, with a modest increase in time complexity. The binary recursion strategy further reduces space to approximately 8 registers, maintaining an asymptotically lower time complexity than Shor’s algorithm. Both strategies contribute to reduce the space cost of Regev’s algorithm, advancing its practical applicability.

\section{Performance of space-optimized Regev's algorithm}\label{Performance}

In this section, we present experimental demonstrations of Regev's algorithm and our algorithm. We focus on the minimal nontrivial factorization example of \( N = 35 \). We construct the basic building blocks of Regev’s algorithm and perform different levels of experimental simplification. These blocks are used for the following experimental demonstration and simulations.

We first show how the intermediate-uncomputation method affects the qubit count, circuit size, and performance in the presence of noise. Using an appropriate compilation technique, we construct circuits of 20 qubits (without intermediate-uncomputation) and 17 qubits (with intermediate-uncomputation), which allows us to perform noisy classical simulations of both versions and to compare their behavior. The simulated results show that our method not only reduces qubit consumption but also improves the success probability in the presence of noise, demonstrating the practical effectiveness of the proposed space-optimization method.

Finally, to execute Regev’s algorithm and obtain a factorization result on a real quantum computer, we adopt the classical precomputation method(described earlier and in Appendix~\ref{E}) and construct a 9-qubit circuit for factoring \( N = 35 \). We run the circuit on the IBM Brisbane quantum processor, apply lattice-based post-processing to recover the correct factors, and compare the experimental outputs with both ideal and noisy classical simulations. The results show that Regev’s algorithm can be implemented on currently available quantum hardware and can produce meaningful factoring outcomes.

\subsection{Construction of quantum circuit blocks for factoring \(N = 35\)}

In this section, we construct the quantum circuits for each building block using the simplest nontrivial example of factoring \(N = 35\) with small prime bases \(b_1 = 2\) and \(b_2 = 3\). We first introduce the selection of parameters, then describe the circuit construction and the experimental simplification process for Initialization, Modular Exponentiation, and Quantum Fourier Transform, showing the final result of circuits for our demonstration, with details provided in Appendices D, E, and F.

\begin{figure}[ht]
    \centering
    \includegraphics[width=0.5\textwidth]{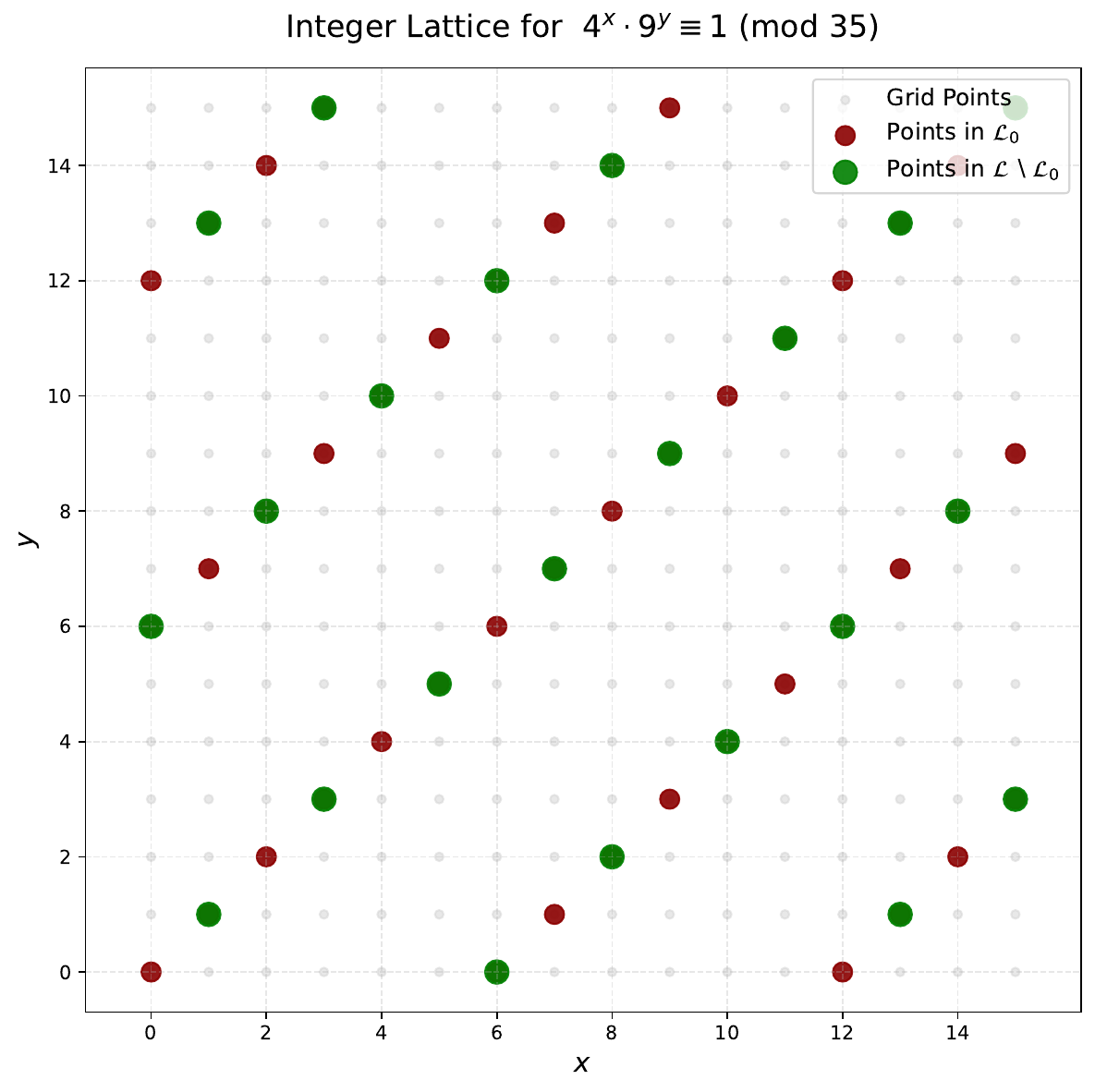}
    \caption{Two-dimensional integer lattice with points in \(\mathcal{L}\) that satisfy \(4^x \cdot 9^y \equiv 1 \pmod{35}\) (shown in dark red and green), and a subset of points that satisfy \(2^x \cdot 3^y \not\equiv \pm 1 \pmod{35}\) highlighted in green, for \(0 \leq x, y < 16\).}
    \label{fig:1}
\end{figure}

\subsubsection*{The choice of parameters in factoring \( N = 35 \):}
The parameters for this experimental demonstration are generally selected according to Regev’s work, where the selection rules are described in Appendix~\ref{D}.

The integer to be factored is \( N = 35 \), which has \( n = 6 \) bits. Therefore, each computational register requires \( n = 6 \) qubits\footnote{This can be reduced to 3 qubits, as detailed in Appendix~\ref{F}.}. We choose \( d = 2 \approx \sqrt{n} \), with the two small integers \( b_1 = 2 \) and \( b_2 = 3 \), and \( a_i = b_i^2 \), \( R > 2 \). The structure of the corresponding lattice \( \mathcal{L} \) is determined by the bases \( b_1 \) and \( b_2 \), and is shown in Figure~\ref{fig:1}.

Regarding the parameter \( D \), according to Regev's algorithm, a value of \( D > 4\sqrt{2} \) is sufficient. However, to demonstrate our method, as shown in the instances in Figures~\ref{fig:uncompute1} and \ref{fig:uncompute0}, \( D \) must be at least 16. Therefore, in the first experiment, designed to highlight the effect of our method and provide a clear comparison with the original construction, we choose \( D = 16 \); in the second experiment, which demonstrates Regev's algorithm using minimal resources, we select the smallest admissible value, \( D = 8 \). In the following sections, we use \( D = 8 \) to illustrate the basic building blocks. Other detailed parameter selections can be found in Appendix~\ref{D}.

\subsubsection*{Quantum state initialization:}

For each block of control registers, as mentioned earlier, we only need to consider the effect of the \( O(\log d) \) most significant qubits out of the total \( \log D \) qubits. For this small-scale example, we can arbitrarily choose the first 2, 1, or even 0 qubits as the most significant. Using the technique in~\cite{Long01}, we obtain the four circuits shown in Figure~\ref{fig:initialization}.

\begin{figure}[ht]
\centering
\includegraphics[width=0.6\textwidth]{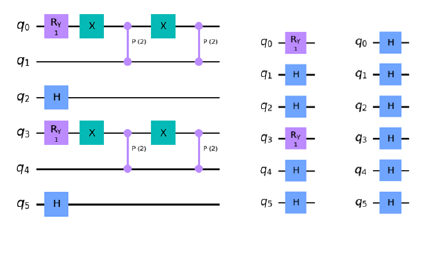}
\caption{Circuit implementation of three different quantum state initialization methods.The figure shows initialization circuits for selecting the most significant \( O(\log d) \) qubits from \( \log D \), corresponding to selecting 2, 1, or 0 qubits. The left diagram selects 2 qubits, the middle selects 1, and the right selects 0. The approach shown is based on the technique in~\cite{Long01}.}
\label{fig:initialization}
\end{figure}

Indeed, we can directly use a uniform superposition state as the initial input, which also yields correct results on a small scale, significantly simplifying the circuit.

\subsubsection*{Modular exponentiation and compiled version of circuit:}

To compute the value \( \prod_i a_i^{z_i + D/2} \pmod{N} \) in the second register, since \( z_i + D/2 \) is a positive number, we can directly treat \( z_i + D/2 \) as the binary representation of the \( i \)-th control register, and denote \( z_{ij} \) as the \( j \)-th bit of the \( i \)-th part of the control register \( z_i \) (which introduces differences in preparation; however, since we prepare the initial state as a uniform superposition, this does not introduce any changes).
    
\[
\ket{z} \ket{0} \to \ket{z} \ket{\prod_i a_i^{z_i} \pmod{N}}.
\]

As mentioned earlier, two distinct architectures can be employed to implement modular exponentiation, with circuit decomposition, ultimately requiring the following quantum operations:

\begin{itemize}
    \item Square-and-multiply method.  
    This is done by sequentially applying the controlled-\( U \) operator conditioned on \( z_{ij} \), to compute \( \prod_i a_i^{z_{ij}} \mod{N} \) together with the square operator \( U_{sq} \). A diagram illustrating this can be found in Figure~\ref{fig:uncompute0}. Regev's algorithm relies on this scheme to achieve an asymptotic time complexity advantage over Shor's algorithm.

    \item Precomputation method.  
    This is done by applying the controlled-\( U^{2^j} \) operator sequence conditioned on \( z_{ij} \), to compute \( \prod_i a_i^{z_{ij} 2^j} \mod{N} \). A diagram illustrating this can be found in Figure~\ref{fig:circuit0}. This scheme is more suitable for small-scale experimental demonstrations, but it does not provide an asymptotic complexity advantage.
\end{itemize}

Both modular exponentiation schemes ultimately reduce the modular exponentiation operation to computing several modular multiplications controlled-\( U \) (and modular squarings \( U_{sq} \)). The modular multiplications controlled-\( U \) compute \( \prod_i a_i^{z_{ij}} \mod{N} \) and can be decomposed into the controlled multipliers \( U_4 \) and \( U_9 \) in our example. The modular multiplications controlled-\( U^{2^j} \) compute \( \prod_i a_i^{z_{ij} 2^j} \mod{N} \), and can be decomposed into the controlled multipliers \( U_4^{2^j} \) and \( U_9^{2^j} \). Here, \( U_c \) denotes modular multiplication by \( c \bmod N \) on the computation register. Finally, \( U_{sq} \) represents the squaring operation on the computation register. In our work, we use the square-and-multiply architecture to demonstrate the effect of our uncomputation-based method, and the precomputation architecture to implement Regev's algorithm on real quantum hardware.

Regarding the choice of multiplier, previous experimental studies often use the ``compiled'' multiplier for demonstrating factorization with very small numbers, such as \( N = 15, 21 \). This method requires fewer gates and ancilla qubits than a general multiplier, as described in~\cite{Lanyon07}~\cite{Beckman96} and summarized in~\cite{Monz16}. We adopt these techniques from previous works and apply several additional simplification techniques (see Appendix~\ref{E}). The correctness of the multiplier can be easily verified, such as for \( 2^j \pmod{15} \), as shown in Figure~\ref{fig:multiplication_15}~\cite{Monz16}. We adopt these techniques from previous works and apply several additional simplification techniques. Full circuit-level results and further details are presented in Appendix~\ref{F}. A compiled multiplier is shown in Figure~\ref{fig:multiplier_35}; the squaring operation requires ancilla qubits even in the compiled version, as illustrated in Figure~\ref{fig:square_35}.

The above units are sufficient to construct the modular exponentiation, but to further simplify the quantum circuit and reduce the number of qubits, we can adopt a more simplified approach. In practice, we can further simplify the quantum circuit in two ways: re-compilation (which reduces the number of qubits required for a single computational register from 6 to 3)and a simplified multiplier from the initial state (which simplifies the first few multiplier). These techniques are detailed in Appendix~\ref{F}, leading to varying degrees of simplification from Figure~\ref{fig:multi_sim} to Figure~\ref{fig:square_sim}, which can significantly reduce the number of qubits and circuit size.

\subsubsection*{Quantum Fourier transform:}

Apply the \( d \)-dimensional quantum Fourier transform (over \( \mathbb{Z}_D^d \)) to the \( \ket{z} \) register, which is equivalent to applying a quantum Fourier transform over \( \mathbb{Z}_D \) to each dimension separately, which is easily implemented as shown in Figure~\ref{fig:QFT}.

\begin{figure}[ht]
\centering
\includegraphics[width=0.5\textwidth]{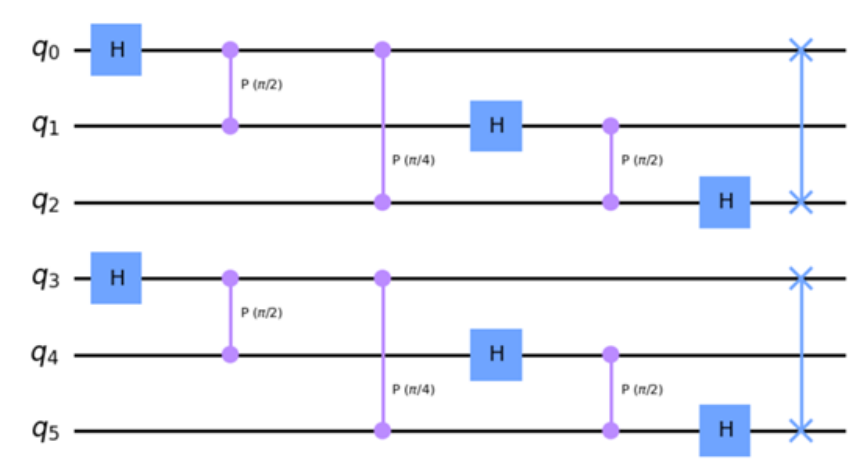}
\caption{Circuit implementation of the quantum Fourier transform over \( \mathbb{Z}_8^2 \).}
\label{fig:QFT}
\end{figure}

Given all the modules designed above, we can now construct the specific quantum circuits for our method and Regev’s algorithm to demonstrate the proof-of-concept through simulations and experiments.


\subsection{Simulation and experimental results}

In this section, we construct quantum circuits using the basic building blocks described above and present the corresponding simulation and experimental results. We first build two comparative circuits and report noisy-simulation results obtained on a quantum simulator, which demonstrate the performance of our space-optimized algorithm and the effectiveness of the proposed space-optimization method. We then construct a further simplified circuit suitable for current quantum hardware and present experimental data demonstrating Regev's factoring approach on a real quantum device. Finally, we run classical post-processing routines to complete the factorization and verify correctness.

\subsubsection{The simulation result for intermediate-uncomputation method}


Firstly, we construct the two circuits shown in Figures~\ref{fig:uncompute1} and~\ref{fig:uncompute0} for the case where \( \log D = 4 \), following Regev’s square-and-multiply architecture to demonstrate the practical effect of our theoretical optimization.

For the simplification, we use a uniform superposition initialization, along with the multipliers shown in Figures~\ref{fig:multi_sim} to~\ref{fig:square_sim}, to construct the controlled multipliers \( U_4 \), \( U_9 \), and the out-of-place squaring operation. Each register requires 3 qubits, resulting in a total of 17 qubits for Figure~\ref{fig:uncompute1} and 20 qubits for Figure~\ref{fig:uncompute0}. The final circuits are shown in the figures in Appendix~\ref{G}. Subsequently, we run these circuits using Qiskit Aer to obtain simulated results, including both noiseless simulations and simulations under the FakeBrisbane noise model. The ideal results without noise are presented in Figure~\ref{fig:result0}, and the simulation results under noisy conditions are shown in Figures~\ref{fig:result17} and~\ref{fig:result20}.

\begin{figure}[htbp]
\centering
\includegraphics[width=0.7\textwidth, height=0.4\textheight, keepaspectratio]{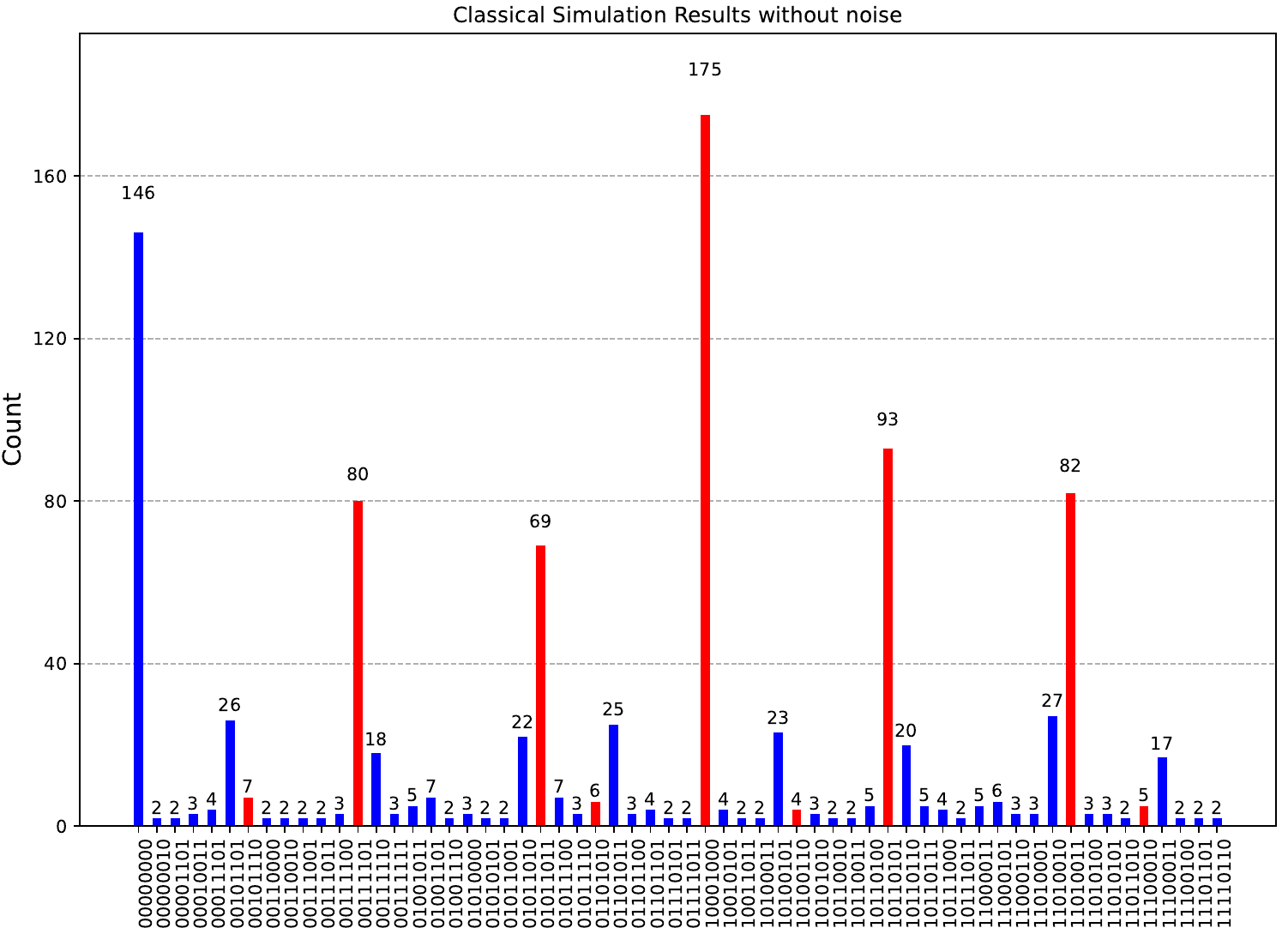}
\caption{Classical simulation results without noise. The red bars in the figure represent the states that are expected to be measured, located near the dual lattice elements.}
\label{fig:result0}
\end{figure}

Since these two circuits are equivalent, the results are identical in noiseless conditions. The results of the simulation without noise in Figure~\ref{fig:result0} indicate that the measurement outcomes with the highest probabilities correspond to the peaks, which are:

\begin{align*}
10001000 &: (1/2, 1/2), \\
00000000 &: (0, 0), \\
10110101 &: (11/16, 5/16), \\
11010011 &: (13/16, 3/16), \\
00111101 &: (13/16, 3/16), \\
01011011 &: (5/16, 11/16).
\end{align*}

These states are the states in the dual lattice \( L^* \), which are the desired states.

The simulation results with noise in Figure~\ref{fig:simulation_noise} show that, under the same noise model, although our method requires slightly more gates, the probability of measuring the target state corresponding to the dual lattice vector is significantly higher. Specifically, our method achieved 415 successful counts out of 4096, compared to 330 successful counts out of 4096 for the direct method. Since the simulator mainly models noise in quantum gates, the impact of idle qubits is more pronounced in real quantum devices. In practical quantum computers that can support the computation of our circuits, the difference between the two methods may be even more significant. These results indicate that our method not only reduces the required number of qubits but also has practical significance in reducing the impact of quantum noise in real quantum computers.

\begin{figure}[htbp]
\centering
\begin{subfigure}{0.48\linewidth}
\centering
\includegraphics[width=\linewidth,height=0.6\textheight,keepaspectratio]{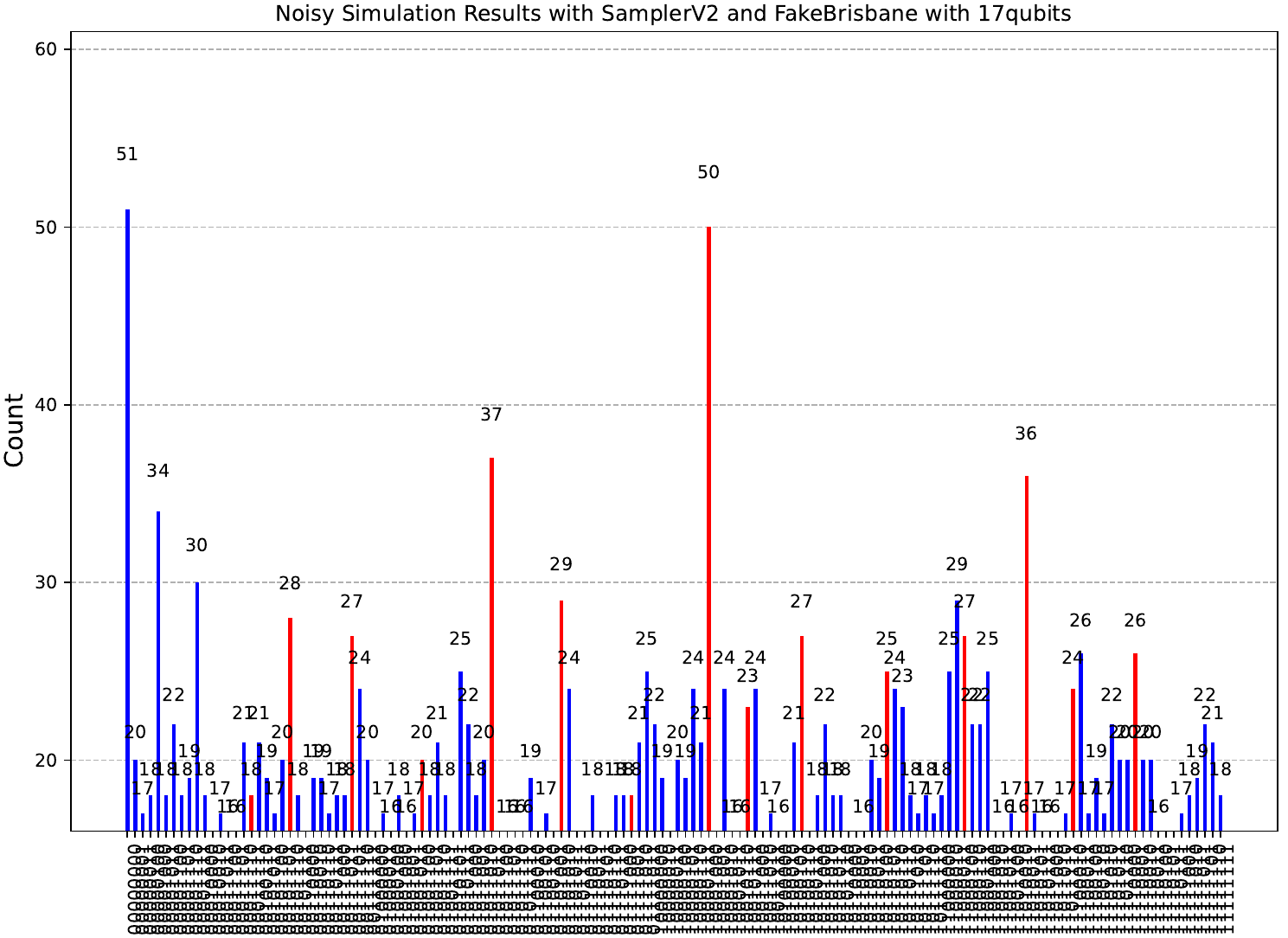}
\caption{}
\label{fig:result17}
\end{subfigure}
\hfill
\begin{subfigure}{0.48\linewidth}
\centering
\includegraphics[width=\linewidth,height=0.6\textheight,keepaspectratio]{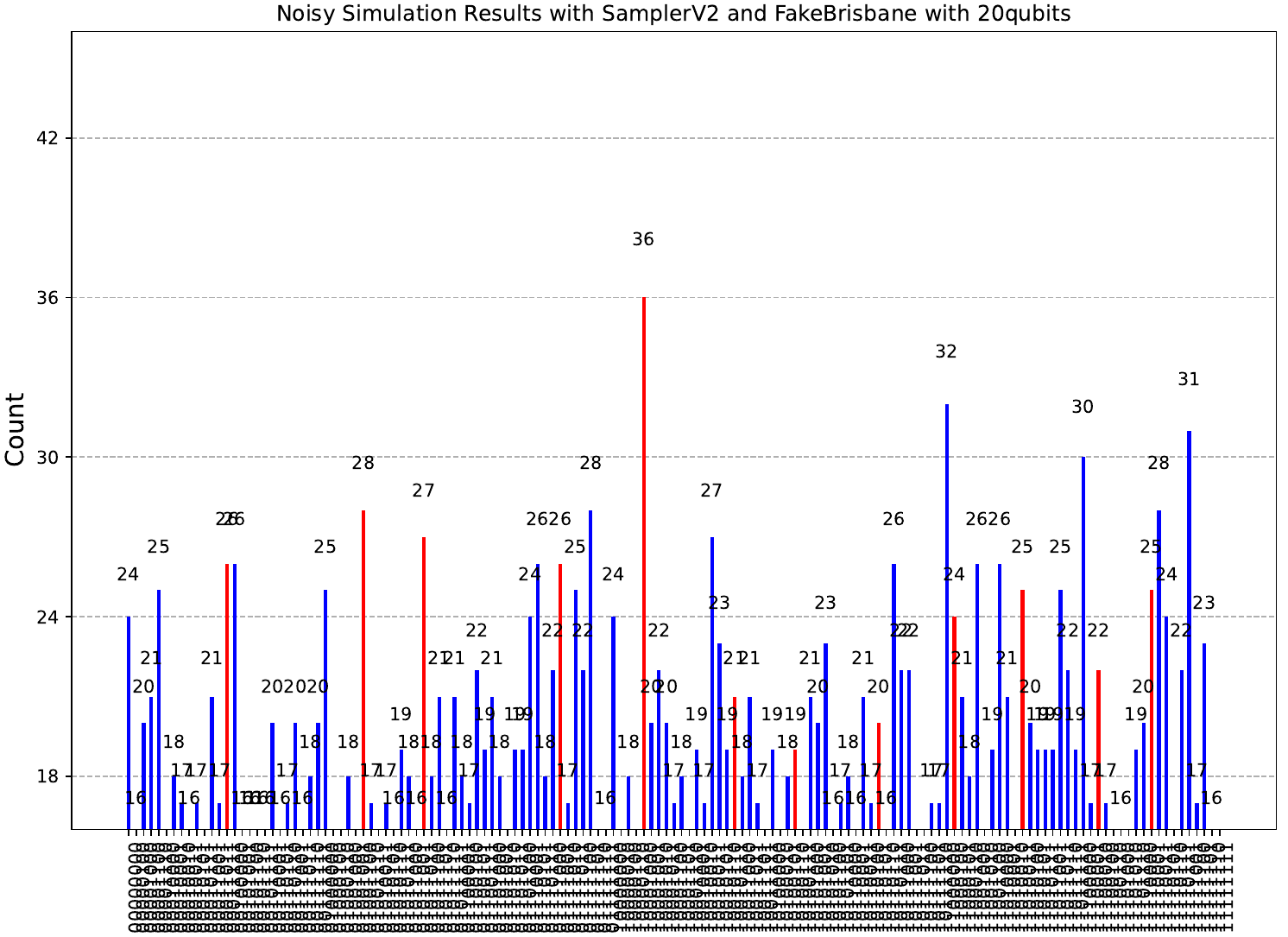}
\caption{}
\label{fig:result20}
\end{subfigure}
\caption{Simulation results with noise. (a) Simulation of the circuit from Figure~\ref{fig:uncompute1} using our intermediate uncomputation method. (b) Simulation of the circuit from Figure~\ref{fig:uncompute0} in the Regev's original method. Both simulations were performed with Qiskit Aer using optimization level 3 and a noise model extracted from FakeBrisbane, representing typical noise levels in current quantum processors. In presenting the results, states with fewer measurements were filtered.}
\label{fig:simulation_noise}
\end{figure}


\subsubsection{The simulation and experimental result for the Regev's precompute demonstration}\label{experiment}


Next, we choose \( \log D = 3 \) and conduct experiments using the classical precomputation scheme, which uses fewer quantum resources and is typically more practical for small-scale demonstration experiments. This approach allows us to classically compute the constants \( a_i^{2^j} \) and compile them into the multiplier circuit, avoiding the need for squaring operations, and facilitating a proof-of-concept demonstration on real quantum hardware.

We apply the controlled-\( U \) operator sequence \( \{ U_4^4, U_9^4, U_4^2, U_9^2, U_4, U_9 \} \) to compute \( \prod_i a_i^{z_i} \mod{N} \) on the second register, conditioned on the first register. The multipliers from Figure~\ref{fig:multi_sim} are used to construct the controlled multipliers \( U_4^{2^j} \) and \( U_9^{2^j} \). We then construct the 9-qubit precomputation-based circuit and execute it both in Qiskit Aer and on the IBM Quantum Brisbane processor to obtain simulated and experimental results, and perform classical post-processing to verify the correctness of the algorithm. The overall quantum circuit for this example is fully represented in Figure~\ref{fig:circuit0}. We briefly introduce the simulation results, the quantum hardware used, and compare the experimental results with the noisy simulation, running post-processing programs to complete the full factorization process.


We first use Qiskit Aer to simulate the circuit results, as shown in Figure~\ref{fig:simulation_results}. The results of the simulation indicates that the measurement outcomes with the highest probabilities correspond to the red peaks in Figure~\ref{fig:result1}, which are:

\begin{align*}
100100 &: (1/2, 1/2), \\
000000 &: (0, 0), \\
001111 &: (1/8, 7/8), \\
111001 &: (7/8, 1/8), \\
101011 &: (5/8, 3/8), \\
011101 &: (3/8, 5/8).\\
\end{align*}

These states are the desired states in the dual lattice \( L^* \).

\begin{figure}[htbp]
\centering
\begin{subfigure}{0.48\linewidth}
\centering
\includegraphics[width=\linewidth,height=0.6\textheight,keepaspectratio]{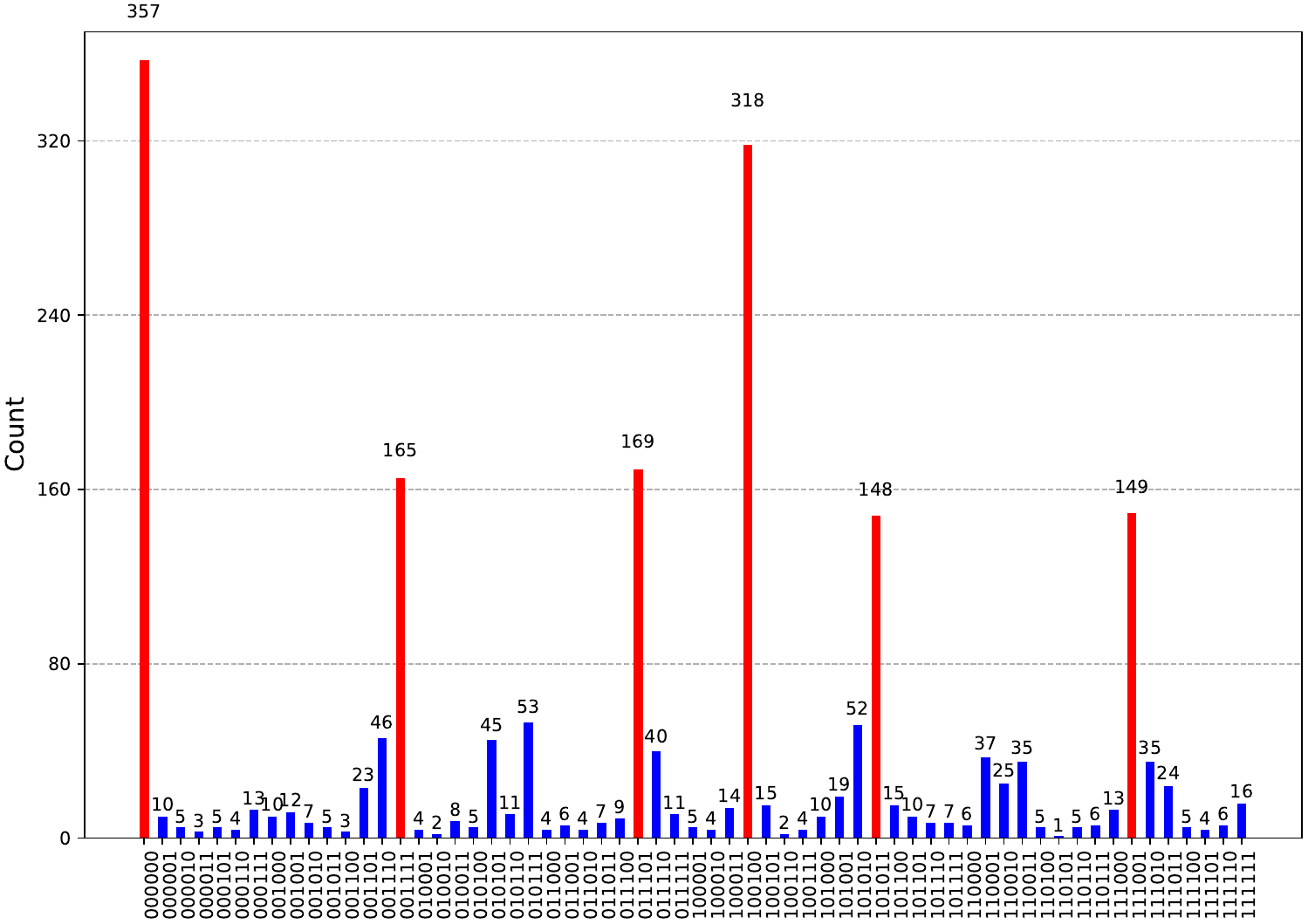}
\caption{}
\label{fig:result1}
\end{subfigure}
\hfill
\begin{subfigure}{0.48\linewidth}
\centering
\includegraphics[width=\linewidth,height=0.6\textheight,keepaspectratio]{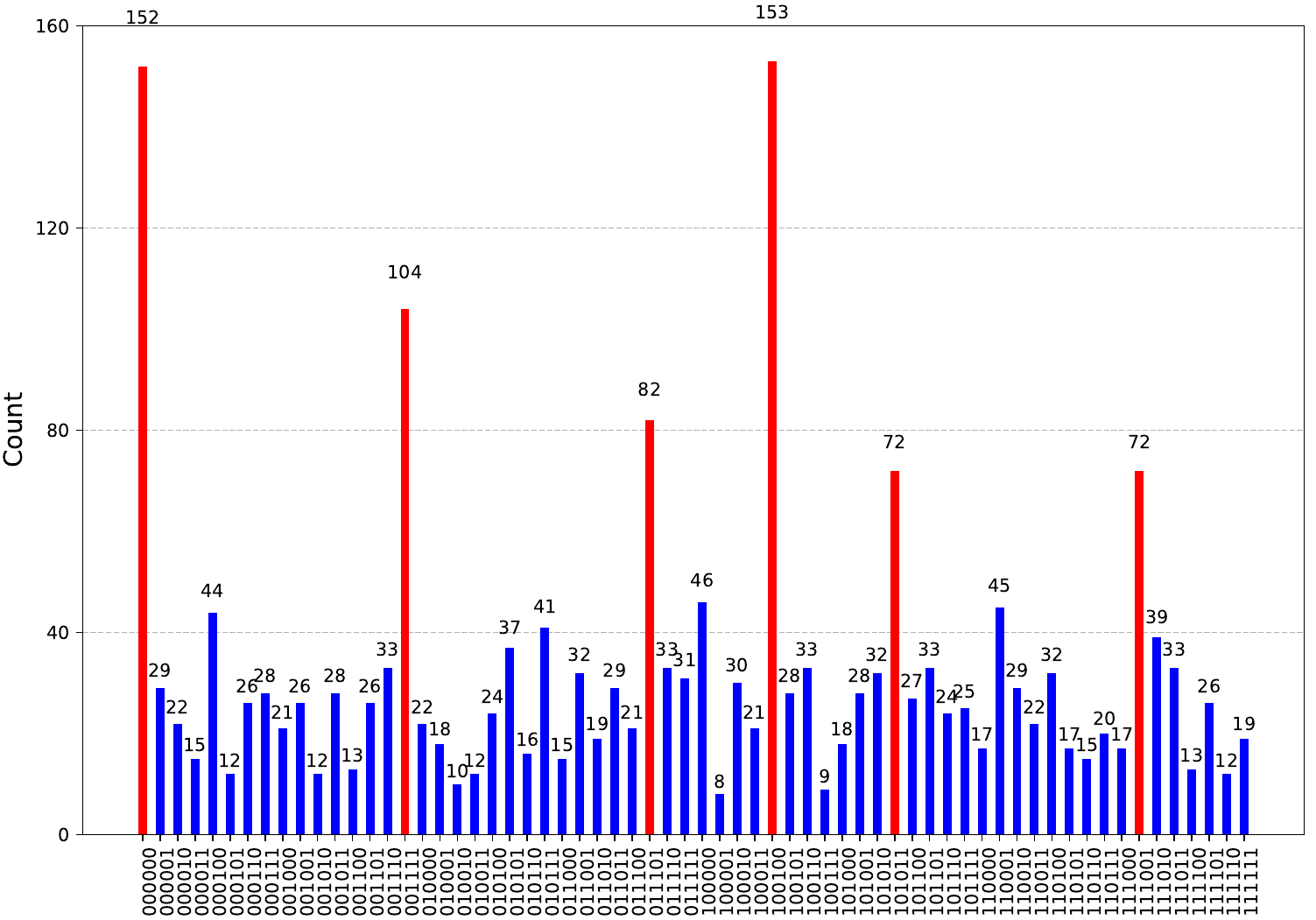}
\caption{}
\label{fig:result2}
\end{subfigure}
\caption{Classical simulation results with a 9-qubit precompute circuit. (a) Simulation without noise. (b) Simulation with noise from Fake Brisbane. The red bars represent the target states.}
\label{fig:simulation_results}
\end{figure}


To demonstrate the algorithm’s performance on a real quantum computer, we executed the algorithm on the IBM Brisbane processor. The IIBM Brisbane processor is a 127-qubit superconducting device of the Eagle family, to implement experiments for factoring the number \( N = 35 \). The qubits are arranged in a heavy-hex lattice, with their connectivity illustrated schematically in Figure~\ref{fig:brisbane_coupling}. Our circuit uses a connected subgraph of 9 physical qubits to host the logical registers, chosen according to the device connectivity and calibration data at the time of the experiments.

\begin{figure}[h]
\centering
\includegraphics[width=2.5in, height=2.5in]{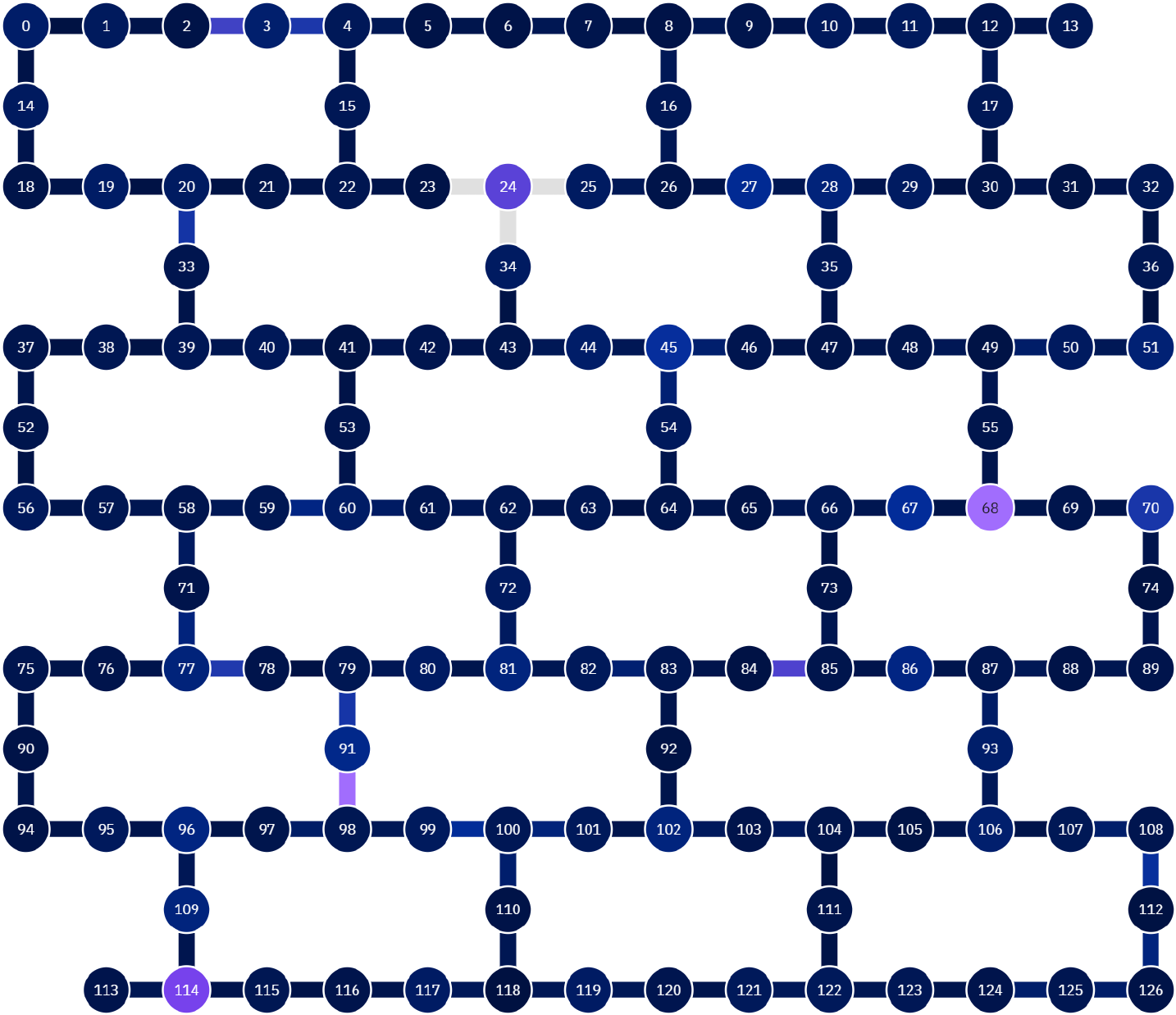}
\caption{Coupling map of the IBM Quantum Brisbane processor (Eagle r3, 127 qubits).}
\label{fig:brisbane_coupling}
\end{figure}

Representative calibration data for the IBM Quantum Brisbane processor, reported in recent hardware-characterization work on Eagle-r3 devices, indicate median coherence times on the order of \(T_1 \approx 220\text{--}230~\mu\mathrm{s}\) and \(T_2 \approx 140\text{--}150~\mu\mathrm{s}\), single-qubit \(\mathrm{SX}\)-gate error rates of order \(2\text{--}3\times 10^{-4}\), two-qubit ECR error rates around \(8\times 10^{-3}\), and readout assignment errors at the \(1\text{--}2\%\) level~\cite{AbuGhanem2025IBMQC}. A more recent benchmarking study focused specifically on IBM Brisbane reports performance figures consistent with this parameter range~\cite{Selvam2025Brisbane}. Our hardware experiments, performed on IBM Brisbane in June~2025, were therefore carried out in a noise regime well represented by these reported calibration values.


We compared the simulation results using the noise model FakeBrisbane (Figure~\ref{fig:result3}) with the actual computational results from the IBM Brisbane quantum computer (Figure~\ref{fig:result4}). It is observed that the noise level of the real quantum computer is higher than that of its noise model. Under actual noise conditions, the probability of measuring the target state is significantly lower than the theoretical expectation, but it remains among the higher-probability outcomes. This demonstrates the effectiveness of the algorithm on current noisy quantum computers. 

If additional techniques are applied (see Potential Improvements for Future Work), the required number of qubits and two-qubit gates could be further reduced, potentially minimizing the impact of noise. Moreover, using more advanced quantum hardware could yield even better results.

\begin{figure}[htbp]
\centering
\begin{subfigure}{0.48\textwidth}
\centering
\includegraphics[width=\linewidth]{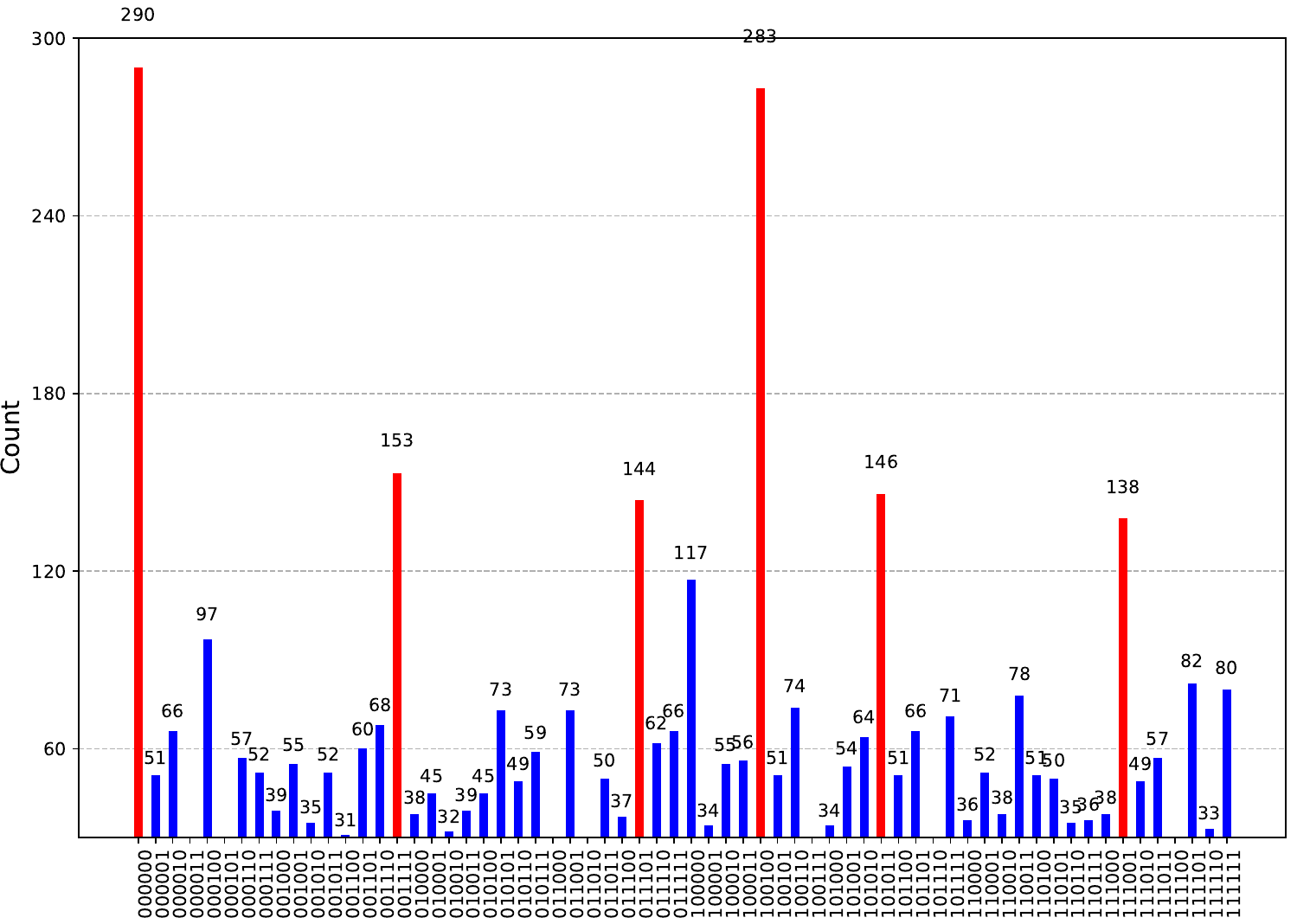}
\caption{}
\label{fig:result3}
\end{subfigure}
\hfill
\begin{subfigure}{0.48\textwidth}
\centering
\includegraphics[width=\linewidth]{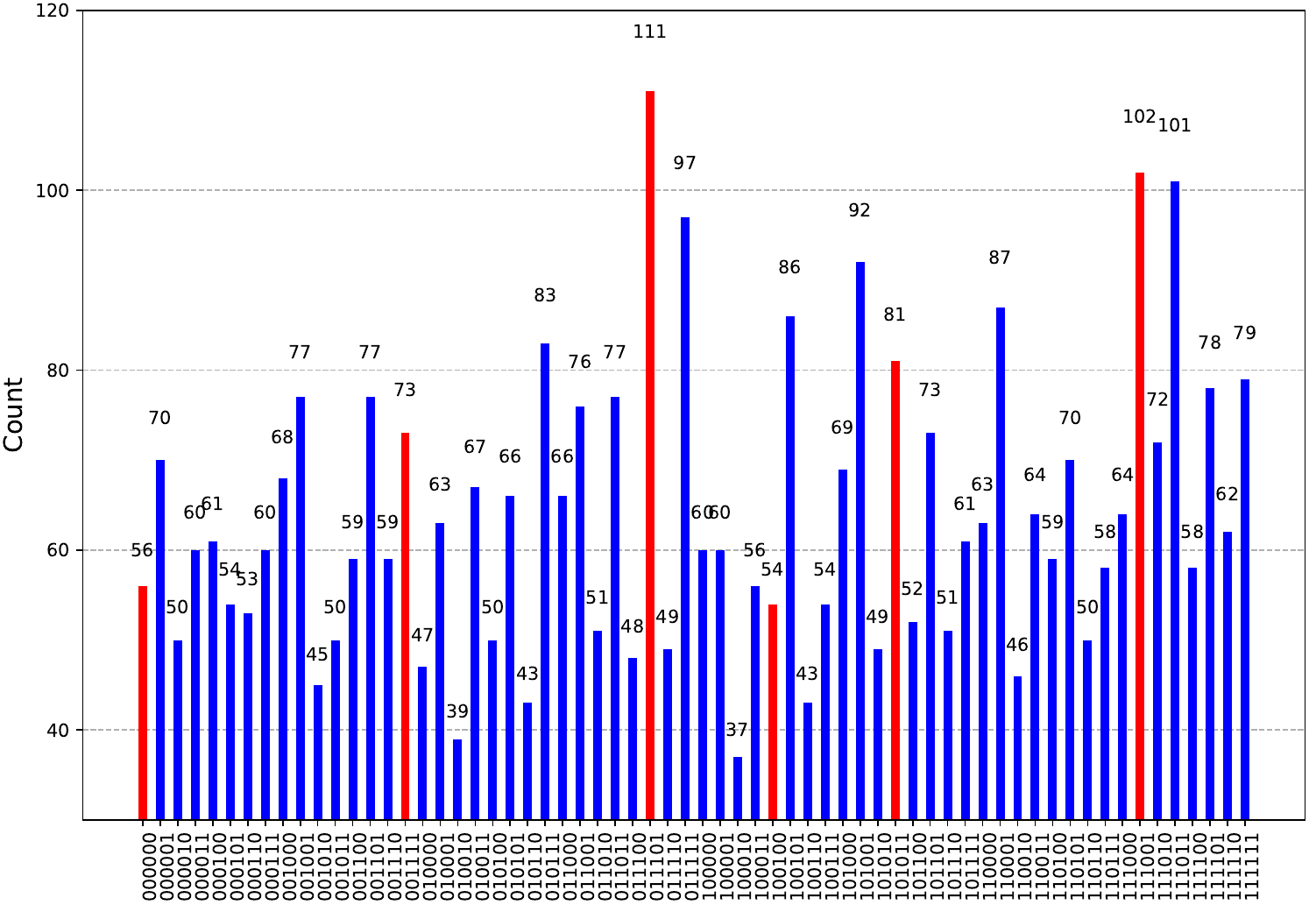}
\caption{}
\label{fig:result4}
\end{subfigure}
\caption{Comparison of simulation and experimental results with 9-qubit precompute circuit. (a) Noisy simulation results obtained with the noise model extracted from FakeBrisbane. (b) Experimental results obtained on the real IBM Brisbane quantum processor.}
\label{fig:comparison_results}
\end{figure}

\subsubsection{Post-processing}

Finally, we perform post-processing on the lattice based on the experimental results and complete the correct factorization. The data are sourced from subsection~\ref{experiment}.

\begin{itemize}
\item \textbf{Generation of lattice \(\mathcal{L}'\) and LLL reduction}:  

By inputting the measurement results into the generating matrix for post-processing, we obtain the lattice \(\mathcal{L}'\), on which we perform LLL reduction to complete the factorization process.

For example, consider the four measured vectors \(\{ (1/2, 1/2), (1/8, 7/8), (3/8, 5/8), (0, 0) \}\). Setting \( S = 8 \), the generating matrix is:
\[
\begin{bmatrix}
1 & 0 & 0 & 0 & 0 & 0 \\
0 & 1 & 0 & 0 & 0 & 0 \\
4 & 4 & 8 & 0 & 0 & 0 \\
1 & 7 & 0 & 8 & 0 & 0 \\
3 & 5 & 0 & 0 & 8 & 0 \\
0 & 0 & 0 & 0 & 0 & 8
\end{bmatrix}
\]
After applying the LLL algorithm, the basis matrix becomes:
\[
\begin{bmatrix}
-1 & 1 & 1 & -1 & -3 & 0 \\
-1 & -1 & 0 & 0 & 3 & 0 \\
0 & 0 & 4 & 4 & 0 & 0 \\
0 & 2 & 1 & -1 & 2 & 0 \\
0 & -2 & 3 & -3 & -2 & 0 \\
0 & 0 & 0 & 0 & 0 & 8
\end{bmatrix}
\]
Taking the first element, corresponding to \((-1, -1)\), it is verified that \( 4^{-1} \cdot 9^{-1} \equiv 1 \pmod{35} \), which is indeed an element of \( L \). Furthermore, since \( 2^{-1} \cdot 3^{-1} \equiv 18 \cdot 12 \pmod{35} = 6 \neq 1 \), it is not an element of \( L_0 \).

\item \textbf{Final result}:  

Computing the greatest common divisor \( \gcd(2^{-1} \cdot 3^{-1}, 35) \) yields 5 and 7 as the nontrivial factors of 35, thus successfully completing the factorization.

\end{itemize}

At this point, we have completed the entire factorization process, thereby completing a proof-of-principle demonstration and validating the correctness and effectiveness of the quantum algorithm.

\section{Discussion}\label{Disc}

In this work, we have proposed and analyzed a intermediate-uncomputation framework for Regev's quantum factoring algorithm. By strategically uncomputing intermediate results, we reduce the number of required computational registers while maintaining a polynomial time factor that is more efficient than Shor's algorithm. Theoretical scaling laws, and simulation results for small-scale instances collectively confirm the effectiveness of this approach. And a proof-of-principle demonstration of Regev's algorithm is excuted on real quantum hardware.

The demonstrated space--time trade-off offers a practical pathway for implementing Regev's algorithm on current and future quantum processors. In fact, our uncomputation-based space-reduction method is broadly applicable and can be extended to other quantum algorithms, providing a new pathway toward practical quantum computation.

There are several directions for further improvement. On the theoretical side, it is worth emphasizing that we primarily measured time complexity in terms of circuit size (i.e., large-integer squarings and multiplications). If we focus on circuit depth, the simple intermediate-uncomputation strategy shows a depth overhead closer to the direct strategy, as many operations within a block can be parallelized. The k-ary intermediate-uncomputation strategy also significantly benefits from parallelization to improve the time factor, with the specific results requiring more detailed analysis. However, the asymptotic lower bounds on space complexity remain unaffected. A more refined analysis that considers hardware connectivity and parallelization would be a valuable extension.

We note that recent work by Kahanamoku-Meyer et al.~\cite{Kah2025}, based on parallel spooky pebble games, has been posted on arXiv. While the idea of parallelizing quantum circuits is natural, spooky pebble games rely on intermediate measurements (and gate operations conditioned on measurement outcomes), which seems challenging to implement on current quantum hardware. Our work is independent of theirs, as we focus on optimizing Regev’s algorithm in terms of both space complexity and practical small-scale experimental implementation, without relying on intermediate measurements and can be implemented on NISQ devices. If intermediate measurements become feasible at low cost in future quantum computers, the spooky pebble game holds significant promise, as it could achieve reduced time complexity and potentially surpass the theoretical space lower bound in our setting. This could become a key research direction for both theoretical and experimental studies in the future.

On the implementation side, besides the experimental optimization techniques discussed in this paper, some unadopted schemes exist to potentially reduce the number of qubits and circuit size. For instance, the ``semi-classical Fourier transform'' could be employed to reduce the number of control qubits. However, this approach requires adjusting the parameters of quantum rotation gates after measurement, which also appears to be unsupported on IBM quantum machines now. When allowing Toffoli gate decomposition with phase deviations, we could further reduce the circuit size. These simplification schemes may be explored in future work.

\clearpage

\bibliographystyle{unsrt} 
\bibliography{References}

\clearpage

\appendix
\clearpage

\section{Appendix A: Overview of Shor's algorithm }\label{A}

Shor's algorithm can efficiently solve the following integer factoring problem: Given an \( L \)-bit odd composite integer \( N = pq \), determine its nontrivial factors \( p \) and \( q \).
This is achieved by reducing the factorization problem to an order-finding problem, a task that can be efficiently performed by a quantum subroutine with high probability.

The core idea is as follows: randomly choose a positive integer \( a < N \) that is co-prime to \( N \). If we can find the order \( r \) of \( a \) modulo \( N \)—that is, the smallest positive integer such that \( a^r \equiv 1 \pmod{N} \)—then, according to a mathematical theorem, with probability greater than one-half, \( r \) is even and \( a^{r/2} \not\equiv \pm 1 \pmod{N} \) \cite{Shor94}. Under this condition, the identity \( (a^{r/2} + 1)(a^{r/2} - 1) \equiv 0 \pmod{N} \) holds, and since \( 1 < a^{r/2} \pm 1 < N \), computing \( \gcd(a^{r/2} \pm 1, N) \) is guaranteed to yield a nontrivial factor of \( N \).

\subsubsection*{Order-finding Subroutine}

\begin{enumerate}
    \item \textbf{Quantum State Initialization}: Prepare two quantum registers. The first register, with \( n = \lceil 2 \log_2 N \rceil = 2L + 1 \) qubits, is initialized to a uniform superposition over all states \( \ket{x} \) for \( x = 0, 1, \dots, 2^n - 1 \) using Hadamard gates. The second register, with \( m = \lceil \log_2 N \rceil \) qubits, is initialized to \( \ket{0} \) to hold the result of modular exponentiation in the next step:
    \[
    \ket{\psi_1} = \frac{1}{\sqrt{2^n}} \sum_{x=0}^{2^n-1} \ket{x} \ket{0}.
    \]

    \item \textbf{Modular Exponentiation}: Apply the unitary operator \( U_a \) to compute \( a^x \pmod{N} \) in the second register, conditioned on the first register:
    \[
    \ket{x} \ket{0} \to \ket{x} \ket{a^x \pmod{N}}.
    \]
    This yields the state:
    \[
    \ket{\psi_2} = \frac{1}{\sqrt{2^n}} \sum_{x=0}^{2^n-1} \ket{x} \ket{a^x \pmod{N}}.
    \]

    \item \textbf{Inverse Quantum Fourier Transform}: Apply the inverse Quantum Fourier Transform (QFT\(^{-1}\)) to the first register to extract periodicity information:
    \[
    \ket{x} \to \frac{1}{\sqrt{2^n}} \sum_{k=0}^{2^n-1} e^{-2\pi i x k / 2^n} \ket{k}.
    \]
    This transforms the state to reveal peaks at multiples of \( 2^n / r \).

    \item \textbf{Measurement}: Measure the first register to obtain a value \( k \), which, with high probability, is approximately \( \frac{2^n}{r} \cdot j \) for some integer \( j \). This measurement provides a value for further processing.

    \item \textbf{Continued Fraction Expansion}: Use the continued fraction algorithm on the measured value \( \frac{k}{2^n} \) to approximate a fraction \( \frac{j}{r} \). If \( \gcd(j, r) = 1 \), then \( r \) is the order. Repeat the process if necessary to confirm \( r \).
\end{enumerate}

After obtaining \( r \), if \( r \) is even and \( a^{r/2} \not\equiv -1 \pmod{N} \), compute \( \gcd(a^{r/2} \pm 1, N) \) to find nontrivial factors of \( N \). The quantum circuit for Shor's algorithm, comprising initialization, modular exponentiation, inverse QFT, and measurement, is as follows:

\[
\Qcircuit @C=1em @R=1em {
    \lstick{\ket{0}^{\otimes n}} & \gate{H^{\otimes n}} & \multigate{1}{U_a} & \gate{\text{QFT}^{-1}} & \meter \\
    \lstick{\ket{0}^{\otimes m}} & \qw & \ghost{U_a} & \qw & \qw
}
\]

\subsubsection*{Comparison of Shor and Regev's algorithm}

\begin{table}[h]
\centering
\caption{Comparison of various aspects of Shor’s and Regev’s algorithms}
\begin{tabular}{|>{\raggedright\arraybackslash}m{3.5cm}|>{\raggedright\arraybackslash}m{5cm}|>{\raggedright\arraybackslash}m{5cm}|}
\hline
\textbf{Stage} & \textbf{Regev’s Algorithm} & \textbf{Shor’s Algorithm} \\ \hline
Initialisation &
$\sum_{z \in \{-D/2,\ldots,D/2-1\}^d} \rho_R(z) |z\rangle$ (discrete Gaussian over $\mathbb{Z}^d$) &\makecell[l]{$\sum_{z=0}^{2^n-1} |z\rangle$ (uniform superposition over $\mathbb{Z}$)} \\ \hline
Modular Exponentiation &
$f(z) = \prod_{i=1}^d a_i^{z_i+D/2} \bmod N$ (using the square-and-multiply method, leveraging multiplication of small integers) &
$f(z) = a^z \bmod N$ (using the precomputation method) \\ \hline
Lattice &
$\mathcal{L}$ (the target lattice) &
$r\mathbb{Z}$ (integer multiples of the period of modular exponentiation) \\ \hline
Dual Lattice after Fourier Transform &
$\mathcal{L}^*$ &
$\frac{1}{r}\mathbb{Z}$ \\ \hline
Measurement Result &
Approximation of $v \in \mathcal{L}^*/\mathbb{Z}^d$ &
Approximation of $\frac{k}{r}$, where $k$ is an integer \\ \hline
Post-processing &
Generate the lattice $\mathcal{L}'$ from $m = d+4$ measured $\hat{v}_i$ and apply LLL to recover $\mathcal{L}$ with constant success probability &
Apply the continued fraction algorithm to recover $r$ with constant success probability \\ \hline
\end{tabular}
\label{tab:comparison1}
\end{table}

We compare the procedures of Shor’s and Regev’s algorithms, both serving as instances of the Hidden Subgroup Problem and sharing numerous similarities, differing slightly in implementation, as shown in Table~\ref{tab:comparison1}.

\section{Appendix B: Complexity of Regev's algorithm}\label{B}

In this section, we review the time and space complexity of Regev's algorithm. We then discuss how the choice of modules, which differs from Regev's original setting, affects the overall complexity.

\begin{enumerate}
    \item \textbf{Time Complexity.}
    
The time complexity consists of three parts. 
\begin{itemize}
    \item 
In the initial state preparation, due to simplifications, we only need to prepare the \( O(\log d) \) most significant qubits in each dimension. The remaining qubits can be prepared in a uniform superposition using Hadamard gates. Using the technique in~\cite{Grover02}, we require \( \mathrm{poly}(\log d) \) operations to compute the rotation in each dimension. The total circuit size is thus \(O\bigl(d(\log D + \poly(\log d))\bigr)\).

\item
The most computationally expensive part is the modular exponentiation. The core of Regev's algorithm lies in computing the product of a series of \( O(\log d) \)-bit numbers \( a_i \) pairwise in a binary tree manner. The complexity of this part can be computed recursively as \( T(m) = 2 T(m/2) + M(O(\log d) \cdot m/2) \), where \( T(m) \) represents the number of gates required for multiplying \( m \) \( O(\log d) \)-bit numbers, and \( M(k) \) denotes the number of gates needed to compute the product of two \( k \)-bit numbers.  

The complexity \( T(m) \) and the overall complexity depend on \( M(k) \), which is determined by the specific choice of the multiplication method. In Regev’s paper, fast multiplication~\cite{HvD21} is adopted, with \( M(k) = O(k \log k) \), yielding a complexity of \( O(d \log^3 d) \). Since \( \log D \) binary trees are computed, this requires \( O(\log D \cdot d \log^3 d) \) quantum gates. Accounting for \( \log D \) squaring and multiplication operations, along with their uncomputation, each requiring \( O(n \log n) \) operations, the total complexity is \( O(\log D \cdot (d \log^3 d + n \log n)) \).  \footnote{If we use other multiplication methods, i.e., with \( M(k) = O(k^2) \), the recursive result yields \( O(d^2 \log^2 d) \) and a total size of \( O(\log D \cdot (d^2 \log^2 d + n^2)) \). The details can be found in Appendix C. }For convenience, the analysis in the main text of this paper adopts fast multiplication.

\item
In the final step, we apply the quantum Fourier transform over \( \mathbb{Z}_D^d \) to the \( \ket{\mathbf{z}} \) register. The circuit size for this step is \( O(d \log D \cdot \log((\log D)/\varepsilon)) \) if using approximate QFT with error \( \varepsilon \)~\cite{Copper02}. Taking \( \varepsilon = 1/\poly(d) \), the circuit size becomes \( O(d \log D \cdot (\log \log D + \log d)) \). Even when considering the standard QFT over \( \mathbb{Z}_D \) for d dimensions, the circuit size is \( O(d \log^2 D) \). 
\end{itemize}
Considering all three parts, the time complexity is dominated by the modular exponentiation. Following Regev’s optimal suggestion, taking \( \log D = O(\sqrt{n}) \) and \( d \approx \sqrt{n} \), the total complexity is \( O(n^{3/2} \log n) \). Specifically, it is dominated by \( O(\log D) \) multiplications and squaring operations on \( n \)-bit numbers, each requiring a circuit size of \( O(n \log n) \).
\item \textbf{Space Complexity.}

A significant challenge of Regev’s algorithm arises from its space complexity. The space complexity of Regev’s modular exponentiation part is \( O(n^{3/2}) \), resulting in an overall space complexity of \( O(n^{3/2}) \). This is significantly higher than Shor’s algorithm, which requires \( O(n) \).

The increase in space complexity is due to the irreversibility of the squaring operations. Specifically, in the square-and-multiply stage, we cannot implement an in-place squaring operation \( \ket{x} \xrightarrow{} \ket{x^2 \pmod{N}} \); instead, we must use out-of-place operations \( \ket{x}\ket{0} \xrightarrow{} \ket{x}\ket{x^2 \pmod{N}} \) and store all intermediate results \( \ket{x} \) after the squaring operations. These intermediate states are subsequently removed through uncomputation. Since \( \log D \) intermediate states are stored during the process, each requiring \( n \) qubits, this results in a total of \( O(n^{3/2}) \) ancilla qubits.
\end{enumerate}

\subsection*{Time complexity in different situations}
In this section, we analyze the computational complexity of Regev's algorithm under assumptions and settings different from those of Regev, as the initial state preparation and modular exponentiation steps we adopt differ from those in Regev’s approach.

\subsubsection*{Initialization}

In the initial state preparation, using the technique in~\cite{Grover02}, we require \( \mathrm{poly}(n) \) operations to compute the rotation of \( n \) qubits to prepare the initial state, with some ancilla qubits.

Due to simplifications, we only need to prepare the \( O(\log d) \) most significant qubits in each dimension, requiring \( \mathrm{poly}(\log d) \) operations. The remaining qubits can be prepared in a uniform superposition using Hadamard gates, necessitating a total of \( O(\log D) \) quantum gates. The total circuit size is \( d (\log D + \mathrm{poly}(\log d)) \) using~\cite{Grover02}.  

There are also other methods to prepare such an initial state, for example, the approach in~\cite{Long01}, which requires \( O(2^n \cdot n^2) \) operations to prepare an arbitrary \( n \)-qubit initial state, with a simpler structure and zero ancilla qubits. For the \( O(\log d) \) most significant qubits, we need \( O[\exp(O(\log d)) \cdot O(\log d)^2] = \mathrm{poly}(d) \) operations, and in total \( d (\log D + \mathrm{poly}(d)) \) using~\cite{Long01}.  

The circuit size using~\cite{Long01} is larger than that of Regev’s approach but simpler to implement. Note that the complexity of \( d \,\mathrm{poly}(d) \) may dominate the total complexity, depending on the degree of the polynomial \( \mathrm{poly}(d) \), which is determined by the constant factor associated with the number of \( O(\log d) \) qubits requiring precise preparation. Therefore, for asymptotic analysis, the complexity reduction in~\cite{Grover02} is essential. However, in practical experiments, a detailed analysis and comparison are required. Especially for small instances, the method in~\cite{Grover02} is often used due to the absence of ancilla qubits.

\subsubsection*{Modular exponentiation}

When considering the most computationally expensive part, the modular exponentiation, we note two points:

\textbf{1.} In the modular exponentiation scheme, Regev's algorithm must adopt the square-and-multiply method rather than precomputation in order to exhibit an asymptotic advantage in time complexity, while sacrificing space complexity. This represents a trade-off in Regev’s algorithm. However, when Regev’s algorithm also adopts the precomputation method for modular exponentiation, it may be easier to implement, but its asymptotic time and space complexities will be similar to those of Shor’s algorithm.

In the modular exponentiation of Regev's algorithm, we need to compute
\[
\prod_{i} a_i^{z_i + D/2} \pmod{N},
\].

The square-and-multiply algorithm first computes \( \prod_{i} a_i^{z_{i0}} \bmod N \) for the most significant binary bit $0$. The result is then squared and multiplied by the term corresponding to the next binary bit, \( \prod_{i} a_i^{z_{i1}} \bmod{N} \), squared, and so on, until all bits indexed by \( j \) are multiplied. Here, \( z_{ij} \) denotes the \( j \)-th bit of \( z_i + D/2 \), where 
\( j \in \{0, 1, \ldots, \log D - 1\} \), and \( j = 0 \) denotes the most significant bit.\footnote{For notational convenience in the subsequent discussion, we sometimes set the most significant bit as \( j = \log D - 1 \).}

As described in the main text, such an algorithm can lead to a reduction in time complexity. This algorithm requires the implementation of modular multiplication between \( a_i \), as well as modular squaring operations on the computational register. However, in quantum computing, the squaring operator cannot be implemented in-place, thus requiring a significant number of ancilla qubits to store intermediate states (as mentioned above, and the solution proposed in the theoretical section is aimed at mitigating this issue). Furthermore, after all operations are completed, we need to uncompute the intermediate states, which typically doubles the circuit size.

To avoid squaring operations, we can adopt a precomputation scheme similar to the modular exponentiation in Shor's algorithm (sometimes referred to as "fast exponentiation" in other works). In this method, each block first performs the multiplication of \( a_i^{2^j \cdot z_{ij}} \) for all \( i \), and then multiplies the result into the computational register.

Implementing such a circuit avoids the need for squaring operations, replacing them with multiplications between \( a_i^{2^j} \) and \( a_2^{2^j} \) for each \( j \), followed by multiplying the result into the computational register. However, since \( a_i^{2^j} \) grows rapidly with \( j \) and is generally no longer a small prime number, performing a binary tree-style multiplication between them does not reduce the circuit size. In this case, the asymptotic complexity of this approach is identical to that of the direct multiplication method, where \( a_i^{2^j} \) is sequentially multiplied in the computational register. This can be precomputed and compiled into circuits before the quantum circuit's construction. This approach is similar to that in Shor's algorithm.

In practice, this requires constructing a total of \( d \cdot \log D \approx O(n) \) distinct multiplications, which is asymptotically equivalent to the number of multiplications required by Shor's algorithm. Shor's algorithm can be regarded as a special case where \( d = 1 \) and \( \log D \approx O(n) \). In this situation, we do not know if Regev's approach has other advantages over Shor's implementation. (Considering the various variants of Shor's algorithm, its practical implementation may be more efficient.) One possible advantage to consider is that, after dimensionality increase, the precision requirements for the quantum Fourier transform might decrease.

Although it no longer offers an advantage in asymptotic complexity, adopting this scheme significantly simplifies the experimental implementation of Regev's algorithm, making it suitable for proof-of-principle experiments, in which we only need to precompute \( a_i^{2^j} \) and compile them into the classical--quantum multiplier.

Figures~\ref{fig:architecture1} to~\ref{fig:architecture3} illustrate the modular exponentiation architectures under these two distinct schemes for a small case d=2. (potential ancilla qubits and the subsequent quantum Fourier transform are ignored). and In the Appendix E, we will present the complete experimental circuits for both schemes based on Figures~\ref{fig:architecture1} and~\ref{fig:architecture3}.

\textbf{2.} In the modular exponentiation stage, the choice of the specific multiplier affects the final complexity result, but does not change the fact that the most time-consuming parts of the algorithm remain the \( n \)-bit multiplication and squaring. It also does not alter the optimization factor \( n^{1/2} \) of Regev's algorithm compared to Shor's algorithm.

The complexity of computing the product of a series of \( O(\log d) \)-bit numbers \( a_i \) pairwise in a binary tree manner can be computed recursively as \( T(m) = 2 T(m/2) + M(O(\log d)\cdot m/2) \), where \( T(m) \) represents the number of gates required for multiplying \( m \) \( O(\log d) \)-bit numbers, and \( M(k) \) denotes the number of gates needed to compute the product of two \( k \)-bit numbers.

The complexity \( T(m) \) and the overall complexity depend on \( M(k) \), which is determined by the specific choice of the multiplication module. In Regev’s paper, fast multiplication from~\cite{HvD21} is adopted, with \( M(k) = O(k \log k) \), yielding a complexity of \( O(d \log^3 d) \). Since \( \log D \) binary trees are computed, this requires \( O(\log D \cdot d \log^3 d) \) quantum gates. Accounting for \( \log D \) squaring and multiplication operations, along with their uncomputation, each requiring \( O(n \log n) \) operations, the total complexity is \( O(\log D \cdot (d \log^3 d + n \log n)) \). If we use other multiplication such as the schoolbook, which is more efficient for practical problems, with \( M(k) = O(k^2) \), we can recursively compute that each binary tree multiplication requires \( O(d^2 \log^2 d) \) quantum gates. Considering that each squaring, multiplication, and their uncomputation operations require \( O(n^2) \) gates, the total complexity becomes \( O(\log D \cdot (d^2 \log^2 d + n^2)) \). Following Regev’s optimal suggestion, taking \( \log D = O(\sqrt{n}) \) and \( d \approx \sqrt{n} \), the total complexity is \( O(n^{3/2} \cdot \log n) \) using fast multiplication or \( O(n^{5/2}) \) using schoolbook multiplication.
    
Considering all three parts, if the complexity of the initial state preparation does not exceed \( O(n^{3/2} \log n) \) or \( O(n^{5/2}) \), the time complexity is dominated by the \( n \)-bit multiplication and squaring operations.

A comparison of the time complexity for different modular exponentiation schemes and multiplier choices is detailed in Table~\ref{tab:comparison2}, which also includes variants of Regev’s algorithm.

\begin{table}[ht]
\centering
\setlength{\tabcolsep}{2pt} 
\begin{tabular}{|m{2cm}|m{5cm}|m{3cm}|m{2cm}|m{3cm}|}
\hline
Algorithm & Modular Exponentiation Scheme & Multiplier Choice & Circuit Size & Number of Qubits \\
\hline
\multirow{2}{*}{Shor}
  & \multirow{2}{*}{Precompute \(a^{2^j}\)}
    & Fast Multiplication & \( O(n^2 \log n) \) & \( O(n \log n) \) \\
  \cline{3-5}
    & & Schoolbook & \( O(n^3) \) & \( O(n) \) \\
\hline
\multirow{4}{*}{Regev}
  & \multirow{2}{*}{Square-and-Multiply}
    & Fast Multiplication & \( O(n^{3/2} \log n) \) & \( O(n^{3/2}) \) \\
  \cline{3-5}
    & & Schoolbook & \( O(n^{5/2}) \) & \( O(n^{3/2}) \) \\
  \cline{2-5}
  & \multirow{2}{*}{Precompute \(a_i^{2^j}\)}
    & Fast Multiplication & \( O(n^2 \log n) \) & \( O(n \log n) \) \\
  \cline{3-5}
    & & Schoolbook & \( O(n^3) \) & \( O(n) \) \\
\hline
\end{tabular}
\caption{Comparison of quantum algorithms for different modular exponentiation schemes and basic multiplier choices. Unless otherwise specified, Fast Multiplication refers to~\cite{HvD21}, and Schoolbook refers~\cite{Ric18}.}
\label{tab:comparison2}
\end{table}

\section{Appendix C: Time and space complexity under Intermediate Uncomputation method}\label{C}

In this appendix, we provide some proofs of the theorems stated in the main text.

For clarity of exposition, we use the term \emph{squarings} to refer to the number of mathematical squarings required in the underlying task (for Regev's algorithm, this number is \( m = \log D - 1 \)), and we use \emph{squaring operations} to refer to the number of actual squaring gates in the quantum circuit, including inverse squarings used for uncomputation. Throughout this appendix, the time complexity is measured in squaring operations, and the space complexity is measured in the number of \( n \)-bit registers.

\begin{theorem}[Simple Strategy]
Using the notation and simple strategy defined in the main text, the number of registers required to compute \( m \) squarings and the corresponding number of squaring operations are given by
\begin{align}
S_{\mathrm{simple}}(m,k)
  &= \left\lceil \frac{m+1}{k} \right\rceil + k - 1, \\
T_{\mathrm{simple}}(m,k)
  &= \Bigl( 2\Bigl\lceil \frac{m+1}{k} \Bigr\rceil - 1 \Bigr)(2k-1)
     + 2\Bigl( (m+1) - k \Bigl\lceil \frac{m+1}{k} \Bigr\rceil\Bigr)-4 .
\end{align}
In particular, \(T_{\mathrm{simple}}(m,k) = O(m)\); more precisely, one has
\[
  T_{\mathrm{simple}}(m,k) \le 4m-2 .
\]
\end{theorem}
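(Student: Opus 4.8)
The plan is to recast the simple strategy as a reversible pebble game on the path $0\!-\!1\!-\!\cdots\!-\!m$, in which a pebble on vertex $i$ stands for an $n$-bit register currently holding the partial product $v_i$ (with $v_{i+1}=v_i^2 M_{i+1}$). Since squaring is out-of-place, placing or removing a pebble on any vertex $i\ge 1$ costs exactly one squaring operation and is legal only while vertex $i-1$ is pebbled (both the map $\ket{v_{i-1}}\ket{1}\mapsto\ket{v_{i-1}}\ket{v_{i-1}^2}$ and its inverse read $v_{i-1}$). Vertex $0$ is special: $v_0$ is obtained from $\ket{1}$ by a single in-place multiplication, so toggling its pebble is free of squarings. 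Under this dictionary the number of registers is the peak pebble count and the number of squaring operations is the total number of place/remove events on vertices $\ge 1$; I would prove both formulas by analyzing these two quantities for the explicit strategy.

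Next I would fix notation and make the strategy concrete. Writing $B=\lceil (m+1)/k\rceil$, I designate as checkpoints the vertices $c_b=bk-1$ for $b=1,\dots,B-1$ together with the answer $c_B=m$; consecutive checkpoints are $k$ apart, and only the final block $(c_{B-1},m]$ can be short, of length $L=m-(B-1)k+1\le k$. The strategy has a forward phase that installs the checkpoints block by block (walk across a block, then erase its intermediates using the preceding checkpoint) and a backward phase that erases the internal checkpoints $c_{B-1},\dots,c_1$ in reverse order, each time transiently re-deriving the $k-1$ intermediates of the full block immediately below.

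For the space formula I would bound peak occupancy. At every instant the live pebbles lie inside (all checkpoints) $\cup$ (the $\le k-1$ intermediates of the single block being expanded), so the peak is at most $B+(k-1)$; to see this is attained I would exhibit the moment in the backward phase when $c_{B-1}$ is erased, where all $B$ checkpoints are still live and the $k-1$ intermediates of the full block below it are present. For the time formula I would count place/remove events per vertex: each internal checkpoint contributes $2$ (placed forward, removed backward), giving $2(B-1)$; the answer contributes $1$ (placed, never removed); each intermediate inside a full block is placed and removed once in each phase, contributing $4$, and there are $(k-2)+(B-2)(k-1)$ of them; and each of the $L-1$ intermediates of the short last block is touched only in the forward phase, contributing $2$. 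Summing and substituting $L=m-(B-1)k+1$ collapses to $2Bk-2B-2k+2m-1$, which I would then verify equals the stated $(2B-1)(2k-1)+2((m+1)-kB)-4$.

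Finally, the linear bound follows purely from the closed form: $B=\lceil (m+1)/k\rceil$ forces $kB\le m+k$ and $B\ge 1$, whence $T_{\mathrm{simple}}(m,k)=2Bk-2B-2k+2m-1\le 4m-2B-1\le 4m-3<4m-2$. I expect the genuine difficulty to be the peak-occupancy bookkeeping rather than the arithmetic: one must handle the free vertex $0$ and the possibly short last block without an off-by-one slip, and in particular verify that when $L<k$ the peak $B+k-1$ is realized in the backward (not the forward) phase, so that the space formula holds with equality in every case.
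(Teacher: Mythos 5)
Your proposal is correct and takes essentially the same route as the paper's (largely omitted) proof: the paper asserts the closed forms follow from a "straightforward step-by-step analysis" of the schedule, and your pebble-game bookkeeping—checkpoints every $k$ vertices, per-vertex place/remove counts of $2$ or $4$ with vertex $0$ free, peak occupancy $B+k-1$ attained in the backward phase—is precisely that analysis carried out in full, and your total $2Bk-2B-2k+2m-1$ indeed simplifies to the stated expression (and matches the paper's tabulated values). Your derivation of $T_{\mathrm{simple}}(m,k)\le 4m-2$ directly from the closed form is the "elementary bounding and simplification" route the paper explicitly mentions as an alternative to its per-register factor-of-two argument, so nothing in your approach diverges from the paper's intended proof.
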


\begin{proof}
The closed-form expressions for \( S_{\mathrm{simple}}(m,k) \) and \( T_{\mathrm{simple}}(m,k) \) follow from a straightforward step-by-step analysis of the simple strategy and are therefore omitted here. For convenience, one may first count the total number of multiplications appearing in the schedule and then subtract \(4\), since in the simple strategy the first register is always involved in four multiplications (and their inverses) that contribute only a constant overhead.

For the upper bound on the time cost, the intuition is as follows. In the direct method, each ancilla register (except for the final output register) participates in exactly two squaring operations, while in the simple strategy each ancilla register participates in at most four squaring operations (the final output register is still used only once). Hence the total number of squaring operations increases by at most a factor of 2 compared to the direct method, which uses \( 2m - 1 \) squarings, and we obtain \( T_{\mathrm{simple}}(m,k) \le 4m - 2 \). The same bound can also be derived rigorously from the analytic expression of \( T_{\mathrm{simple}}(m,k) \) by elementary bounding and simplification, and we omit the details.
\end{proof}

As an immediate consequence of this theorem, we obtain the following corollary.

\begin{corollary}[Simple Strategy (Space-Reduced, Linear-Time)]
The number of registers required to compute \( m \) squarings can be reduced to \( O(\sqrt{m}) \), while the time complexity remains \( O(m) \), increasing by at most a constant factor not exceeding 2 compared to direct computation.
\end{corollary}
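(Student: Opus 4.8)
The plan is to derive the corollary directly from the two ingredients supplied by the preceding theorem, which I may take as established: the closed-form register count $S_{\mathrm{simple}}(m,k) = \lceil (m+1)/k \rceil + k - 1$ and the uniform time bound $T_{\mathrm{simple}}(m,k) \le 4m-2$. The two assertions decouple cleanly. The time bound already holds for \emph{every} admissible block size $k$, so it transfers verbatim once a value of $k$ has been fixed and needs no further argument; the space claim, by contrast, is a one-parameter optimization over $k$.

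For the space bound I would treat $S_{\mathrm{simple}}$ as essentially the function $f(k) = (m+1)/k + k$ and balance its two terms. By the AM--GM inequality $(m+1)/k + k \ge 2\sqrt{m+1}$, with equality near $k = \sqrt{m+1}$, so I would choose the integer $k = \lceil \sqrt{m+1}\,\rceil$. With this choice one has $k \ge \sqrt{m+1}$, hence $(m+1)/k \le \sqrt{m+1}$ and therefore $\lceil (m+1)/k \rceil \le \lceil \sqrt{m+1}\,\rceil = k$; combining with $k \le \sqrt{m+1}+1$ gives $S_{\mathrm{simple}}(m,k) \le 2k - 1 \le 2\sqrt{m+1} + 1 = O(\sqrt{m})$. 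This establishes the claimed $O(\sqrt{m})$ register count.

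For the time complexity, the theorem yields $T_{\mathrm{simple}}(m,k) \le 4m - 2$ for the chosen $k$, so $T_{\mathrm{simple}} = O(m)$ as asserted. Comparing against direct computation, which uses $2m - 1$ squaring operations, the ratio is at most $(4m-2)/(2m-1) = 2$, which is precisely the stated constant factor not exceeding $2$.

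I expect no genuine obstacle: the statement is an immediate corollary of the theorem, and the only mild care required is handling the ceiling functions so that the $O(\sqrt{m})$ conclusion is a clean uniform bound rather than a purely asymptotic one. If a sharper constant were desired, I would optimize the integer $k$ exactly instead of rounding $\sqrt{m+1}$, but for the asymptotic statement the rounded choice is entirely sufficient.
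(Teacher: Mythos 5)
Your proposal is correct and follows essentially the same route as the paper: the paper likewise derives this corollary as an immediate consequence of the Simple Strategy theorem by choosing the block size $k \approx \sqrt{m}$ (balancing the two terms of $S_{\mathrm{simple}}$ to get roughly $2\sqrt{m}$ registers) and invoking the bound $T_{\mathrm{simple}}(m,k) \le 4m-2 = 2(2m-1)$ against the direct method's $2m-1$ squarings. Your explicit AM--GM balancing and careful ceiling handling simply make rigorous what the paper treats as immediate.
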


We now turn to the proof of the space lower bound. For this purpose, we first introduce the following lemma.

\begin{lemma}\label{lem:2nminus1}
Using a total of \( n \) registers, one cannot compute the result of the \((2^{n}-1)\)-th squaring, even if one does not insist on uncomputing all ancilla registers at the end.
\end{lemma}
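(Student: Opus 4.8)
The plan is to recast the statement as a lower bound for the \emph{reversible pebble game on a line}. I would identify the chain of values $x^{2^0}, x^{2^1}, \dots$ with the vertices $0,1,2,\dots$ of a path, where a pebble on vertex $j$ means that a register currently holds $x^{2^j}$. An out-of-place squaring $\ket{x^{2^j}}\ket{0} \to \ket{x^{2^j}}\ket{x^{2^{j+1}}}$ places a pebble on $j+1$ (consuming an empty register) and is legal only if $j$ is pebbled, and its inverse removes the pebble on $j+1$ under the same condition. The base value $x=x^{2^0}$ is produced from the always-available control registers by a controlled multiplication, so vertex $0$ behaves as a \emph{free source}: it may be pebbled or unpebbled whenever a register is free. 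In this language, $n$ registers correspond to $n$ pebbles, and ``computing the result of the $(2^n-1)$-th squaring'' is exactly ``placing a pebble on vertex $2^n-1$''; the cleanup-free hypothesis says only that we need not return to a (near-)empty configuration. Hence it suffices to show that the furthest reachable vertex $R(n)$ satisfies $R(n)\le 2^n-2$.

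First I would argue by induction on $n$. The base case $n=1$ is immediate: a single register can only hold the input (vertex $0$), leaving no room for an output, so no squaring can be performed and vertex $1$ is unreachable. For the inductive step I would pivot at the midpoint $m=2^{n-1}$. Two easy observations set things up: (i) the first vertex $\ge m$ to receive a pebble must be exactly $m$, since pebbling any $w>m$ first requires $w-1\ge m$ to be pebbled; and (ii) every placement or removal at vertex $m+1$ requires $m$ to be pebbled at that instant, so a register is always committed to the gateway while the segment just above it is modified. The intended conclusion is that the upper segment $\{m+1,m+2,\dots\}$ can do no better than a free-source game based at $m$ with only $n-1$ movable pebbles; by the induction hypothesis this reaches at most $m+R(n-1)\le 2^{n-1}+(2^{n-1}-2)=2^n-2$, contradicting any schedule that attains $2^n-1=m+(2^{n-1}-1)$.

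The hard part will be turning observation (ii) into a rigorous bound on the upper segment. The difficulty is that pebbles may shuttle freely across the pivot: the gateway $m$ need not stay pebbled, and one can momentarily move several pebbles strictly above $m$, so the naive claim ``at most $n-1$ pebbles ever act above $m$'' is false as stated, and ``only $n-1$ pebbles are placed at this instant'' does not imply ``reachable with $n-1$ pebbles,'' since the history may have used the full budget. The correct accounting is that advancing the deepest upper vertex forces one to re-establish the gateway chain, which consumes a register, so the upper segment can never marshal more than $n-1$ effectively free pebbles for sustained progress. I expect the clean way to capture this is to strengthen the induction hypothesis so that it tracks simultaneously an auxiliary quantity---the furthest vertex reachable above a permanently held anchor using $q$ free pebbles, together with its ``reach-and-clean'' companion---and to establish the mutual recursion these satisfy, whose solution yields the exponential bound. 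The achievability direction is just the explicit anchor-and-extend strategy, so the real work is the matching upper bound, i.e. proving that no schedule beats it; that is precisely where the shuttling of pebbles between the two halves must be controlled.
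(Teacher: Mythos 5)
You have correctly identified the right framework, but be aware that the paper never actually proves this lemma: it is stated in Appendix~C and then used (via the $\mathrm{A}_n(G)\Leftrightarrow\mathrm{B}_n(G)$ equivalence) to derive the space lower bound, with the reversible-pebbling literature implicitly invoked. So your proposal must stand on its own, and as written it does not. The translation to the reversible pebble game on a line (vertex $0$ a free source, $n$ registers $=$ $n$ pebbles, target bound $R(n)\le 2^n-2$) is exactly right, the base case is right, and your diagnosis of the obstruction is honest — but the proposal ends precisely where the mathematics begins. The ``strengthened induction hypothesis'' and ``mutual recursion'' are named, never formulated, and no argument is given that any such recursion closes. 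This is the entire content of the lemma, so it is a genuine gap, not a presentational one. Your worry is well founded: the counting claim is simply false for a generic pivot (with four pebbles one can reach the configuration $\{7,9,10,11\}$, so all $n$ pebbles can sit at or above $2^{n-1}-1$), and for your pivot $m=2^{n-1}$ it is true only as a consequence of an invariant that itself needs proof; moreover, even granting it, basing the upper sub-game at $m+1$ gives reach $\le (m+1)+(2^{n-1}-2)=2^n-1$, which misses the target by exactly one, so one must base it at $m$ and count pebbles on $m$ itself inside the $n-1$ budget.

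The concrete missing ingredient is an invariant forcing a pebble to remain in the lower half, proved using reversibility — and note that any correct proof must use the removal rule somewhere, since in the irreversible game two pebbles walk arbitrarily far; your write-up states that rule but never exploits it. One clean completion along your own midpoint lines: prove, within the same induction on $n$, that every nonempty configuration reachable with budget $n$ has its lowest pebble at position $\le 2^{n-1}-1$. This follows from the induction hypothesis by a reverse-and-restrict argument: reverse the schedule (legal by reversibility) from that configuration back to the empty board; the pebble on the lowest position $j_1$ cannot be removed before $j_1-1$ is re-pebbled, so up to that moment the moves at positions $<j_1$ form a legal free-source game with budget $n-1$ that reaches $j_1-1$, whence $j_1-1\le 2^{n-1}-2$. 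With this invariant, at most $n-1$ pebbles ever occupy positions $\ge m=2^{n-1}$, and your split becomes valid: deleting all moves below $m$ and treating $m$ as a free source yields a legal budget-$(n-1)$ game that would pebble relative position $2^{n-1}-1$, contradicting the induction hypothesis. Without this (or an equivalent device such as your unproven mutual recursion), the proposal is a restatement of the difficulty rather than a resolution of it.
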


Using this lemma, we can now prove the space lower bound stated in the main text.

\begin{theorem}[Space Lower Bound]
Using \( n \) registers, one cannot complete the computation of \( 2^{n-1} \) squarings (together with their uncomputation). Consequently, for \( m \) squarings, at least \( \lceil \log_2(m+1) \rceil + 1 \) registers are required. This establishes a lower bound on the space complexity.
\end{theorem}

\begin{proof}
Fix integers \( n \ge 1 \) and \( G \ge 1 \), and consider the following two statements:
\begin{description}
\item[(A\(_n(G)\))] Using \( n \) registers, one can compute the result of the \(G\)-th squaring (without requiring final uncomputation of all ancillas).
\item[(B\(_n(G)\))] Using \( n+1 \) registers, one can complete the computation of \(G+1\) squarings together with full uncomputation (all ancillas are restored to their initial states at the end).
\end{description}

We claim that, for every \( n,G \), \((\mathrm{A}_n(G))\) and \((\mathrm{B}_n(G))\) are equivalent.

\medskip
\noindent\emph{(\(\mathrm{A}_n(G) \Rightarrow \mathrm{B}_n(G)\)).}
Assume \((\mathrm{A}_n(G))\). Using \( n+1 \) registers, we first run this computation on the first \( n \) registers to obtain the \(G\)-th squaring. We then perform one additional squaring and store the result in the \((n+1)\)-st register, obtaining the \((G+1)\)-st squaring. Finally, we reverse the first part of the computation on the first \( n \) registers, leaving the \((n+1)\)-st register untouched. This restores the first \( n \) registers to their initial states and yields \(G+1\) squarings with full uncomputation, proving \((\mathrm{B}_n(G))\).

\medskip
\noindent\emph{(\(\mathrm{B}_n(G) \Rightarrow \mathrm{A}_n(G)\)).}
Conversely, assume \((\mathrm{B}_n(G))\). Consider any computation that, using \( n+1 \) registers, performs \(G+1\) squarings and then uncomputes all ancillas. Let \( t \) be the first time step at which the \((G+1)\)-st squaring is present in some register. At the previous step \( t-1 \), the result of the \(G\)-th squaring is already stored, and at least one register is still idle (since the \((G+1)\)-st squaring has not yet been written). By discarding this idle register and all later operations (the final squaring and the subsequent uncomputation), we obtain a computation using at most \( n \) registers that produces the \(G\)-th squaring. Hence \((\mathrm{A}_n(G))\) holds.

\medskip
Thus, for all \( n \ge 1 \) and \( G \ge 1 \), \((\mathrm{A}_n(G))\) and \((\mathrm{B}_n(G))\) are equivalent.

Now apply Lemma~\ref{lem:2nminus1} with \( G = 2^{n} - 1 \). The lemma states that \((\mathrm{A}_n(2^{n}-1))\) is false; hence, by the above equivalence, \((\mathrm{B}_n(2^{n}-1))\) is also false. Therefore, using \( n+1 \) registers, one cannot complete the computation of
\[
(2^{n}-1) + 1 = 2^{n}
\]
squarings together with full uncomputation.

Let \( r = n+1 \). Then, for every integer \( r \ge 2 \), using \( r \) registers one cannot complete the computation of \( 2^{r-1} \) squarings with full uncomputation. (The case \( r = 1 \) is trivial, since with a single register one cannot both perform a squaring and return to the initial state.)

Finally, suppose that \( m \) squarings can be completed using \( r \) registers (with full uncomputation). From the above impossibility, we must have
\[
  m \le 2^{r-1} - 1,
\]
since otherwise one could complete \( 2^{r-1} \) squarings. Equivalently,
\[
  2^{r-1} \ge m + 1.
\]
Taking base-2 logarithms and rearranging gives
\[
  r - 1 \ge \log_2(m+1) \quad\Rightarrow\quad
  r \ge \log_2(m+1) + 1.
\]
Thus any algorithm that completes \( m \) squarings (with full uncomputation) must use at least
\[
  r_{\min} = \bigl\lceil \log_2(m+1) \bigr\rceil + 1
\]
registers, which proves the claimed lower bound on the space complexity.
\end{proof}

We now prove that we can achieve the aforementioned space lower bound through a recursive strategy.

\begin{definition}[Maximum squarings with \( n \) registers]
Let \( G(n) \) denote the maximum number of squarings that can be completed using \( n \) registers, including all necessary uncomputations. Specifically, \( G(n) \) is the number of squarings such that the intermediate results of all squarings can be uncomputed at the end. In our squaring computation chain, the first computation result does not involve any squaring and can always be computed. This result is defined as the \( 0 \)-th squaring result. We immediately observe that \( G(1) = 0 \), \( G(2) = 1 \), and \( G(3) = 3 \), as demonstrated in the example in the main text.
\end{definition}

We now prove that \( G(n) = 2^{n-1} - 1 \) by induction.

\begin{theorem}[Binary Recursion (Achieving the Space Lower Bound)]
The binary recursion strategy achieves the result \( G(n) = 2^{n-1} - 1 \).
\end{theorem}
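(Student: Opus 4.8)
The plan is to prove the two inequalities $G(n) \le 2^{n-1}-1$ and $G(n) \ge 2^{n-1}-1$ separately and combine them. The upper bound is already in hand: the Space Lower Bound theorem states that $n$ registers cannot complete $2^{n-1}$ squarings together with their uncomputation, so by the very definition of $G(n)$ we have $G(n) < 2^{n-1}$, i.e. $G(n)\le 2^{n-1}-1$. Hence the substance of the proof is the matching achievability bound $G(n)\ge 2^{n-1}-1$, which I would establish by induction on $n$, with base case $G(1)=0$ (recorded in the definition) and inductive step furnished by the recursion $G(n+1)\ge 2G(n)+1$. Granting this recursion, induction immediately gives $G(n)\ge 2^{n-1}-1$, since $G(n+1)\ge 2\bigl(2^{n-1}-1\bigr)+1 = 2^{n}-1$; together with the upper bound this yields $G(n)=2^{n-1}-1$.

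The core of the argument is therefore the constructive recursion $G(n+1)\ge 2G(n)+1$, realizing the binary divide-and-conquer strategy described in the main text. Write $m=G(n)$ and let $\mathcal{P}$ be an $n$-register procedure that advances the squaring chain from its source to the $m$-th result $x_m$ while uncomputing every intermediate, leaving one register holding $x_m$ and the remaining $n-1$ registers clean. With $n+1$ registers I would proceed in four stages: (i) run $\mathcal{P}$ to obtain $x_m$ in a register $a$, leaving $n$ clean registers; (ii) perform one out-of-place squaring $x_m\mapsto x_{m+1}$ into a fresh register, advancing the chain by one step; (iii) treating $x_{m+1}$ as the source of a new length-$m$ sub-chain, apply $\mathcal{P}$ again on a fresh set of $n$ registers (the register holding $x_{m+1}$ together with $n-1$ clean ones) to reach $x_{(m+1)+m}=x_{2m+1}$, cleaning all new intermediates; (iv) uncompute the leftover values $x_{m+1}$ and $x_m$, restoring the auxiliary registers to clean. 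A peak-usage count shows that throughout stages (i)--(iv) at most $n+1$ registers are simultaneously occupied, so the construction is admissible, and it advances the chain by $m+1+m=2m+1$ squarings while retaining only $x_{2m+1}$; this is exactly $G(n+1)\ge 2G(n)+1$.

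The main obstacle I expect is the bookkeeping of the ``source'' register in stages (iii) and (iv), rather than any deep idea. Two points need care. First, the second invocation of $\mathcal{P}$ must be \emph{rooted} at the computed value $x_{m+1}$ held in a register, not at the freely re-preparable constant $x_0$; this is legitimate because each link of the chain is obtained from its immediate predecessor alone, so $\mathcal{P}$ is oblivious to whether its source is a free constant or a held value, provided one register is committed to that source for the duration. Second, the final cleanup in stage (iv) cannot simply reverse stage (i), since uncomputing $x_m$ requires the predecessor $x_{m-1}$, which has already been erased; instead I would first uncompute $x_{m+1}$ using the still-present $x_m$ in $a$ (a single inverse squaring), and then remove $x_m$ by applying the inverse of $\mathcal{P}$ to register $a$ together with the $n-1$ clean registers, which internally reconstructs and re-erases the needed predecessors. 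Verifying that this final inverse pass coexists with the register holding $x_{2m+1}$ within the $n+1$ budget is the one place where the peak-usage count must be checked explicitly. With these points settled, the recursion holds and the induction closes.
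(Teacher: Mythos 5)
Your overall skeleton---induction on \(n\), the constructive recursion \(G(n+1)\ge 2G(n)+1\), and the matching upper bound imported from the Space Lower Bound theorem---is exactly the paper's proof. The gap is in stage~(iii), specifically in the claim that ``\(\mathcal{P}\) is oblivious to whether its source is a free constant or a held value, provided one register is committed to that source for the duration.'' This claim is false for the optimal procedure \(\mathcal{P}\), and it is load-bearing. The strength of \(\mathcal{P}\) comes precisely from the fact that its \(0\)-th result is erasable and recomputable at will: the binary-recursion schedule repeatedly uncomputes its source mid-computation to reclaim that register (in the three-register schedule realizing \(G(3)=3\), there is a moment when the three registers hold \(x_1,x_2,x_3\) and the source \(x_0\) is present nowhere). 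If instead the source must sit in a committed, never-erased register, then---by the same source-shifting equivalence the paper uses in its lower-bound proof---\(n\) registers can advance the chain by at most \(1+G(n-1)=2^{n-2}\) steps beyond that source (one free step to its successor, which then acts as a free source for the remaining \(n-1\) registers), not by \(G(n)=2^{n-1}-1\) steps. These quantities already differ at \(n=3\): in your inductive step for \(G(4)\ge 2G(3)+1\), stage~(iii) can reach only \(x_6\), not the required \(x_7\), within its allotted \(n\) registers. Equivalently, your asserted peak-usage count fails: rooting \(\mathcal{P}\) at a committed source costs one register more than \(\mathcal{P}\) itself at exactly the moments when \(\mathcal{P}\) would have erased its source, so stages~(ii)--(iii) as described need \(n+2\) registers rather than \(n+1\). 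Your stage~(iv), which uncomputes a ``still-present'' \(x_{m+1}\), confirms that you intend this committed reading, which is the problematic one.

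The repair is the observation the paper makes at this point: because register \(a\) holds \(x_m\) throughout the second phase, the value \(x_{m+1}\) is not merely a held value---it is \emph{freely} computable and uncomputable at any time by a single (inverse) out-of-place squaring against \(a\). Hence the second invocation of \(\mathcal{P}\) should treat \(x_{m+1}\) exactly as it treats its free \(0\)-th result, erasing and recomputing it whenever its schedule does so for \(x_0\); it then runs at full strength on the \(n\) registers other than \(a\) and reaches \(x_{2m+1}\). With this fix your stage~(ii) is absorbed into stage~(iii) as the first pseudo-source computation, at the end of stage~(iii) the value \(x_{m+1}\) is already uncomputed, and stage~(iv) reduces to applying \(\mathcal{P}^{-1}\) on \(a\) together with the \(n-1\) clean registers. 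Everything else in your argument---the base case, the induction arithmetic, and the use of the lower-bound theorem to cap \(G(n)\) from above---is sound and agrees with the paper.
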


\begin{proof}
We prove this by induction.

\textbf{Base case:} For \( n = 2 \), we have \( G(1) = 1 \), which is consistent with the definition.

\textbf{Inductive step:} Assume that the statement holds for all \( k \le n-1 \), i.e., \( G(k) = 2^{k-1} - 1 \). We will prove that \( G(n) = 2^{n-1} - 1 \).

We construct the recursive strategy as follows: First, we use \( n-1 \) registers to complete \( G(n-1) \) squarings and uncompute all intermediate results. At this point, only the result of the \( G(n-1) \)-th squaring is held in the registers. We then retain this result as the starting point for the next phase of computation.

In the next phase, the next squaring can either be computed or uncomputed at any point during the process. Hence, it effectively serves as the initial result of the previous phase (which was the \( 0 \)-th result). We can then use \( n-1 \) registers to perform \( G(n-1) + 1 \) additional squarings while simultaneously uncomputing all intermediate results. After this, we are left with only the results of the \( G(n-1) \)-th and \( 2G(n-1) + 1 \)-th squarings being held in the registers. Finally, we uncompute the \( G(n-1) \)-th result, thus completing all computations.

Therefore, we have the recurrence relation:
\[
G(n) \ge 2G(n-1) + 1.
\]

By the inductive hypothesis, we know that \( G(n-1) = 2^{n-2} - 1 \), so:
\[
G(n) \ge 2(2^{n-2} - 1) + 1 = 2^{n-1} - 1.
\]

Thus, we have shown that \( G(n) \ge 2^{n-1} - 1 \).

Finally, combining this with the space lower bound, we conclude that \( G(n) = 2^{n-1} - 1 \), completing the proof.
\end{proof}

We now turn to the time complexity. Let \( T_G(n) \) denote the total number of multiplications and inverses performed during the recursive process 
For convenience, we count also the multiplications acting on the ``0-th'' register (the initial input), so \( T_G(n) \) is slightly larger than the number of actual squarings and large integer multiplications. As concrete examples, one easily checks that
\[
T_G(2) = 3, \qquad T_G(3) = 9 ,
\]
consistent with the constructions given in the main text. Specifically, \( T_G(n) \) counts the total number of multiplication operations required. We will now prove that \( T_G(n) \le 3^{n-1} \).

\begin{definition}[Time cost of binary recursion]
Let \( T_G(n) \) denote the total number of multiplications (and their inverses) performed by the binary recursion strategy when using \( n \) registers to realize \( G(n) \) squarings together with full uncomputation of all ancillas. 
\end{definition}

\begin{theorem}[Binary Recursion (Time Cost)]
The time cost of the binary recursion strategy is \( T_G(n) \le 3^{n-1} \).
\end{theorem}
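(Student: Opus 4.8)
The plan is to prove the bound by induction on $n$, reducing everything to the one-step recurrence $T_G(n) \le 3\,T_G(n-1)$. For the base case I would verify the smallest instances directly from the schedule: with one register the strategy only computes the $0$-th result (a single multiplication, no uncomputation), so $T_G(1) = 1 = 3^{0}$, and with two registers the schedule is $p_0, p_1, u_0$ (compute the $0$-th result, compute the $1$st, uncompute the $0$-th), giving $T_G(2) = 3 = 3^{1}$. Once the recurrence is available, iterating it yields $T_G(n) \le 3\,T_G(n-1) \le \cdots \le 3^{\,n-1}\,T_G(1) = 3^{\,n-1}$, which is exactly the claim.

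The recurrence is read off from the binary-recursion construction already used to prove $G(n) = 2^{n-1}-1$. That construction realizes the chain of $G(n)=2G(n-1)+1$ squarings in three stages, each confined to $n-1$ registers: (i) a forward $(n-1)$-register subroutine that produces the $G(n-1)$-th result and uncomputes every earlier intermediate, at cost $T_G(n-1)$; (ii) a shifted forward subroutine that takes the retained $G(n-1)$-th result as input, produces the final $G(n)$-th result, and cleans back down to the $G(n-1)$-th result; and (iii) the exact reverse of stage (i), which erases the retained $G(n-1)$-th result and leaves only the $G(n)$-th result, again at cost $T_G(n-1)$. If stage (ii) can be charged at most $T_G(n-1)$ as well, summing the three stages gives $T_G(n) \le 3\,T_G(n-1)$ and closes the induction.

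The step I expect to be the main obstacle is precisely the cost accounting of stage (ii): it performs $G(n-1)+1$ squarings, one more than a standard $(n-1)$-register subroutine can realize, so it is not literally a copy of stage (i) and its cost requires separate control. The resolution I would use is that stage (ii) runs with the $G(n-1)$-th result held in a dedicated register as a ``free source'': because this source is supplied rather than computed and later uncomputed, the schedule of stage (ii) has the same recursive shape --- and hence the same move count --- as an ordinary $(n-1)$-register run, merely reaching one node further. To make this rigorous I would introduce a companion quantity $T_F(p)$ for the free-source variant on $p$ working registers and carry out a \emph{simultaneous} induction proving $T_G(p) \le 3^{\,p-1}$ and $T_F(p) \le 3^{\,p-1}$ together: phases (i) and (iii) contribute $T_G(n-1)$ while phase (ii) contributes $T_F(n-1)$, so $T_G(n) \le 2\,T_G(n-1) + T_F(n-1) \le 3\cdot 3^{\,n-2} = 3^{\,n-1}$, with the analogous decomposition of the free-source problem closing the bound for $T_F$. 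This equivalence between the supplied-source and computed-source schedules is the direct analogue of the $(\mathrm{A}_n)\Leftrightarrow(\mathrm{B}_n)$ argument used in the space--lower--bound proof, and the values $T_G(2)=3$, $T_G(3)=9$ provide a consistency check on the three-fold growth.
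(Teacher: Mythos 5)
Your proposal is correct and takes essentially the same route as the paper's proof: the same three-phase decomposition of the binary recursion, the same recurrence \( T_G(n) \le 3\,T_G(n-1) \), and the same induction (the paper uses the base case \( T_G(2)=3 \)). The only difference is that where the paper charges phase (ii) at \( T_G(n-1) \) without comment---implicitly relying on its convention that multiplications acting on the ``0-th'' register are also counted, so the shifted run from a retained source has exactly the move count of a standard \((n-1)\)-register run---you make that identification explicit through the companion free-source quantity \( T_F \) and a simultaneous induction, which is a harmless and arguably more rigorous refinement of the same argument.
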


\begin{proof}
We prove this by induction.

\textbf{Base case:} For \( n = 2 \), we have \( T_G(2) = 3 \), which is consistent with the definition.

\textbf{Inductive step:} Assume the statement holds for all \( k \le n-1 \), i.e., \( T_G(k) \le 3^{k-1} \). We now prove that \( T_G(n) \le 3^{n-1} \).

During the \( n \)-th recursive step, we first perform a squaring operation for \( G(n-1) \) (which requires \( T_G(n-1) \) operations). Then, we perform another squaring operation for \( G(n-1) \), and finally, to uncompute the result of \( G(n-1) \), we effectively perform another \( G(n-1) \)-like operation. Therefore, the total time required is at most three times the previous time complexity:
\[
T_G(n) \le 3 \cdot T_G(n-1).
\]
Thus, by the inductive hypothesis, we have:
\[
T_G(n) \le 3 \cdot 3^{n-2} = 3^{n-1}.
\]

This completes the proof.
\end{proof}

The quantity \( T_G(n) \) bounds from above the actual number of large-integer squarings (or multiplications) needed by the binary recursion strategy, because our counting also includes the multiplications acting on the initial register. Let \( T(m) \) denote the number of large-integer multiplications or squarings needed to realize \( m \) squarings under the binary recursion strategy. Whenever \( m = G(n) = 2^{n-1} - 1 \), we have
\[
  T(m) \le T_G(n) \le 3^{\,n-1}.
\].

For small values of \( n \), explicit constructions give
\[
  T_G(3) = 9,\quad T_G(4) = 25,
\]
and, correspondingly, the actual numbers of squarings satisfy
\[
  T(3) = 5,\quad T(4) = 17,
\]
all strictly below the general upper bound \( 3^{\,n-1} \). In practice, tighter bounds (or optimal schedules) can be sought via search algorithms, but for our asymptotic analysis the above upper bound suffices. 

Using the relation \( m+1 = 2^{n-1} \), we can rewrite the bound as
\[
  T(m) \le 3^{\log_2 (m+1)} = (m+1)^{\log_2 3},
\]
which is consistent with the \( O(m^{\log_2 3}) \) time complexity stated in the main text.

\begin{corollary}[Binary recursion (achieving the space lower bound)]
The space-complexity lower bound described above for \( m \) squarings can be achieved using the binary recursion strategy, with the time complexity not exceeding \( O(m^{\log_2 3}) \).
\end{corollary}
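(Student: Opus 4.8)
The corollary is essentially a bookkeeping assembly of three results already established: the \textbf{Space Lower Bound} theorem, the achievability identity \( G(n) = 2^{n-1}-1 \) for the binary recursion, and the time bound \( T_G(n) \le 3^{n-1} \). Since the statement asserts two things — that binary recursion uses the optimal register count \( \lceil \log_2(m+1)\rceil + 1 \), and that it runs in time \( O(m^{\log_2 3}) \) — the plan is to settle space-optimality and the running-time estimate in turn, and then reconcile the clean case \( m = 2^{n-1}-1 \) with an arbitrary target \( m \).

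For space-optimality, given a target count \( m \) of squarings I would set \( n = \lceil \log_2(m+1)\rceil + 1 \), the smallest integer satisfying \( 2^{n-1} \ge m+1 \). Then \( G(n) = 2^{n-1}-1 \ge m \), so by the achievability theorem the binary recursion realizes at least \( m \) squarings (with full uncomputation) using \( n \) registers, while the Space Lower Bound theorem forbids doing so with fewer than \( \lceil \log_2(m+1)\rceil + 1 = n \) registers. Hence \( n \) is exactly optimal and binary recursion attains it.

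For the running time I would apply \( T_G(n) \le 3^{n-1} \) with this same \( n \). Using the relation \( T(m) \le T_G(n) \) noted in the surrounding discussion (valid directly for \( m = 2^{n-1}-1 \), and for general \( m \) via the truncated schedule below), and substituting \( n-1 = \lceil \log_2(m+1)\rceil \le \log_2(m+1) + 1 \), I obtain
\[
T(m) \;\le\; 3^{\,n-1} \;\le\; 3\cdot 3^{\log_2(m+1)} \;=\; 3\,(m+1)^{\log_2 3} \;=\; O\!\bigl(m^{\log_2 3}\bigr),
\]
which is the claimed complexity; here I use \( 3^{\log_2(m+1)} = (m+1)^{\log_2 3} \).

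The main obstacle is that the pure doubling recursion naturally produces index \( G(n) = 2^{n-1}-1 \), whereas a general \( m \) may be strictly smaller, so I must exhibit a schedule ending \emph{exactly} at index \( m \) that still fits in \( n \) registers and within the time bound. I would build it from the same divide-and-conquer split that proves \( G(n) \ge 2G(n-1)+1 \): perform a leading block of \( 2^{n-2}-1 \) squarings on \( n-1 \) registers (feasible since \( G(n-1) = 2^{n-2}-1 \)), retain that result, recurse on the residual \( m - 2^{n-2} + 1 \le 2^{n-2}-1 \) squarings, and finally uncompute the retained block, with the extremal case \( m = 2^{n-1}-1 \) covered directly by the already-established construction. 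The delicate points are the off-by-one bookkeeping at the block boundary — ensuring the residual sub-task never exceeds \( G(n-1) \) — and checking that the three sub-phase costs sum to at most \( 3\cdot 3^{n-2} = 3^{n-1} \); once these are verified, the ceiling-function arithmetic in the register count is routine.
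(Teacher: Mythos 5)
Your proposal is correct and follows essentially the same route as the paper: it assembles the Space Lower Bound theorem, the inductive achievability result \( G(n) = 2^{n-1}-1 \), and the time bound \( T_G(n) \le 3^{n-1} \), then substitutes \( n = \lceil \log_2(m+1)\rceil + 1 \) to obtain \( T(m) = O\!\bigl(m^{\log_2 3}\bigr) \). Your explicit truncated schedule for \( m \) not of the form \( 2^{n-1}-1 \) is in fact slightly more careful than the paper's own treatment, which states the time bound only for the exact case \( m = G(n) = 2^{n-1}-1 \) and leaves the rounding to general \( m \) implicit.
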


We finally turn to the \( k \)-ary recursion, which is a generalization of binary recursion. In the proof of binary recursion, we clean and uncompute intermediate states after every two stages, leaving only one register at the end; this process is called binary recursion. If, instead of two stages, we clean and uncompute all intermediate states after every \( k \) stages, we obtain the generalization to \( k \)-ary recursion.

In \( k \)-ary recursion, the number of squarings that can be computed grows at a rate approximately proportional to \( k^n \) with respect to the number of registers \( n \). Meanwhile, the number of squaring (and inverse) operations grows at a rate approximately proportional to \( (2k-1)^n \). These observations lead to the following conclusion:

\begin{theorem}[k-ary recursion]
Using a \( k \)-ary recursion strategy, the space complexity is \( O(\log m) \), and the time complexity is \( O\!\bigl(m^{\log_k(2k-1)}\bigr) \).
\end{theorem}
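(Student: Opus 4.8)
The plan is to generalize the two recurrences that drive the binary-recursion proof already given above, namely $G(n)=2G(n-1)+1$ and $T_G(n)\le 3\,T_G(n-1)$ (with $3=2k-1$ at $k=2$), to arbitrary fixed arity $k$, and then translate the resulting closed forms into the stated asymptotics. I will organize the argument around a single recursion \emph{level}: a sub-procedure that reaches the $G$-th squaring using $r$ registers with full uncomputation is promoted to one reaching roughly $kG$ squarings. Throughout, I track the recursion depth $\ell$ and relate it to the register count at the end.

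First I would set up the $k$-ary construction for one level by concatenating $k$ copies of the sub-procedure, each advancing from the previous checkpoint, linked by $k-1$ single bridging squarings that play the role of the ``$+1$'' step in the binary case. This yields the space recurrence
\[
  G_k(\ell)\ \ge\ k\,G_k(\ell-1)+(k-1),
\]
which telescopes via $G_k(\ell)+1=k\bigl(G_k(\ell-1)+1\bigr)$ to $G_k(\ell)+1=\Theta(k^{\ell})$. Each level consumes $k-1$ additional registers to hold the intermediate checkpoints, so the total register count is $n=n_0+(k-1)\ell$. Inverting $m+1=\Theta(k^{\ell})$ gives $\ell=\log_k\!\bigl((m+1)/x_0\bigr)$ and hence $n=\Theta\bigl((k-1)\log_k m\bigr)=O(\log m)$ for fixed $k$, matching the closed form $S_{\mathrm{rec}}(m;k,x_0)$ stated in the main text and establishing the space bound.

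Next I would establish the time recurrence. Completing one level requires $k$ forward executions of the sub-procedure together with $k-1$ reversed executions used to uncompute all but the final checkpoint, for a total of $2k-1$ sub-procedure runs, giving
\[
  T_k(\ell)\ \le\ (2k-1)\,T_k(\ell-1),
\]
and hence $T_k(\ell)\le t_0\,(2k-1)^{\ell}$. Substituting $\ell=\log_k\!\bigl((m+1)/x_0\bigr)$ and applying the identity $a^{\log_b c}=c^{\log_b a}$ in the form $(2k-1)^{\log_k(m+1)}=(m+1)^{\log_k(2k-1)}$ converts this into $T(m)\le t_0\,((m+1)/x_0)^{\log_k(2k-1)}=O\bigl(m^{\log_k(2k-1)}\bigr)$, recovering both $T_{\mathrm{rec}}(m;k,x_0)$ and the claimed time complexity.

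The main obstacle is the constructive space argument, i.e.\ verifying that one recursion level truly costs only $k-1$ extra registers. The delicate point is the bookkeeping during cleanup: I must schedule the $k-1$ reversals so that, while holding the far checkpoint and uncomputing the intermediate ones, the number of simultaneously occupied registers never exceeds $n_0+(k-1)\ell$. This generalizes the binary argument in which the single checkpoint register doubled as the input to the next segment and was reversed last; here the checkpoints must be peeled off from the far end inward, using the fact that reversing a segment requires only \emph{its own} input checkpoint to be present. Once this scheduling is shown to stay within budget for $k-1$ checkpoints, the two recurrences and the final logarithmic-exponent manipulation are routine, and the variable-arity statement and the depth/register correspondence follow by choosing $k$ as a function of $m$, exactly as recorded in the surrounding corollaries.
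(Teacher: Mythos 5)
Your proposal is correct and follows essentially the same route as the paper: it generalizes the binary-recursion recurrences $G(n)\ge 2G(n-1)+1$ and $T_G(n)\le 3\,T_G(n-1)$ to $G_k(\ell)\ge k\,G_k(\ell-1)+(k-1)$ and $T_k(\ell)\le (2k-1)\,T_k(\ell-1)$, telescopes them, and converts exponents via $(2k-1)^{\log_k(m+1)}=(m+1)^{\log_k(2k-1)}$, exactly matching the paper's closed forms $S_{\mathrm{rec}}$ and $T_{\mathrm{rec}}$. In fact, your treatment of the cleanup scheduling (peeling checkpoints from the far end inward, each reversal needing only its own input checkpoint) makes explicit a constructive step that the paper only asserts, so the proposal is, if anything, more complete than the paper's own argument.
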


From this, we derive the following corollary:

\begin{corollary}[k-ary recursion (log-space, near-linear time)]
The asymptotic time complexity can be reduced to \( O(m^{1 + \epsilon}) \) using a \( k \)-ary recursion strategy, where \( \epsilon \) is a positive constant given by
\[
  \epsilon = \log_k (2k-1) - 1.
\]
At the same time, the space complexity is kept at \( O(\log m) \). Moreover, \(\epsilon\) can be made arbitrarily small by choosing \(k\) sufficiently large.
\end{corollary}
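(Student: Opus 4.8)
The plan is to derive this corollary directly from the preceding $k$-ary recursion theorem, which already establishes that the strategy runs in time $O(m^{\log_k(2k-1)})$ using $O(\log m)$ registers. The entire content of the corollary is then an elementary analysis of the exponent $\log_k(2k-1)$, so I would not re-derive the recursion itself but simply invoke the theorem.

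First I would set $\epsilon := \log_k(2k-1) - 1$, so that the time bound $O(m^{\log_k(2k-1)})$ becomes $O(m^{1+\epsilon})$ by construction; this makes the stated formula for $\epsilon$ a definition rather than something to prove. The substantive claims are that $\epsilon > 0$ and that $\epsilon \to 0$ as $k \to \infty$. For positivity, I would rewrite
\[
\epsilon = \log_k(2k-1) - \log_k k = \log_k\!\frac{2k-1}{k} = \log_k\!\Bigl(2 - \tfrac{1}{k}\Bigr),
\]
and note that for every integer $k \ge 2$ the argument satisfies $1 < 2 - 1/k < 2$, while the base obeys $k \ge 2 > 1$, so the logarithm is strictly positive.

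For the vanishing of $\epsilon$, I would use the same rewriting together with the monotonicity of $\log_k$ and the bound $2 - 1/k < 2$ to obtain
\[
0 < \epsilon = \log_k\!\Bigl(2 - \tfrac{1}{k}\Bigr) < \log_k 2 = \frac{\ln 2}{\ln k},
\]
and then let $k \to \infty$, so that $\ln 2 / \ln k \to 0$ forces $\epsilon \to 0$. This shows $\epsilon$ can be made arbitrarily small by enlarging $k$. The space bound $O(\log m)$ for each fixed $k$ is inherited verbatim from the theorem.

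The only real subtlety---which I would flag rather than treat as a genuine obstacle---is that the space estimate is $O(\log m)$ only for fixed $k$: by the closed form $S_{\mathrm{rec}}(m;k,x_0) = n_0 + (k-1)\log_k((m+1)/x_0)$, the hidden constant is approximately $(k-1)/\log_2 k$, which grows with $k$. Hence driving $\epsilon$ toward $0$ by increasing $k$ silently inflates the constant in the space bound, so the two limits are not uniform in $k$. I would therefore state explicitly that the corollary describes, for each chosen $k$, a single point on the time--space trade-off curve rather than a simultaneous space and time optimum.
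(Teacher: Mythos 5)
Your proposal is correct and follows essentially the same route as the paper: the corollary is obtained by invoking the $k$-ary recursion theorem (time $O(m^{\log_k(2k-1)})$, space $O(\log m)$) and then treating $\epsilon = \log_k(2k-1)-1$ as a definition, with elementary algebra showing $\epsilon = \log_k(2-1/k) > 0$ and $\epsilon \to 0$ as $k \to \infty$. Your explicit positivity/limit computation and the flag about the space constant $\frac{k-1}{\log_2 k}$ growing with $k$ are details the paper only states implicitly (in the main-text discussion of the trade-off), so your write-up is, if anything, more complete than the paper's.
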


In fact, we can provide explicit expressions for the space and time complexities of the recursive strategy. To do this, we first fix the basic recursive starting unit, namely the chosen base unit \( x_0 \), which uses \( n_0 \) additional registers to compute \( x_0 \) squarings in time \( t_0 \). Typically, we choose \( x_0 = 1 \), \( n_0 = 1 \), and \( t_0 = 1 \).

Fixing \( x_0 \), \( n_0 \), and \( t_0 \) according to this base unit, and assuming that \( \log_k\!\left(\frac{m+1}{x_0}\right) \) is a positive integer, we obtain the following expressions for the space and time costs:
\begin{align}
S_{\mathrm{rec}}(m;k,x_0) &= n_0 + (k-1)\,\log_k\!\left(\frac{m+1}{x_0}\right), \\
T_{\mathrm{rec}}(m;k,x_0) &\le t_0 \left(\frac{m+1}{x_0}\right)^{\log_k(2k-1)}.
\end{align}
These expressions can be used as asymptotic estimates for the actual space and time costs.

In the above formulas, we can consider a special variant, where instead of fixing \( k \), we choose the recursion depth \( \ell \) to be a fixed constant and set
\[
  k \approx \Bigl(\frac{m+1}{x_0}\Bigr)^{1/\ell}.
\]
As the problem size grows, this approach leads to linear time complexity and sublinear space complexity. In this case, we have approximately
\[
  S \approx (k-1)\,\ell, \qquad T \approx (2k-1)^{\ell}.
\]
Since \( \ell \) is a fixed constant and \( k \approx \bigl((m+1)/x_0\bigr)^{1/\ell} \), we obtain
\[
  S = O\!\bigl(\sqrt[\ell]{m}\bigr), \qquad T = O(m).
\]

This leads to the following conclusion.

\begin{corollary}[Variable-Arity Recursion (Linear Time, Sublinear Space)]
For any integer \( \ell \ge 1 \), there exists a variable-arity recursion strategy that computes \( m \) squarings using only \( O\!\left(\sqrt[\ell]{m}\right) \) registers, while keeping the time complexity at \( O(m) \). This strategy can be realized by choosing the recursion arity as \( k \approx \sqrt[\ell]{m + 1} \).
\end{corollary}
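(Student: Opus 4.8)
The plan is to derive this corollary directly from the explicit $k$-ary recursion formulas established earlier in this appendix, namely
\[
S_{\mathrm{rec}}(m;k,x_0) = n_0 + (k-1)\,\log_k\!\bigl((m+1)/x_0\bigr),
\qquad
T_{\mathrm{rec}}(m;k,x_0) \le t_0\,\bigl((m+1)/x_0\bigr)^{\log_k(2k-1)} .
\]
The key conceptual move is to stop treating $k$ as a fixed constant and instead fix the \emph{recursion depth} $\ell = \log_k\!\bigl((m+1)/x_0\bigr)$, which forces the arity to scale with the problem size as $k = \bigl((m+1)/x_0\bigr)^{1/\ell}$. For the standard base unit $x_0 = 1$ this is exactly the prescription $k \approx \sqrt[\ell]{m+1}$ in the statement.

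First I would substitute the relation $(m+1)/x_0 = k^{\ell}$ into both formulas. For the space cost this gives $S_{\mathrm{rec}} = n_0 + (k-1)\ell$; since $n_0$ and $\ell$ are constants and $k = \Theta\bigl(m^{1/\ell}\bigr)$, this is $O\!\bigl(\sqrt[\ell]{m}\bigr)$. For the time cost the exponent collapses cleanly: using $(m+1)/x_0 = k^{\ell}$,
\[
\bigl((m+1)/x_0\bigr)^{\log_k(2k-1)}
= \bigl(k^{\ell}\bigr)^{\log_k(2k-1)}
= (2k-1)^{\ell}.
\]
Bounding $(2k-1)^{\ell} \le (2k)^{\ell} = 2^{\ell}k^{\ell} = 2^{\ell}(m+1)/x_0$ and recalling that $t_0, x_0, \ell$ are fixed constants yields $T_{\mathrm{rec}} = O(m)$. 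The essential observation is that although the per-level exponent $\log_k(2k-1) = 1 + o(1)$ stays just above linear, raising it against a fixed number of levels $\ell$ only contributes the constant multiplicative overhead $2^{\ell}$ on top of the genuinely linear term $k^{\ell}$.

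The step I expect to be the main obstacle is the integrality issue hidden in the ``$\approx$'' of the statement: the closed-form formulas were derived under the assumption that $\log_k\!\bigl((m+1)/x_0\bigr)$ is a positive integer, whereas for a generic $m$ the quantity $\sqrt[\ell]{(m+1)/x_0}$ is neither an integer nor an admissible arity. To handle this I would choose the integer arity $k = \bigl\lceil \sqrt[\ell]{(m+1)/x_0}\bigr\rceil$ and pad the task up to the nearest ``nice'' size $m'+1 = x_0 k^{\ell} \ge m+1$. Since $k = \sqrt[\ell]{(m+1)/x_0} + O(1)$ with $\ell$ fixed, a short estimate shows $x_0 k^{\ell} \le C_{\ell}\,(m+1)$ for a constant $C_{\ell}$ depending only on $\ell$ (indeed $C_{\ell} \to 1$ as $m \to \infty$). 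Consequently the space and time for the padded instance exceed the ideal values by at most constant factors, preserving the $O\!\bigl(\sqrt[\ell]{m}\bigr)$ space and $O(m)$ time bounds. The remaining check — that executing the strategy for $m'$ squarings and discarding the $m'-m$ trailing results still correctly computes the first $m$ squarings with all ancillas uncomputed — follows routinely from the nested structure of the recursion.
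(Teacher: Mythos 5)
Your proposal is correct and follows essentially the same route as the paper: both substitute $k \approx \bigl((m+1)/x_0\bigr)^{1/\ell}$ with fixed recursion depth $\ell$ into the explicit $k$-ary formulas, collapsing the time bound via $(2k-1)^{\ell} \le 2^{\ell} k^{\ell} = 2^{\ell}(m+1)/x_0 = O(m)$ and the space to $(k-1)\ell = O\!\bigl(\sqrt[\ell]{m}\bigr)$. Your explicit handling of the integrality issue (rounding $k$ up and padding to $m'+1 = x_0 k^{\ell}$) is a tightening of a detail the paper glosses over with its ``$\approx$'' notation, not a different argument.
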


For practical problems, the above variable-arity recursion strategy also fixes an effective \( k \) for a given problem size \( m \), so it still belongs to the same family of strategies as the \( k \)-ary recursion, just viewed from a different perspective.

Additionally, note that the simple strategy is actually a special case of \( \ell = 2 \). In this strategy, we often achieve a relatively balanced practical performance.

Finally, by substituting \( m = \log D - 1 \approx C \sqrt{n} \), we can derive all the computational complexity results for the Regev algorithm presented in the main text.

\section{Appendix D: Choice of parameters}\label{D}
In Regev's theoretical analysis, the parameters are required to satisfy the following conditions, which ensure that the algorithm succeeds with at least constant probability:

\begin{itemize}
    \item \(\boldsymbol{d}\): The lattice dimension parameter. In the main setting, \(d\) is chosen as \(\approx\sqrt{n}\), which balances the complexity of the quantum part and the success probability of the lattice reduction step. 
    
    \item \(\boldsymbol{R}\): The Gaussian width parameter for preparing the discrete Gaussian superposition. The quantum procedure requires \( R > \sqrt{2d} \). To ensure that classical lattice reduction (e.g., LLL) recovers the target short vector with high probability, one needs
    \[
        R > 2^{d + n/d} \, T,
    \]
    where \( T = \exp(c n/d) \) is a heuristic upper bound on the norm of a nontrivial vector in \( L \setminus L_0 \), with \( c > 0 \) a constant. Substituting \( d = \sqrt{n} \) gives \( R \approx \exp(C \sqrt{n}) \).
    
    \item \(\boldsymbol{D}\): The discretization parameter for the output of the quantum Fourier transform. It is chosen as a power of two satisfying
    \[
        2\sqrt{d} \, R \leq D < 4\sqrt{d} \, R.
    \]
    Under the main setting \( d = \sqrt{n} \) and \( R \approx \exp(C\sqrt{n}) \), we have
    \[
        D \approx \Theta({n}^{1/4} \, \exp(C\sqrt{n})), \quad \log D = \Theta(\sqrt{n}).
    \]
    This is used for further analysis of computational complexity.

    \item \(\boldsymbol{\delta}\): The distance bound between the measured vector \(w\) and the true coset representative \(v \in L^*\) after the quantum measurement. With high probability,
    \[
        \delta \approx \frac{\sqrt{d}}{\sqrt{2}R},
    \]
    hence larger \(R\) leads to smaller measurement error.

    \item \(\boldsymbol{S}\): The embedding scaling parameter in the enlarged lattice \(\mathcal{L'}\) used for post-processing. It is typically chosen as \(S = \delta^{-1}\), balancing the requirement that short vectors in \(\mathcal{L'}\)  correspond to valid vectors in \(\mathcal{L}\) and that the embedding does not excessively amplify the error coordinates.

    \item \(\boldsymbol{m}\): The number of samples. if $m \geq d+4$, then with probability at least $1/4$ the sampled vectors $w_1,\dots,w_m$ generate the entire lattice \(\mathcal{L^*}/\mathbb{Z}^d\). In practice, we typically fix $m=d+4$ in the algorithm.

\end{itemize}
According to the parameter selection rules described above, the parameters in experiment are chosen as follows:
The integer to be factored is \( N=35 \), which has \( n=6 \) bits. Therefore, the computational register requires \( n=6 \) qubits, corresponding to the six qubits \( q_6 - q_{11} \) in Figures~\ref{fig:architecture1} to~\ref{fig:architecture3}.
Other parameters are selected below:

\begin{enumerate}

    \item \(\boldsymbol{d}\): 
    \(d\) is chosen as \(\sqrt{n} \approx 2\). Correspondingly, the \(b_i\) are selected as the first two small primes, \(b_1 = 2, b_2 = 3\), and \(a_i = b_i^2\).
    
    \item \(\boldsymbol{R}\): For small-scale use, \( R > \sqrt{2d} \) is sufficient. The requirement \( R > 2^{d + n/d} \, T \), where \( T \) is a parameter defined in a heuristic assumption, serves as an asymptotic lower bound, which can be ignored in small-scale examples. In this example, we require \( R > \sqrt{2 \cdot 2} = 2 \), and we can choose \( R \) as a number slightly larger than 2.

    \item \(\boldsymbol{D}\): 
    It is chosen as a power of two satisfying
    \[
        2 \sqrt{d} \, R \; \le \; D \; < \; 4 \sqrt{d} \, R.
    \]
    Here, \(\log D\) represents the number of qubits required for each dimension of the lattice. In this example, \(D=8\) is sufficient. Therefore, each dimension of the lattice requires 3 qubits, corresponding to qubits \(q_0 - q_2\) and \(q_3 - q_5\) in Figures~\ref{fig:architecture1} to~\ref{fig:architecture3}.

    \item \(\boldsymbol{\delta}\): 
    \(\delta\) characterizes an upper bound on the error in lattice operations and is an implicit parameter, roughly given by
    \[
        \delta \approx \frac{\sqrt{d}}{\sqrt{2} R}.
    \]
    Hence, \(\delta\) is chosen as \(\frac{1}{R} \approx \frac{1}{2}\), which also determines the size of \(S\) in the post-processing step.

    \item \(\boldsymbol{S}\): 
    \(S\) is used for scaling in the post-processing step and is typically chosen as \(S = \delta^{-1}\), so we set \(S = 2\). In practice, this choice represents a trade-off: selecting a larger or smaller \(S\) may also allow finding suitable elements. In our experiments, for convenience in the classical post-processing, we directly set \( S = 8 \) so that \( \mathcal{L} \) becomes an integer lattice. Equivalently, one could take \( S = 2 \) and rescale the entire lattice by a factor of four, which would also lead to correct results.

    \item \(\boldsymbol{m}\): 
    The parameter \( m \) is typically fixed as \( m = d + 4 \) to guarantee a success probability bounded below by a constant, although smaller values of \( m \) may still succeed with non-negligible probability. In our experiments, we set \( m = 4 \).

\end{enumerate}

For small-scale experiments, the parameter selection exhibits flexibility. For
example, requiring \( D \) to be slightly larger than \( R \) ensures that the Gaussian distribution is approximately truncated at \( D \), facilitating error analysis. In practical applications, selecting a larger \( R \) such that it exceeds \( D \) increases truncation error but reduces the peak width \( \delta \) after the Fourier transform, potentially still enabling successful factorization. In fact, for very small-scale factorization, setting \( R \to \infty \) is sufficient, equivalent to preparing the initial state in a uniform superposition, as discussed in the experiment section. However, for slightly larger numbers, theoretical proof or simulation verification is still required.

\section{Appendix E: Experimental ciruit structure}\label{E}

\subsection*{Different architectures of modular exponentiation in Regev’s algorithm}

If controlled multiplication and squaring are regarded as the building blocks, the quantum circuit is illustrated in Figure~\ref{fig:architecture1}, where the potentially required ancilla qubits are ignored.

The circuit in Figure~\ref{fig:architecture1} requires the implementation of modular multiplication of \( a_1 \) and \( a_2 \), as well as modular squaring operations on the computational register. However, in quantum computing, the construction of a squaring operator introduces challenges. Since modular squaring operations are irreversible, they cannot be implemented in-place, thus requiring a significant number of ancilla qubits to store intermediate states. Furthermore, after all operations are completed, we need to uncompute the intermediate states, which typically doubles the circuit size.

\begin{figure}[ht]
\centering
\includegraphics[width=0.7\textwidth]{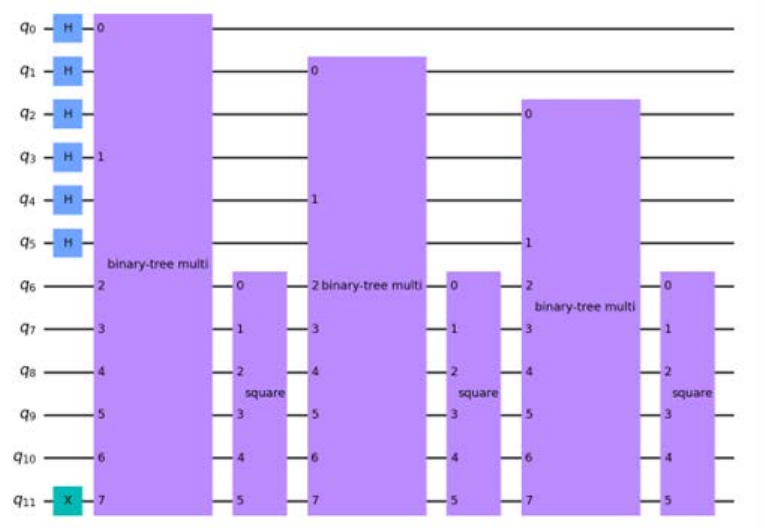}
\caption{Square-and-Multiply Method: The control registers \( q_0, \ldots, q_2 \) and \( q_3, \ldots, q_5 \) represent the two-dimensional variables \( (z_1, z_2) \), while the computational register \( q_6, \ldots, q_{11} \) holds the result of the modular exponentiation, initialized to \(\ket{1}\) by an X gate.}
\label{fig:architecture1}
\end{figure}

We can adopt a scheme similar to the modular exponentiation in Shor's algorithm to avoid squaring operations (referred to as "fast exponentiation" in other works). The corresponding circuit is illustrated in Figure~\ref{fig:architecture2}.

\begin{figure}[ht]
\centering
\includegraphics[width=0.5\textwidth]{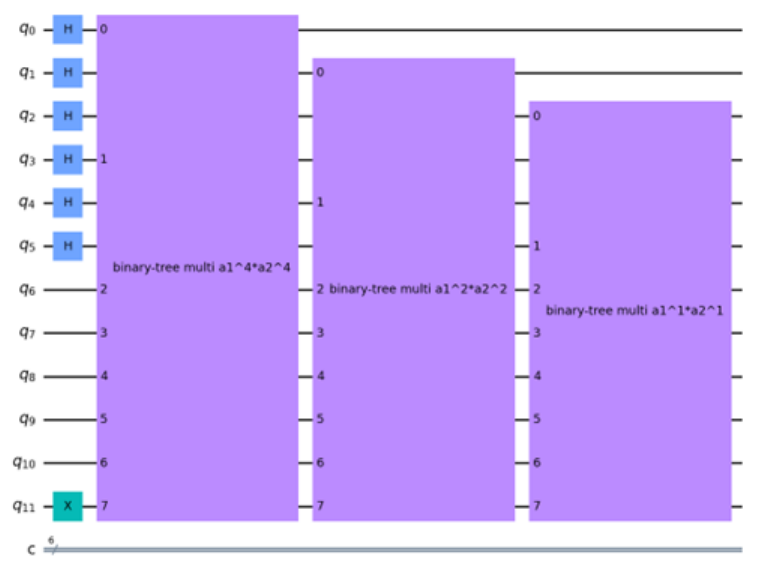}
\caption{Precomputation Method 1: Direct Computation of \( \prod_{i} a_i^{2^j}\bmod N \), Avoiding Squaring Operations}
\label{fig:architecture2}
\end{figure}

In this method, each block first performs the multiplication between \( a_1^{2^j\cdot z_{1j}} \) and \( a_2^{2^j \cdot z_{2j}} \), and then multiplies the result into the computational register. 

Implementing such a circuit avoids the need for squaring operations, requiring only a single multiplication between \( a_1^{2^j} \) and \( a_2^{2^j} \) for each \( j \), followed by multiplying the result into the computational register. However, since \( a_1^{2^j} \) and \( a_2^{2^j} \) are generally no longer small prime numbers, performing a binary tree-style multiplication between them does not reduce the circuit size, making it unnecessary. The asymptotic complexity of this approach is identical to that of the direct multiplication method shown in Figure~\ref{fig:architecture3}.

\begin{figure}[h!]
\centering
\includegraphics[width=0.65\textwidth]{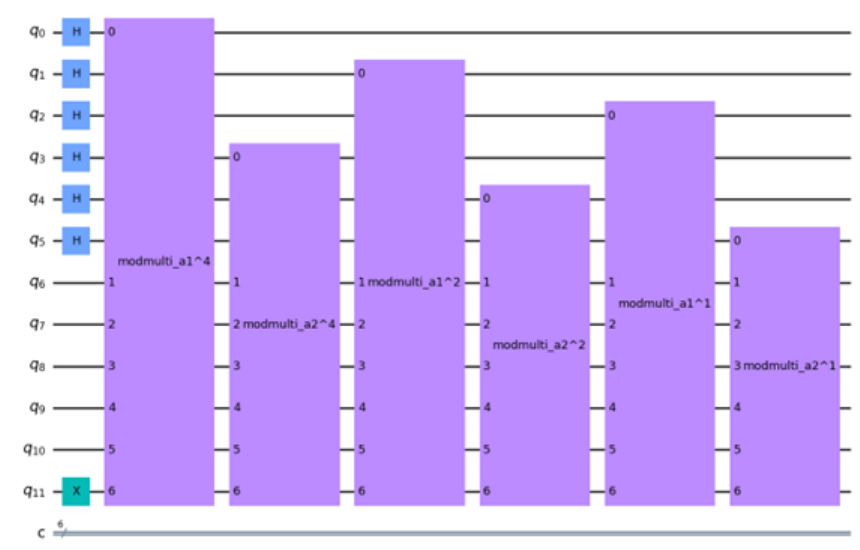}
\caption{Precomputation Method 2: Direct Multiplication Approach}
\label{fig:architecture3}
\end{figure}

The scheme in Figure~\ref{fig:architecture3} finally decomposes the modular exponentiation into six distinct controlled modular multiplications, where \( a_1^{2^j} \) and \( a_2^{2^j} \) can be precomputed and compiled into circuits before the quantum circuit's construction. This approach is similar to that in Shor's algorithm. Adopting this scheme significantly simplifies the experimental implementation of Regev's algorithm but sacrifices the advantage in asymptotic complexity, making it suitable only for proof-of-principle experiments. In practice, this requires constructing a total of \( d \cdot \log D \approx O(n) \) distinct multiplications, which is asymptotically equivalent to the number of multiplications required by Shor's algorithm. Shor's algorithm can be regarded as a special case where \( d = 1 \) and \( \log D \approx O(n) \). (Considering the various variants of Shor's algorithm, its practical implementation may be more efficient.) 

Figures~\ref{fig:architecture1} to~\ref{fig:architecture3} illustrate the modular exponentiation architectures under these two distinct schemes (potential ancilla qubits and the subsequent quantum Fourier transform are ignored). In the Experiment sections, we present the complete experimental circuits for both schemes based on Figures~\ref{fig:architecture1} and~\ref{fig:architecture3}.

\section{Appendix F: Compiled multiplication techniques}\label{F}

As described earlier, both modular exponentiation schemes ultimately reduce the modular exponentiation operation to computing several modular multiplications (and modular squarings).

A modular multiplication based on general-purpose design refers to a multiplication that can be implemented universally for a given modulus \( N \) and bit length \( n \), without requiring additional prior knowledge. It typically consists of basic adders and comparators, requiring a circuit size of \( O(n^3) \) and \( O(n) \) ancilla qubits~\cite{Haner17}.

In many experimental studies, a compiled multiplier is used, which requires fewer gate operations and no ancilla qubits but needs more sophisticated techniques. For smaller examples, we can easily verify the correctness of the multiplier, such as for \( 2^j \pmod{15} \), which is commonly used in Shor's algorithm demonstrations, as shown in Figure \ref{fig:multiplication_15}.\cite{Monz16}

\begin{figure}[ht]
\centering
\includegraphics[width=0.8\textwidth]{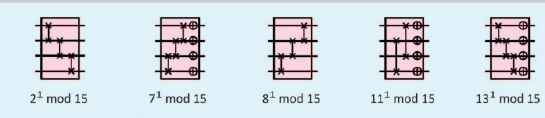}
\caption{Circuit implementation of the compiled multiplier for \( 2^j \pmod{15} \).}
\label{fig:multiplication_15}
\end{figure}

Here, we also apply the compiled multiplier to implement the multiplication for \( 4^j \pmod{35} \) and \( 9^j \pmod{35} \). To simplify the multiplier, we note that there are only 6 states exists in this situation, thus we can only realize the multiplication between this 6 states. A sophisticatedly designed multiplication is shown in the figure ~\ref{fig:multiplier_35}, and we can verify that they are correct multipliers:

\begin{figure}[ht]
\centering
\includegraphics[width=0.55\textwidth]{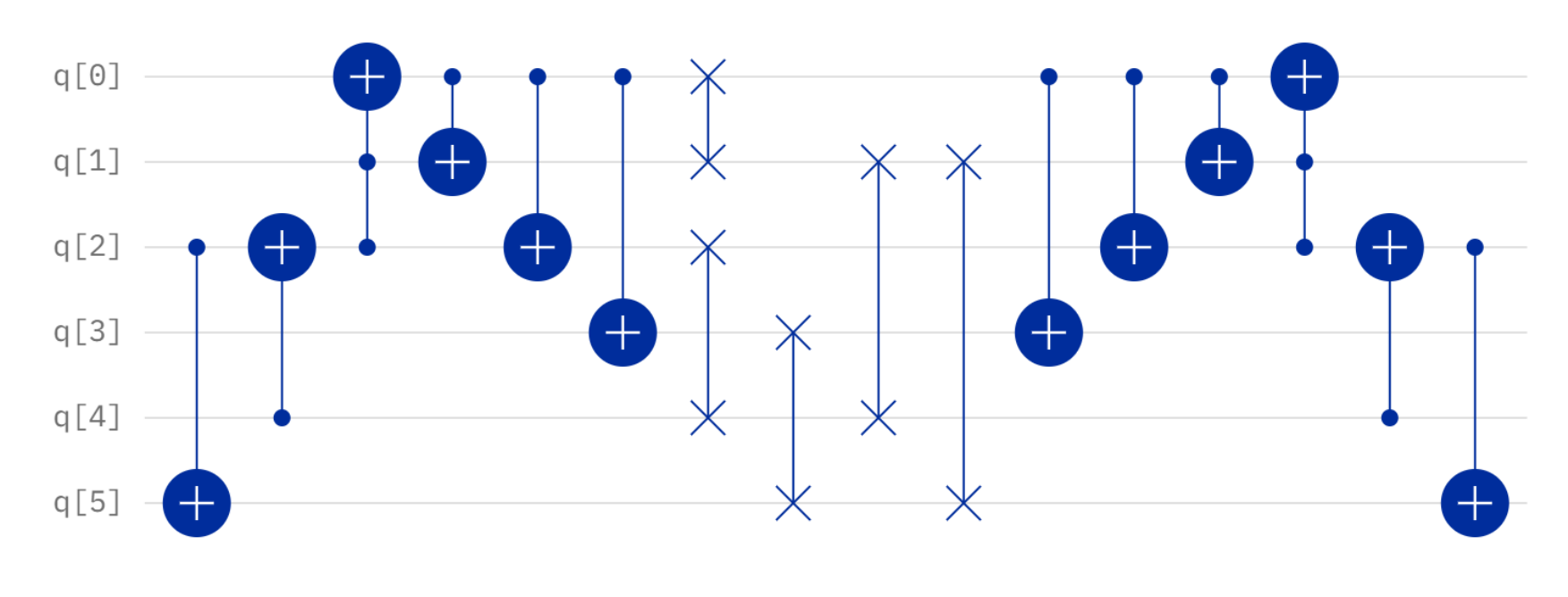}
\caption{Circuit implementation of the compiled multiplier for \( 4 \pmod{35} \). The implementation for \( 9 \pmod{35} \) is its reverse.}
\label{fig:multiplier_35}
\end{figure}

\begin{figure}[ht]
\centering
\includegraphics[width=0.45\textwidth]{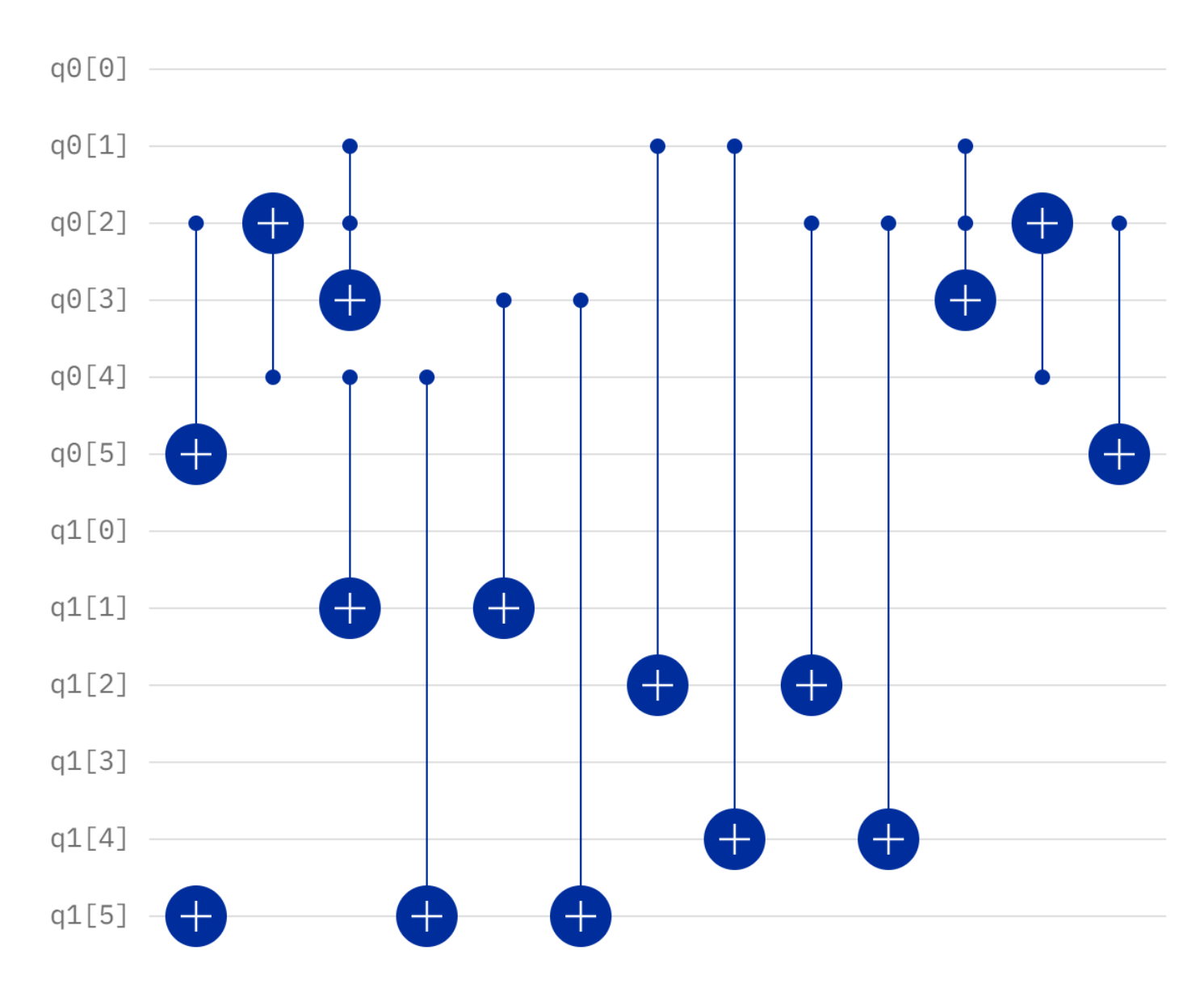}
\caption{Circuit implementation of the compiled out-of-place squaring for \( 35 \), with the \( q_0 \) register as input and the \( q_1 \) register as output.}
\label{fig:square_35}
\end{figure}

The squaring operation becomes complex, because even in the "compiled version", where only six target states are considered, the squaring operation remains irreversible. For example, while \(\ket{4}\xrightarrow{\text{square} \pmod{35}}\ket{16}\), simultaneously \(\ket{11}\xrightarrow{\text{square} \pmod{35}}\ket{16}\). Thus, such a mapping cannot be implemented in-place. It is necessary to introduce ancilla qubits for out-place storage and to erase intermediate states after the final computation is completed. In this example, to implement the squaring operation among the aforementioned six target states, the result is shown in Figure~\ref{fig:square_35}.

In practice, we can further simplify the quantum circuit in two ways:

\begin{enumerate}
\item \textbf{Recompilation of the six involved states}

It is noted that, after the modular exponentiation, only the control register undergoes the subsequent quantum Fourier transform, and the value of the computational register does not need to be measured. Thus, the specific quantum state of the computational register is irrelevant. In this process, the key aspect is to establish the correct quantum entanglement between the control and computational registers. This insight suggests that it may be possible to use fewer qubits to construct the computational register.
Considering that the computational register involves only six possible states, we can reduce the number of qubits in the computational register to three (with a Hilbert space dimension of \( 2^3 = 8 > 6 \)) using the following encoding scheme:

\begin{align*}
\ket{1} &\mapsto \ket{001}, \\
\ket{4} &\mapsto \ket{111}, \\
\ket{16} &\mapsto \ket{010}, \\
\ket{29} &\mapsto \ket{101}, \\
\ket{11} &\mapsto \ket{011}, \\
\ket{9} &\mapsto \ket{110}.
\end{align*}

Under this encoding, it can be readily verified that the transformations between these six states, defined as \( U_4 = \{ \ket{1} \to \ket{4}, \ket{4} \to \ket{16}, \ket{16} \to \ket{29}, \ket{29} \to \ket{11}, \ket{11} \to \ket{9}, \ket{9} \to \ket{1} \} \), and \( U_9 = U_4^{-1} \), can be implemented in a highly simplified form, as shown in Figure~\ref{fig:multi_sim}.

\begin{figure}[ht]
\centering
\includegraphics[width=0.4\textwidth]{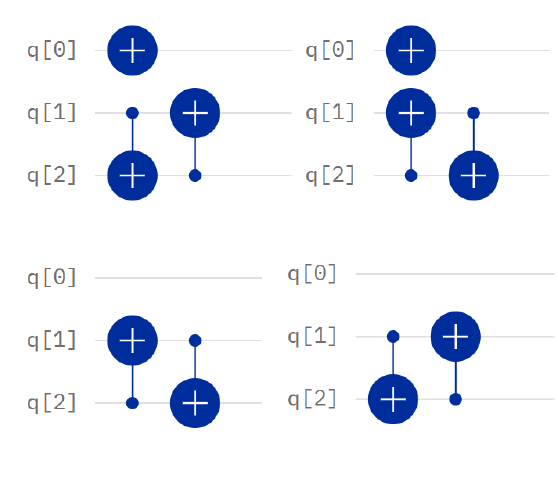}
\caption{Circuit implementation of the simplified multiplications for 4 and 9. The two subfigures in the top row illustrate the multiplications for \( 4 \pmod{35} \) and \( 9 \pmod{35} \), while the two subfigures in the bottom row depict the multiplications for \( 4^2 \pmod{35} \) and \( 4^4 \pmod{35} \) (corresponding to \( 9^4 \pmod{35} \) and \( 9^2 \pmod{35}  \)).}
\label{fig:multi_sim}
\end{figure}

Due to the irreversibility of the squaring operation, it still requires out-place computation. Compared to previous approaches, it has also been significantly simplified, as shown in Figure~\ref{fig:square_sim}, where \( q_0[0], \ldots, q_0[2] \) represents the original register, and \( q_1[0], \ldots, q_1[2] \) represents the new register. After all operations are completed, the computational results of all intermediate registers must be erased.

\begin{figure}[ht]
\centering
\includegraphics[width=0.25\textwidth]{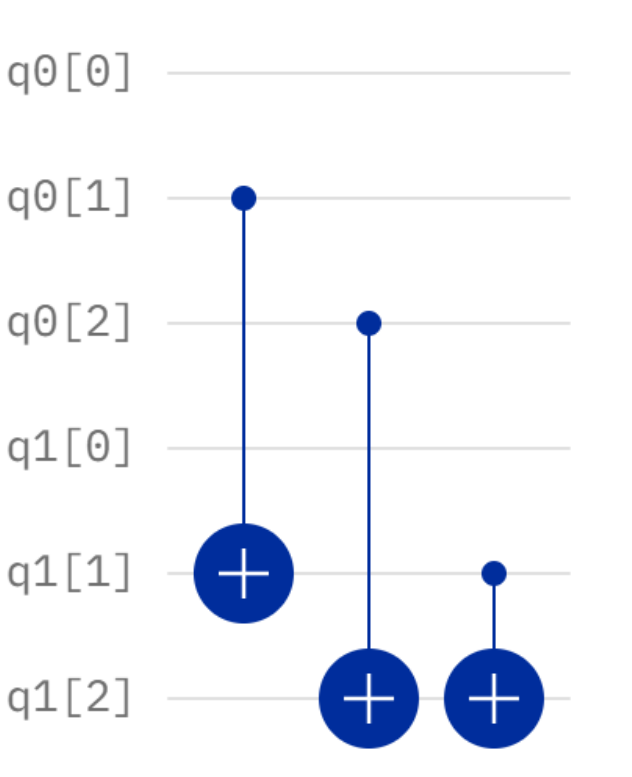}
\caption{Circuit implementation of the simplified square \(\mod{35}\).}
\label{fig:square_sim}
\end{figure}

    \item \textbf{Considering only the computational states actually encountered from the initial state}

    Furthermore, considering that the initial state of the computational register is always set to a definite state \(\ket{1} \mapsto \ket{001}\), the computation initially involves only a subset of states rather than all six states. During the application of the first few multipliers or squaring operations, certain controlled gates are certainly not triggered or are certainly triggered. This leads to further simplification of the circuit.

    For example, for the first multiplier, we need only consider a single state as input (since \(\ket{001}\) is the initial state). For the second multiplier, only two states need to be considered as input, and for the third multiplier or squaring operation, only three states need to be considered as input. The final result of the entire modular exponentiation module is illustrated in Figures~\ref{fig:circuit0} and~\ref{fig:circuit1}.

\end{enumerate}

Using the methods previously described, namely recompilation and simplified multiplier from initial state, we significantly reduce the circuit size.

\section{Appendix G: Circuits used for simulation and experiment}\label{G}

In this appendix, we present the full quantum circuits used in the simulations and experimental demonstrations discussed in the main text.

Figures~\ref{fig:circuit1} and~\ref{fig:circuit2} show the corresponding circuits for the square-and-multiply architecture, with and without our partial-uncomputation method, respectively. These circuits are used to compare the resource requirements and performance of our optimized construction against the original implementation. Figure~\ref{fig:circuit0} shows the complete circuit for the precomputation method, which is used in both classical simulations and experiments on real quantum hardware. 

\begin{figure}[ht]
\centering
\includegraphics[width=0.5\textwidth]{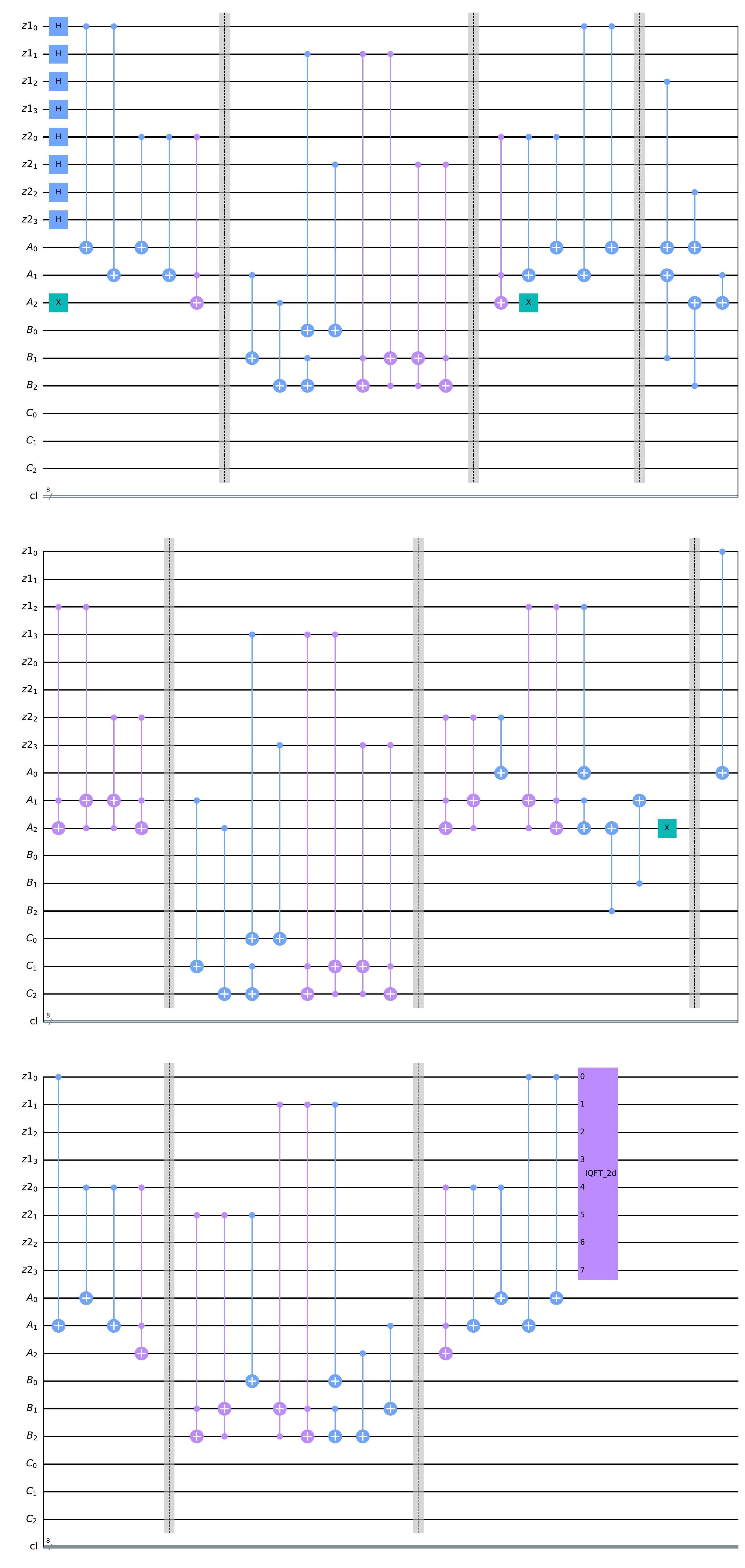}
\caption{Full circuit implementation for factoring \( N = 35 \) using the square-and-multiply architecture with our partial-uncomputation method. Each barrier separates different square-and-multiply blocks. The measurement operations are not shown in the figure.}
\label{fig:circuit1}
\end{figure}

\begin{figure}[ht]
\centering
\includegraphics[width=0.5\textwidth]{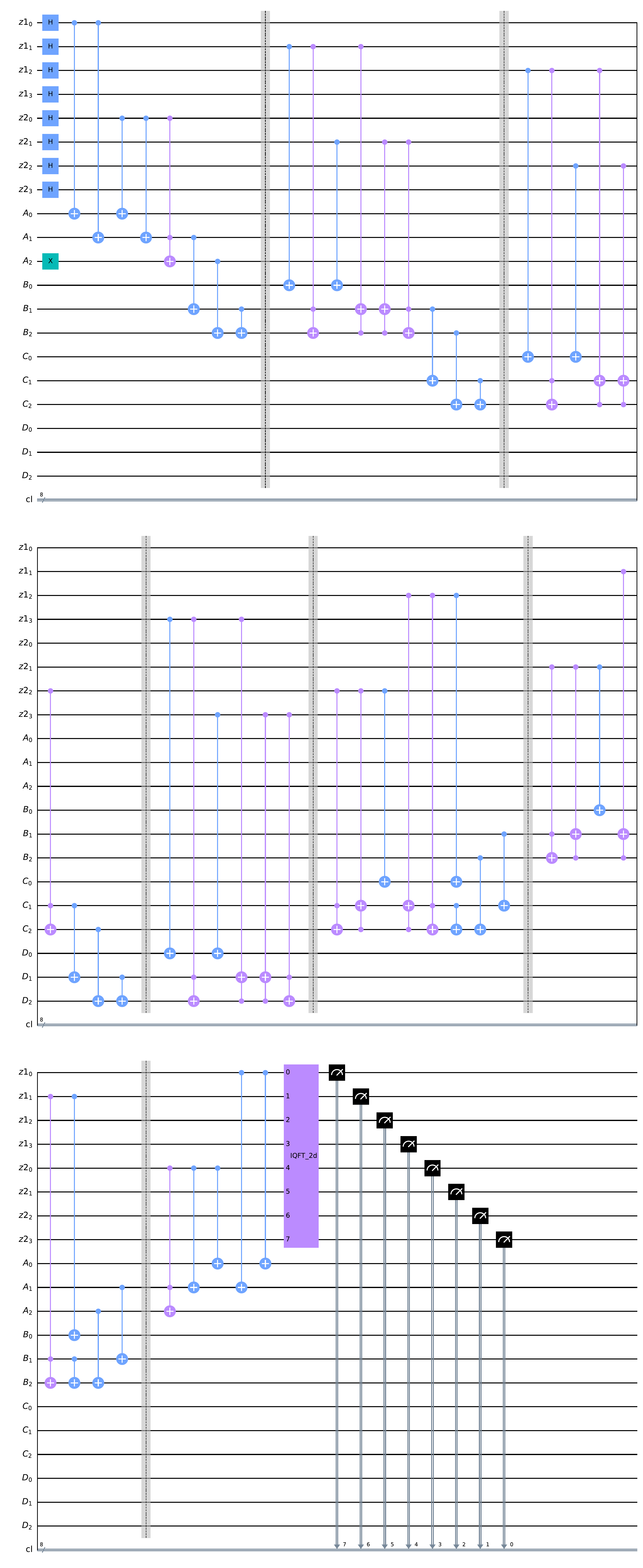}
\caption{Full circuit implementation for factoring \( N = 35 \) using the square-and-multiply architecture without our partial-uncomputation method. Each barrier separates different square-and-multiply blocks.}
\label{fig:circuit2}
\end{figure}

\begin{figure}[ht]
\centering
\includegraphics[width=0.8\textwidth]{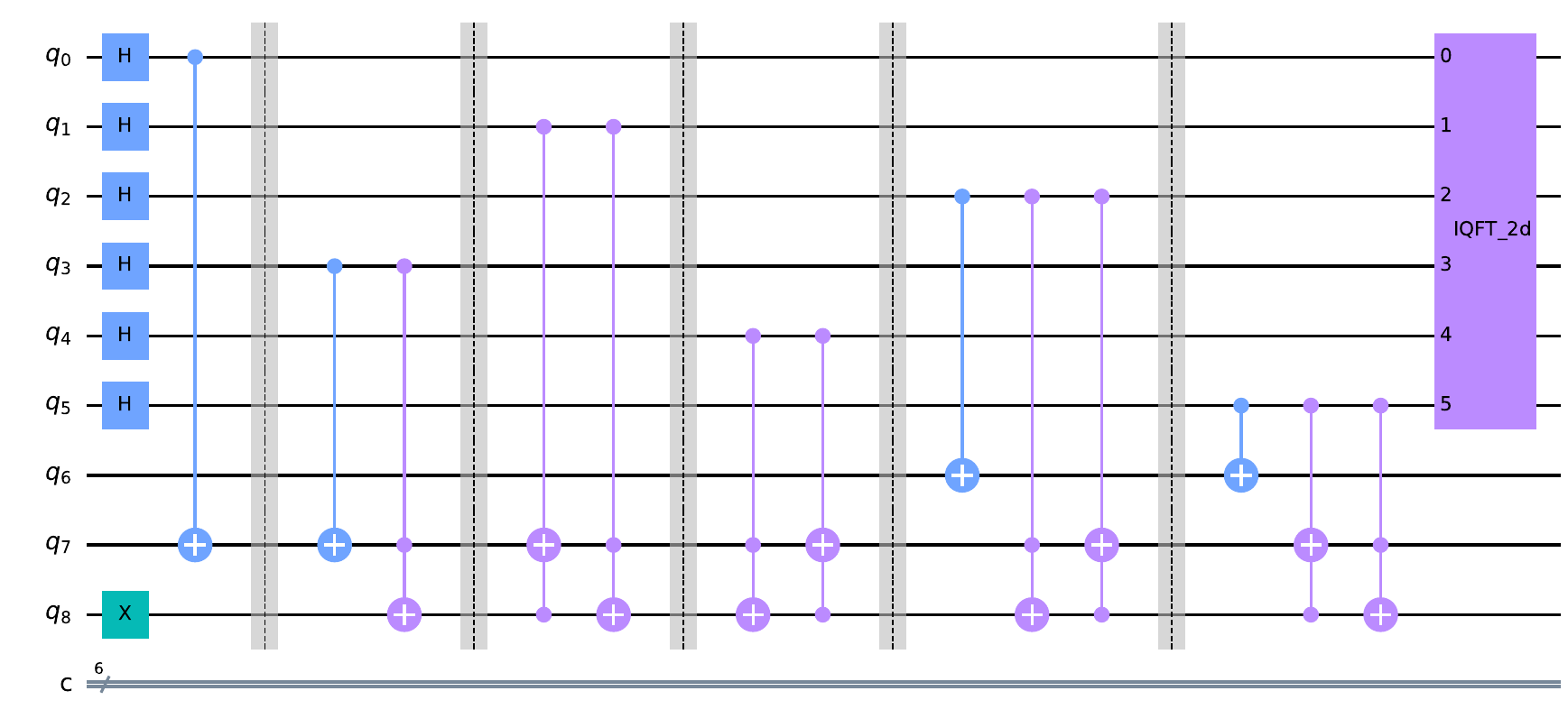}
\caption{Full circuit implementation for factoring \( N = 35 \) using the precomputation method. Each barrier separates different multiplier blocks. The measurement operations on qubits \(q_0\) to \(q_5\) are not shown in the figure.}
\label{fig:circuit0}
\end{figure}

\end{document}